\documentclass[11pt]{article}
\usepackage{fullpage}
\usepackage{amsthm}
\usepackage{amsmath,amsfonts,amssymb}
\usepackage{color}
\usepackage[hidelinks]{hyperref}
\usepackage{smartdiagram}

\usepackage[boxruled,titlenotnumbered]{algorithm2e}
\LinesNumberedHidden

\makeatletter
\newcommand{\RemoveAlgoNumber}{\renewcommand{\fnum@algocf}{\AlCapSty{\AlCapFnt\algorithmcfname}}}
\newcommand{\RevertAlgoNumber}{\algocf@resetfnum}
\makeatother
\RemoveAlgoNumber

\newenvironment{adversary}[1][htb]
{\begin{center}
\begin{minipage}{.6\linewidth}
  \NoCaptionOfAlgo
  \RestyleAlgo{boxed}
  \renewcommand{\algorithmcfname}{}
   \begin{algorithm}[#1]%
     }{\end{algorithm}\end{minipage}\end{center}}

\newenvironment{mechanism}[1][htb]
{\begin{center}
\begin{minipage}{.6\linewidth}
\RestyleAlgo{boxruled}
  \renewcommand{\algorithmcfname}{Mechanism}
   \begin{algorithm}[#1]%
  }{\end{algorithm}\end{minipage}\end{center}}


\newcommand{\supp}{\mathsf{supp}}
\newcommand{\lowset}{\low}
\newcommand{\highset}{\high}
\newcommand{\SD}{\mathsf{SD}}

\newcommand{\bigmid}{\;\big\vert\;}
\newcommand{\Bigmid}{\;\Big\vert\;}

\newcommand{\mech}{\mathsf{M}}

\newcommand{\row}{x}
\newcommand{\univ}{X}

\newcommand{\bits}{\mathsf{bits}}
\newcommand{\weight}{\mathsf{weight}}

\newcommand{\enc}{\mathsf{enc}}

\newcommand{\sk}{\mathsf{s}}
\newcommand{\ct}{\mathsf{c}}
\newcommand{\msg}{\row_n}

\newcommand{\ee}{\mathsf{ExtEnc}}
\newcommand{\ext}{\mathsf{ext}}
\newcommand{\hold}{\mathsf{hold}}

\newcommand{\indic}[1]{\mathbb{I}(#1)}

\newcommand{\adv}{\mathsf{A}}
\newcommand{\A}{\adv}
\newcommand{\T}{\mathsf{T}}
\newcommand{\B}{\mathsf{B}}
\newcommand{\Succ}{\mathsf{Succ}}

\newcommand{\E}{\mathop{\mathbb{E}}}
\renewcommand{\land}{~\wedge~}

\newcommand{\bbN}{\mathbb{N}}
\newcommand{\anon}{\mathsf{Anon}}
\newcommand{\kmax}{k_{\mathsf{max}}}

\newcommand{\iso}[2]{\mathsf{iso}(#1,#2)}


\newcommand{\low}{\mathsf{low}}
\newcommand{\high}{\mathsf{high}}
\newcommand{\w}{w}

\newcommand{\dv}[2]{\frac{d#1}{d#2}}

\newcommand{\base}{\mathsf{base}}

\newcommand{\lsb}{\mathsf{lsb}}


\def\opt{\mathop{\rm{opt}}\nolimits}



\newcommand\N{\mathbb{N}}


\newcommand{\cD}{\mathcal{D}}


\newcommand{\cP}{\mathcal{P}}

\newcommand{\cX}{\mathcal{X}}

\newcommand{\eps}{\varepsilon}

\newcommand{\negl}{\mathrm{negl}}



\newtheorem{theorem}{Theorem}[section]
\newtheorem{lemma}[theorem]{Lemma}
\newtheorem{fact}[theorem]{Fact}
\newtheorem{claim}[theorem]{Claim}
\newtheorem{remark}[theorem]{Remark}
\newtheorem{corollary}[theorem]{Corollary}
\newtheorem{proposition}[theorem]{Proposition}

\theoremstyle{definition}
\newtheorem{definition}{Definition}[section]


\newcommand{\db}{\mathbf{x}}
\newcommand{\dby}{\mathbf{y}}
\newcommand{\cnt}{\mathsf{count}}

\def\E{\operatorname*{\mathbb{E}}}

\def\opt{\mathop{\rm{opt}}\nolimits}

\newcommand{\zo}{\{0,1\}}

\newcounter{notecounter}

\newcommand{\mynote}[1]{\stepcounter{notecounter}\textcolor{red}{\tiny \bf [\arabic{notecounter}]}\marginpar{\tiny \sf \textcolor{red}{\bf [\arabic{notecounter}]}#1}}

\newcommand{\todo}[1]{\textcolor{red}{\sf [TODO: {#1}]}}

\usepackage[normalem]{ulem} 


\iffalse
\else 
\renewcommand{\mynote}[1]{}
\renewcommand{\todo}[1]{}
\fi

\title{Towards Formalizing the GDPR's Notion of Singling Out}

\author{Aloni Cohen\thanks{MIT. {\tt aloni@mit.edu}.}
\and
Kobbi Nissim\thanks{Department of Computer Science, Georgetown University. {\tt kobbi.nissim@georgetown.edu}.}}

\date{\today}

\begin{document}

\begin{titlepage}
\maketitle

\thispagestyle{empty}

\begin{abstract}

There is a significant conceptual gap between legal and mathematical thinking around data privacy.
The effect is uncertainty as to which technical offerings adequately match expectations expressed in legal standards.
The uncertainty is exacerbated by a litany of successful privacy attacks demonstrating that traditional statistical disclosure limitation techniques often fall short of the sort of privacy envisioned by legal standards.

We define {\em predicate singling out}, a new type of privacy attack intended to capture the concept of singling out appearing in the General Data Protection Regulation (GDPR).
Informally, an adversary predicate singles out a dataset $\db$ using the output of a data-release mechanism $M(\db)$ if it manages to find a predicate $p$ matching exactly one row $\row\in\db$ with probability much better than a statistical baseline.
A data-release mechanism that precludes such attacks is {\em secure against predicate singling out} (\emph{PSO secure}).

We argue that PSO security is a mathematical concept with legal consequences. Any data-release mechanism that purports to ``render anonymous'' personal data under the GDPR must be secure against singling out, and, hence, must be PSO secure.
We then analyze PSO security, showing that it fails to self-compose.
Namely, a combination of $\omega(\log n)$ exact counts, each individually PSO secure, enables an attacker to predicate single out. In fact, the composition of just two PSO-secure mechanisms can fail to provide PSO security.

Finally, we ask whether differential privacy and $k$-anonymity are PSO secure.
Leveraging a connection to statistical generalization, we show that differential privacy implies PSO security. However, and in contrast with current legal guidance, $k$-anonymity does not: There exists a simple and general predicate singling out attack under mild assumptions on the $k$-anonymizer and the data distribution.

\paragraph{Note:} This is the preliminary version. The journal version \cite{cohen2020towards} incorporates a few corrections and simplifications, but contains fewer details.\footnote{Available at \url{https://www.pnas.org/content/117/15/8344}.}
\end{abstract}

\bigskip

\noindent{\bf Keywords:} GDPR, singling out, differential privacy, k-anonymity
\end{titlepage}

\tableofcontents
\thispagestyle{empty}

\newpage

\setcounter{page}{1}


\section{Introduction}

Data privacy laws---like HIPAA, FERPA, and Title~13 in the US, and the GDPR in the EU---govern the use of sensitive personal information.\footnote{HIPAA is the Health Insurance Portability and Accountability Act. FERPA is the Family Educational Rights and Privacy Act. Title~13 of the US Code mandates the role of the US Census. GDPR is the EU General Data Protection Regulation.}
These laws delineate the boundaries of appropriate use of personal information and impose steep penalties upon rule breakers.
To adhere to these laws, practitioners need to apply suitable controls and statistical disclosure limitation techniques.
Many commonly used techniques including $k$-anonymity, bucketing, rounding, pseudonymization, and swapping offer privacy protections that are seemingly intuitive but only poorly understood.
And while there is a vast literature of best practices, a litany of successful privacy attacks demonstrates that these techniques often fall short of the sort of privacy envisioned by legal standards.\footnote{See, e.g.,~\cite{BrokenPromises}.}

A more disciplined approach is needed.
However, there is a significant conceptual gap between legal and mathematical thinking around data privacy.
Privacy regulations are grounded in legal concepts such as personally-identifiable information (PII), linkage, distinguishability, anonymization, risk, and inference.
In contrast, much of the recent progress in data privacy technology is rooted in mathematical privacy models such as differential privacy~\cite{DMNS06} that offer a foundational treatment of privacy, with formal privacy guarantees.
And while such techniques are being actively developed in the academy, industry, and government, there is a basic disconnect between the legal and mathematical conceptions.
The effect is uncertainty as to which technical offerings adequately match expectations expressed in legal standards~\cite{NW18}.

\paragraph{Bridging between legal and technical concepts of privacy.} We aim to address this uncertainty by translating between the legal and the technical.
To do so, we begin with a concept appearing in the law, then model some aspect of it mathematically.
With the mathematical formalism in hand, we can better understand the \emph{requirements} of the law, their \emph{implications}, and the \emph{techniques} that might satisfy them.

This is part of a larger effort to bridge between legal and technical conceptions of privacy.
An earlier work analyzed the privacy requirements of FERPA and modeled them in a game-based definition, as is common in cryptography.
The definition was used to argue that the use of differentially private analyses suffices for satisfying a wide range of interpretation of FERPA~\cite{Bridging}.
An important feature of FERPA that enabled this analysis is that FERPA and its accompanying documents contain a rather detailed description of a privacy attacker and the attacker's goals.

In this work we focus on the concept of {\em singling out} from the GDPR.
More specifically, we examine what it means for a data anonymization mechanism to ensure \emph{security against singling out} in a data release.
Preventing singling out attacks in a dataset is a necessary (but maybe not sufficient) precondition for a dataset to be considered effectively anonymized and thereby free from regulatory restrictions under the GDPR.
Ultimately, our goal is to better understand a concept foundational to the GDPR, enabling a rigorous mathematical examination of whether certain classes of techniques (e.g., $k$-anonymity, differential privacy, pseudonymization) provide an important legal protection.

We are not the first to study this issue.
``Opinion on Anonymisation Techniques''~\cite{wp-anonymisation} provides guidance about the use of various privacy technologies---including $k$-anonymity and differential privacy---as anonymization techniques. It's analysis is centered on asking whether each technology effectively mitigates three risks: ``singling out, linkability, and inference.''
For instance, \cite{wp-anonymisation} concludes that with $k$-anonymity singling is no longer a risk whereas with differential privacy it ``may not'' be a risk.
Though similar in purpose to our work, its technical analyses are informal and coarse.
Reconsidering these questions with mathematical rigor, we encourage revisiting the conclusions in~\cite{wp-anonymisation}.

\subsection{Singling out in the GDPR}
\label{sec:soGDPR}
We begin with the text of the GDPR. It consists of {\em articles} detailing the obligations placed on processors of personal data as well as  {\em recitals} containing explanatory remarks. Article~1 of the regulation delineates its scope as ``lay[ing] down rules relating to the protection of natural persons with regard to the processing of personal data and rules relating to the free movement of personal data.'' The GDPR places no restrictions on the processing of non-personal data, even if this data is the result of \emph{anonymizing} personal data.\footnote{Recital 26 emphasizes this point: ``The principles of data protection should therefore not apply to anonymous information, namely information which does not relate to an identified or identifiable natural person or to personal data rendered anonymous in such a manner that the data subject is not or no longer identifiable.''}
Personal data is defined in Article~4 to mean ``any information relating to an identified or identifiable natural person; an identifiable natural person is one who can be identified, directly or indirectly.''
What it means for a person to be  ``identified, directly or indirectly'' is not elaborated in the articles of the GDPR.
Recital~26 sheds a little more light:
``To determine whether a natural person is identifiable account should be taken of all the means reasonably likely to be used, such as singling out, either by the controller or by another person to identify the natural person directly or indirectly.''
Singling out is one way to identify a person in data, and only data that does not allow singling out may be excepted from the regulation.\footnote{Interestingly, singling out is the only criterion for identifiability explicitly mentioned in the GDPR, the only occurrence the term being the quoted passage from Recital~26.}

For insight as to the regulation's meaning, we refer to two documents prepared by the Article~29 Data Protection Working Party, an advisory body set out by the EU Data Protection Directive.\footnote{Formally, Directive on the protection of individuals with regard to the processing of personal data and on the free movement of such data. 95/46/EC.}
``Opinion on the Concept of Personal Data''~\cite{wp-personal-data} elaborates on the meaning of ``identifiable, directly or indirectly.''
A person is identified ``within a group of persons [when] he or she is distinguished from all other members of the group.''
One way of distinguishing a person from a group is by specifying ``criteria which allows him to be recognized by narrowing down the group to which he belongs.''
If the group is narrowed down to an individual, that individual has been singled out.\footnotemark\
Looking ahead, we will call this \emph{isolating} an individual in the dataset and argue that not every instance of isolation should be considered a singling out attack.

\footnotetext{The notion of ``singling out'' is not defined in the Opinion on the Concept of Personal Data~\cite{wp-personal-data}. It is used in~\cite{wp-personal-data} four times, each consistent with the above interpretation. Our interpretation coincides with and was initially inspired by that of \cite{Diffix}, defining ``singling out as occurring when an analyst correctly makes a statement of the form `There is exactly one user that has these attributes.' "}

We highlight three additional insights that inform our work.
First, identification does not require a name or any other traditional identifier. For instance, singling out can be done with a ``small or large'' collection of seemingly innocuous traits (e.g., ``the man wearing a black suit'').
Indeed, this is what is meant by ``indirectly identifiable.''
An example of singling out in practice cited by \cite{wp-anonymisation} showed that four locations sufficed to uniquely identify 95\% of people in a pseudonymized dataset of time-stamped locations.
This is considered singling out even though no method of linking such location traces to individuals' names was identified.

Second, identifiable data may come in many forms, including microdata, aggregate statistics, news articles, encrypted data, video footage, and server logs.
What's important is not the form of the data, its whether the data permits an individual to be singled out.
We apply this same principle to the manner in which an individual is singled out within a dataset.
Most examples focus on specifying a collection of attributes (e.g., four time-stamped locations) that match a single person in the data. The collection of attributes corresponds to a \emph{predicate}: a function that assigns to each person in the dataset a value $0$ or $1$ (interpreted as $\mathsf{false}$ or $\mathsf{true}$ respectively).
We interpret the regulation as considering data to be personal if an individual can be distinguished within a dataset using any predicate, not only those that correspond to specifying collections of attributes.
Just as ``small or large'' collections of attributes may be used to single out, we allow these predicates to be simple or complex.

Third, whether or not a collection of attributes identifies a person is context-dependent.
``A very common family name will not be sufficient to identify someone - i.e. to single someone out - from the whole of a country's population, while it is likely to achieve identification of a pupil in a classroom.''
Both the \emph{prevalence} of the name and the \emph{size} of the group are important in the example, and will be important in our formalization.

\subsection{Our contributions}

\subsubsection{Defining security against predicate singling out}

In this work, we formalize and analyze \emph{predicate singling out}, a notion which is intended to partially model the GDPR's notion of singling out.
Following the discussion above, we begin with the idea that singling out an individual from a group involves specifying a predicate that uniquely distinguishes the individual, which we call \emph{isolation}.
Using this terminology, an intuitive interpretation of the GDPR's requirement is that to be considered secure against singling out, a function of the data must prevent isolation.
Trying to make this idea formal, we will see that it requires some refinement.

We restrict our attention to datasets $\db = (\row_1,\dots,\row_n)$ of size $n$, where each row $\row_i$ is sampled according to some underlying {probability distribution} $D$ over a universe $\univ$.
The dataset $\db$ is assumed to contain personal data corresponding to individuals, with at most one row per individual.
For example, $\db$ might consist of home listings, hospital records, internet browsing history, or any other personal information.
A mechanism $M$ takes $\db$ as input and outputs some data release $M(\db)$, be it a map of approximate addresses, aggregate statistics about disease, or pseudonymized internet histories.
We call $M$ an \emph{anonymization mechanism} because it purportedly anonymizes the personal data $\db$.

\newcommand{\y}{\mathbf{y}}
An {adversary} $\A$ attempts to output a {predicate} $p:\univ\rightarrow\zo$ that \emph{isolates} a row in $\db$, i.e., there exists $i$ such that $p(\row_i) = 1$ and $p(\row_j) = 0$ for all $j\neq i$.
We emphasize that it is rows in the original dataset $\db$ on which the predicate acts, not the output $\y$.
In part, this is a byproduct of our desire to make no assumptions on the form of $M$'s output. While it might make sense to apply a predicate to pseudonymized microdata, it is far from clear what it would mean for a synthetic dataset or for aggregate statistics.
Observe that this choice also rules out predicates $p$ that ``isolate'' rows by referring to their position in $\db$ (i.e., ``the seventh row'').

$M$ \emph{prevents isolation} if there doesn't exist an adversary $\A$ that isolates a row in $\db$ except with very small probability over the randomness of sampling $\db \gets D^n$, the mechanism $\y \gets M(\db)$, and the adversary $\A(\y)$.
Unfortunately, this is impossible to achieve by any mechanism $M$. To wit, there is a \emph{trivial adversary}---one that that doesn't look at $\y$ and denoted by $\T(\bot)$---that isolates a row with probability approximately $0.37$. The adversary simply outputs $p$ that matches a $1/n$ fraction of the distribution $D$.
For example, for a dataset of size $n=365$ random people selected at random from the United States population, $\T(\bot)$ may simply output $p= \mbox{(born on March 15th)}$. This predicate will isolate a row  with probability $${365 \choose 1}\cdot \frac{1}{365}\cdot\left(1-\frac{1}{365}\right)^{364} \approx 37\%.$$
Isolation is hence not necessarily indicative of a failure to protect against singling out, as $\T(\bot)$ would succeed with $\approx37\%$ probability (for any $n$) even if $M$ does not output anything at all. Furthermore, a trivial adversary need not knw the distribution $D$ to isolate with probabiilty $\approx 37\%$, as long as $D$ has sufficient min-entropy (Section~\ref{sec:bounding-base}).

A trivial adversary can give us a \emph{baseline} against which to measure isolation success.
But the baseline should not simply be 37\% chance of success.
Consider the earlier example of a dataset of 365 random Americans. What if an adversary output predicates like $p=(\mbox{born on March 15th} \wedge \mbox{vegan} \wedge \mbox{speaks Dutch} \wedge \mbox{concert pianist})$, and managed to isolate 10\% of the time?
Though 10\% is much less than 37\%, the predicate is extremely specific and unlikely to isolate a person by chance.
We formalize this intuition by considering the baseline risk of isolation as a function of the \emph{weight} of $p$, i.e., the chance that $p$ matches a random row sampled from the distribution $D$. The baseline for predicates of weight $1/n$ is 37\%, but the baseline for an extremely specific predicate may be much lower. The more specific the predicate, the closer the baseline gets to zero.
Our primary focus in this paper is on the regime of predicate weights where the baseline is negligible, corresponding to predicates with negligible weight.\footnote{For completeness, we also consider in Section~\ref{sec:definitions} predicates of weight $\omega(\log n/n)$, where the baseline is also negligible.} We get:

\smallskip \noindent {\bf Definition~\ref{def:security-against-singling-out}} (informal) An adversary \emph{predicate singles out} a row in $\db$ if it outputs a predicate that isolates a row with probability significantly higher than the baseline risk.
A mechanism $M$ is \emph{secure against predicate singling out} (\emph{PSO secure}) if no adversary can use its output to predicate single out.

\subsubsection{Analyzing security against predicate singling out}

Having formulated security against singling out, our next goal is to understand the guarantee it offers,  what mechanisms satisfy it, and how this concept relates to existing privacy concepts, including differential privacy and $k$-anonymity.

Two desirable properties of a privacy concept are robustness to \emph{post-processing} and to \emph{composition}.
The former requires that if a mechanism $M$ is deemed secure, then anything that can be computed using the outcome of $M$ should also be deemed secure. Hence, the outcome may be reused without creating additional privacy risk.
For instance, if a PSO-secure mechanism $M$ outputs microdata, then any statistics that can be computed from that microdata should also be PSO-secure.
It follows directly from the definition of PSO security that it is robust to post-processing.

We would like that the privacy risk of multiple data releases is not significantly greater than the accumulated risks of the individual releases. In this case, we say that the privacy concept composes.
We prove that PSO security does not compose, and give two examples of this failure. First, we show that releasing aggregate statistics is PSO-secure but fails to compose super-logarithmically many times.
A collection of $\omega(\log(n))$ counts may allow an adversary to isolate a row with probability arbitrarily close to one using a predicate with negligible weight (and negligible baseline).
Second, we construct less natural pair of mechanisms that individually are PSO-secure but together allow the recovery of a row in the dataset. The first mechanism extracts and outputs a secret encryption key from one part of $\db$. The second extracts the same key and uses it to encrypt the last row $\row_n\in\db$, outputting the corresponding ciphertext.
The mechanisms individually prevent predicate singling out, but together completely fail.

Next, we ask whether existing privacy concepts guarantee PSO security.
We already know that differential privacy is not necessary for PSO security as exact counts are PSO-secure but not differentially private.
However, differential privacy does provide PSO security.
The proof relies on the connection between differential privacy and statistical generalization guarantees~\cite{DFHPRR15, BNSSSU16}.
We show that predicate singling out implies a form of overfitting to the underlying dataset. If $M$ is differentially private it prevents this form of overfitting, and hence protects against predicate singling out.

Finally, we examine $k$-anonymity~\cite{SamaratiS98} and show that it does not prevent predicate singling out attacks.
Instead, it may enable an adversary to predicate single out with probability approximately 37\% using extremely low-weight predicates for which the baseline risk is negligible.
Briefly, the attack begins by observing that typical $k$-anonymous algorithms  ``almost'' predicate single out. They reveal predicates---usually, collections of attributes---that are satisfied by only $k$ rows in the dataset. In an effort to make the $k$-anonymized data as useful as possible, these predicates are as descriptive and specific as possible.
To predicate single out a row from the dataset of size $n$ using the $k$-anonymous output, it roughly suffices to predicate single out a row from any grouping of $k$ rows in the output.

\subsection{Implication for the GDPR}

Precisely formalizing predicate singling out attacks allows us to examine with mathematical rigor the extent to which specific algorithms and paradigms protect against them.
In particular, we show that $k$-anonymity fails to prevent predicate singling out, but that differential privacy prevents predicate singling out.
Our conclusions contrast with those of the Article~29 Working Party: they conclude that $k$-anonymity eliminates the risk of singling out while differential privacy ``may not''~\cite{wp-anonymisation}.
These disagreements may raise a doubt about whether our modeling indeed matches the regulators' intent.

Our goal in interpreting the text of the GDPR and related documents, and in defining predicate singling out, is to provide a precise mathematical formalism to capture some aspect of the concept of personal data (as elucidated in the regulation and in \cite{wp-personal-data}) and the associated concept of anonymization.
We want to render mathematically \emph{falsifiable} a legal claim that a given algorithmic technique anonymizes personal data by providing a \emph{necessary} condition for such anonymizers.

We argue that predicate singling out succeeds.
A number of modeling choices limit the scope of our definition,
but limiting the scope poses no issue. Specifically, (i) we only consider randomly sampled datasets; (ii) we only consider an attacker who has no additional knowledge of the dataset besides the output of a mechanism; (iii) we do not require that isolation be impossible, instead comparing to a baseline risk of isolation. A technique that purports to anonymize all personal data against all attackers must at least do so against randomly sampled data and against limited attackers.
And unless the idea of anonymization mechanisms is completely vacuous, one must compare against a baseline risk.

We must be careful not when narrowing our definition's scope, but when expanding it.
The most significant expansion\footnote{We discuss additional subtleties in Section~\ref{sec:knowledge-discussion}.} is our choice to parameterize the baseline risk by the weight of a predicate.
But this is a minimal expansion and only done to prevent a severe weakness.
Not doing so would mean that a mechanism that published the first row of the dataset 20\% of the time could be said to ``prevent singling out.''
Any meaningful instantiation of ``preventing singling out'' should rule out such mechanisms.
Ours is a natural way of doing so.

This does not mean that our modeling is the only one possible. As the starting point for the analysis is a description which does not use mathematical formalism, but is rather a (somewhat incomplete) description using natural language.
It is certainly plausible that alternative mathematical formalizations of singling out could be extracted from the very same text. We are looking forward to seeing such formalizations emerge.

Finally, one may still claim that the assessments made in~\cite{wp-anonymisation} should be taken as ground truth and that the Article~29 WP meant for any interpretation of singling out to be consistent with these assessments. That is, the protection provided by $k$-anonymity implicitly defines the meaning of singling out (partially or in full). We believe, however, that such a position would be hard to justify. To the best of our knowledge, the assessments made by the Article~29 WP were not substantiated by a mathematical analysis.
Furthermore, we caution against defining privacy implicitly as the guarantee provided by particular techniques; this approach is doomed to fail.
In particular, the choice of defining privacy as the result of applying practices such as suppression of directly identifying information has proved a problematic choice that unfortunately pervades current legal privacy standards.

\paragraph {Is predicate singling out a good privacy concept?}

A predicate singling out attack can be a stepping stone towards a greater harm, even in settings where isolation alone may not.
It may enable linking a person's record in the dataset to some external source of information~\cite{narayanan2008robust}, or targeting of individuals for differential treatment.
As such, it is meaningful as a mode of privacy failure, both in the GDPR context and otherwise.

And, while we believe that PSO security is relevant for the GDPR as a necessary property of techniques that anonymize personal data, we do not consider it a sufficiently protective privacy concept by itself. First, singling out is a specific mode of privacy failure.
It is not clear that ruling out this failure mode is sufficient for privacy (in particular, two other failure modes are mentioned in~\cite{wp-anonymisation}: linkage and inference).
Second, our definition considers a setting where the underlying data is chosen i.i.d.\ from some (unknown) underlying distribution, an assumption that is not true in many real-life contexts. PSO security may not prevent singling out in such contexts. Lastly, we believe that self-composition is an essential property of any reasonable privacy definition. However, as we show in Section~\ref{sec:failureToCompose}, security against singling out does not self compose.


\section{Preliminaries}
\label{sec:preliminaries}

\paragraph{Notation.}
A dataset $\db=(\row_1,\ldots,\row_n)$ consists of $n$ elements taken from the data $\univ = \zo^d$.
We consider datasets where each entry $\row_i$ is independently sampled from a fixed probability distribution $D\in\Delta(\univ)$ over $\univ$. We denote by $U_d$ a uniform random variable over $\zo^d$.

For the purposes of asymptotic analyses, we will use the number of rows $n\in \N$ in a dataset as the complexity parameter.
Furthermore, the parameter $d = d(n)$ is a function of $n$, but we typically omit the dependence. \footnotemark~
A function $f(n)$ is negligible, denoted $f(n) = \negl(n)$ if $f(n) = n^{-\omega(1)}$.

	\footnotetext{More formally, we can consider an ensemble of data domains $\cX= \{\univ_n = \zo^{d(n)}\}_{n\in \N}$ and an ensemble of distributions $\cD = \{D_n\}_{n\in\N}$, where $D_n\in\Delta(\univ_n)$.}

A mechanism $M$ is a Turing Machine that takes as input a dataset $\db \in \univ^n$. A mechanism $M$ may be randomized and/or interactive.
A predicate is a binary-valued function $p:\univ\rightarrow\zo$. We define $\weight_D(p) \triangleq \E_{\row \sim D}[p(\row)]$. For a dataset $\db\in \univ^n$ we define  $p(\db) \triangleq \frac{1}{n}\sum_{i=1}^n p(\row_i)$.
We occasionally use indicator notation $\indic{}()$ to define a predicate: for example, $p(x) = \indic{x\in A}$ equals 1 if $x\in A$ and 0 otherwise.

\subsection{Preliminaries from randomness extraction}

\begin{definition}[Min-entropy, average min entropy \cite{fuzzy}]Let $Y_1,Y_2$ be two random variables.
The \emph{min-entropy} of a $Y_1$ is  $${H}_\infty(Y_1) = -\log\left(\max_{y} \Pr[Y_1 = y] \right).$$
The \emph{average min-entropy}\footnotemark of $Y_1$ given $Y_2$ is $$\widetilde{H}_\infty(Y_1 \mid Y_2) = -\log\left(\E_{Y_2} \biggl[\max_{y} \Pr[Y_1 = y \mid Y_2]\biggr]\right).$$
\end{definition}
\footnotetext{In \cite{smith2009foundations} this same quantity is called conditional min-entropy and denoted $H_\infty$.}

\begin{fact}
		For all $Y_1$ and $Y_2$: ${H}_\infty(Y_1) \ge \widetilde{H}_\infty(Y_1 \mid Y_2) \ge {H}_\infty(Y_1) - \log (|\supp(Y_2)|)$, where $\supp(Y_2)$ is the support of $Y_2$.
\end{fact}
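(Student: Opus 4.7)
Both inequalities reduce to elementary manipulations after applying the definitions. The plan is to first rewrite each inequality in terms of the underlying probabilities (removing the $-\log$), and then prove each direction separately.

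For the left inequality $H_\infty(Y_1) \ge \widetilde{H}_\infty(Y_1 \mid Y_2)$, after applying $-\log$ and flipping the direction, it suffices to show
\[
\max_y \Pr[Y_1 = y] \;\le\; \E_{Y_2}\!\left[\max_y \Pr[Y_1 = y \mid Y_2]\right].
\]
The key observation is that for any fixed $y$, the law of total probability gives $\Pr[Y_1 = y] = \E_{Y_2}[\Pr[Y_1 = y \mid Y_2]]$. Then I would invoke the standard ``max of expectation $\le$ expectation of max'' inequality: $\max_y \E_{Y_2}[\Pr[Y_1 = y \mid Y_2]] \le \E_{Y_2}[\max_y \Pr[Y_1 = y \mid Y_2]]$. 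This yields the desired bound.

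For the right inequality $\widetilde{H}_\infty(Y_1 \mid Y_2) \ge H_\infty(Y_1) - \log|\supp(Y_2)|$, again applying $-\log$, it suffices to prove
\[
\E_{Y_2}\!\left[\max_y \Pr[Y_1 = y \mid Y_2]\right] \;\le\; |\supp(Y_2)| \cdot \max_y \Pr[Y_1 = y].
\]
For each $y_2 \in \supp(Y_2)$, let $y^*(y_2) = \argmax_y \Pr[Y_1 = y \mid Y_2 = y_2]$. Expanding the expectation as a sum over $\supp(Y_2)$ and using $\Pr[Y_2 = y_2] \cdot \Pr[Y_1 = y^*(y_2) \mid Y_2 = y_2] = \Pr[Y_1 = y^*(y_2), Y_2 = y_2] \le \Pr[Y_1 = y^*(y_2)] \le \max_y \Pr[Y_1 = y]$, the sum is bounded by $|\supp(Y_2)| \cdot \max_y \Pr[Y_1 = y]$, as needed.

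Neither step presents a real obstacle. The only mild subtlety is remembering that the $-\log$ reverses the direction of the inequalities, and being careful to use $\max_y E[\cdot] \le E[\max_y \cdot]$ (rather than the reverse) in the first part. Both parts are short computations once the definitions are unfolded.
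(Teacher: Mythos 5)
Your proof is correct. The paper states this as a Fact without proof (citing the fuzzy-extractors literature), and your argument is exactly the standard one: the left inequality follows from the law of total probability plus $\max_y \E_{Y_2}[\cdot] \le \E_{Y_2}[\max_y \cdot]$, and the right inequality from bounding each term $\Pr[Y_2=y_2]\max_y\Pr[Y_1=y\mid Y_2=y_2]=\Pr[Y_1=y^*(y_2),Y_2=y_2]\le\max_y\Pr[Y_1=y]$ and summing over the at most $|\supp(Y_2)|$ values of $y_2$. No gaps.
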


\begin{definition}[2-universal hash functions]
$H = \{h:\zo^d \to \zo^m\}$ is a 2-universal family of hash functions if $\Pr_{h \sim H} [h(\row) = h(\row')] = 2^{-m}$ for all $\row,\row'\in \zo^d$ where the probability is over the selection of $h$ uniformly at random from $H$.
\end{definition}

As an example, for $a,b\in\zo^d$ let $h_{a,b}(x)$ be the function that returns the first $m$ bits of $ax + b$ where the arithmetic is in the field $GF(2^d)$. Then $H=\{h_{a,b}:a,b\in\zo^d\}$ is 2-universal.

\begin{definition}[Statistical distance] The statistical distance of random variables  $Y_1,Y_2$ with support $\zo^d$ is $\SD(Y_1,Y_2) = \frac{1}{2} \sum_{y\in \zo^d} \bigl|\Pr[Y_1=y] - \Pr[Y_2=y]\bigr|$. If $\SD(Y_1,Y_2) < \alpha$ we say that $Y_1$ and $Y_2$ are $\alpha$-close.
\end{definition}

\begin{lemma}[Generalized Leftover Hash Lemma~\cite{fuzzy}]
	\label{LHL}
Let $\lambda\in \N$, $\alpha > 0$, $Y_1$ a random variable over $\zo^d$, and $Y_2$ a random variable. Let $H = \{h:\zo^d \to \zo^m\}$ be a 2-universal family of hash functions where $m \le \lambda-2\log(1/\alpha^2)+2$. For every $Y_1$ and $Y_2$ with $\widetilde{H}_\infty(Y_1 \mid Y_2)\ge\lambda$,
$(h,h(Y_1),Y_2)$ is $\alpha^2$-close to  $(h, U_m, Y_2)$ in total variation distance, where $h\in_R H$ is a uniformly random function from the hash family and $U_m$ is uniform over $\zo^m$.
\end{lemma}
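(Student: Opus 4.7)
The plan is to reduce to the non-average Leftover Hash Lemma by conditioning on $Y_2$ pointwise and then averaging back using Jensen's inequality. The convenient form of the definition of average min-entropy---namely, $\Ex{y_2 \sim Y_2}{\max_{y_1}\Pr[Y_1 = y_1 \mid Y_2 = y_2]} \le 2^{-\lambda}$---is tailor-made for this kind of square-root averaging, and this is essentially the only place the ``average'' in $\widetilde H_\infty$ really shows up.

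First, I observe that because the $Y_2$-marginal is identical on both sides of the comparison and $h$, $U_m$ are independent of $Y_2$, the statistical distance decomposes as
\[
\SD\bigl((h, h(Y_1), Y_2),\, (h, U_m, Y_2)\bigr) = \Ex{y_2 \sim Y_2}{\SD\bigl((h, h(Y_1))\mid Y_2 = y_2,\; (h, U_m)\bigr)}.
\]
Writing $f(y_2) := \max_{y_1} \Pr[Y_1 = y_1 \mid Y_2 = y_2]$, the hypothesis reads $\Ex{}{f(Y_2)} \le 2^{-\lambda}$, and for each fixed $y_2$ the conditional distribution $Y_1\mid Y_2=y_2$ has min-entropy exactly $-\log f(y_2)$.

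Next, I apply the ordinary LHL to each inner term. A direct collision-probability computation using $2$-universality of $H$ yields $\mathrm{Col}(h, h(X)) \le \frac{1}{|H|\cdot 2^m}\bigl(1 + 2^m\max_x \Pr[X=x]\bigr)$, and the Cauchy--Schwarz bound $\SD(P, U_N)^2 \le (N\cdot\mathrm{Col}(P) - 1)/4$ with $N=|H|\cdot 2^m$ then gives $\SD\bigl((h, h(X)), (h, U_m)\bigr) \le \tfrac{1}{2}\sqrt{2^m \max_x \Pr[X=x]}$. Substituting $X = Y_1\mid Y_2 = y_2$ bounds each inner term by $\tfrac{1}{2}\sqrt{2^m f(y_2)}$.

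Combining the previous two steps and invoking Jensen's inequality for the concave function $\sqrt{\cdot}$,
\[
\SD\bigl((h, h(Y_1), Y_2),\, (h, U_m, Y_2)\bigr) \le \tfrac{1}{2}\Ex{y_2}{\sqrt{2^m f(y_2)}} \le \tfrac{1}{2}\sqrt{2^m \Ex{}{f(Y_2)}} \le \tfrac{1}{2}\sqrt{2^{m-\lambda}}.
\]
The hypothesis $m \le \lambda - 2\log(1/\alpha^2) + 2$ rearranges to $2^{m-\lambda}\le 4\alpha^4$, so the right-hand side is at most $\alpha^2$, as required. The only genuinely nontrivial step is the non-average collision-probability calculation; the rest is bookkeeping, with Jensen's inequality pointing in exactly the right direction precisely because $\widetilde H_\infty$ is defined via the log of an expected maximum rather than an expected log.
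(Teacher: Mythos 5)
Your proof is correct, and it is essentially the standard argument from the cited reference~\cite{fuzzy}: condition on $Y_2$ pointwise, bound each conditional statistical distance via the collision-probability/Cauchy--Schwarz form of the non-average LHL, then average back with Jensen on $\sqrt{\cdot}$, which is exactly what the $\log$-of-expected-max form of $\widetilde{H}_\infty$ is designed for. Note that the paper itself states this lemma without proof (importing it from~\cite{fuzzy}), so there is no in-paper proof to diverge from; your derivation, including the exact parameter arithmetic showing $\tfrac{1}{2}\sqrt{2^{m-\lambda}}\le\alpha^2$ under $m\le\lambda-2\log(1/\alpha^2)+2$, checks out.
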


\begin{corollary}
	\label{cor:sd-fact}
	$h(Y)$ is ${\alpha}$-close to uniform with probability at least $1-\alpha$ over $h\in_R H$.
\end{corollary}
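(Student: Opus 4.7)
The plan is to derive the corollary from the Generalized Leftover Hash Lemma by a one-line Markov argument on the statistical distance viewed as a random variable over the choice of $h$.

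First I would unpack what the LHL gives for the variable $Y$ in the corollary (taking $Y_2$ trivial, say $Y_2 = \bot$, and assuming $H_\infty(Y) \ge \lambda$ with $m \le \lambda - 2\log(1/\alpha^2) + 2$ as in the lemma's hypothesis). The conclusion becomes
\[
\SD\bigl((h, h(Y)),\,(h, U_m)\bigr) \;\le\; \alpha^2,
\]
where $h$ is drawn uniformly from the 2-universal family $H$ independently of $Y$.

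Next I would rewrite the joint statistical distance as an expectation of per-$h$ distances. Because $h$ appears identically on both sides of the joint distribution, expanding the definition of $\SD$ and pulling out the sum over $h$ gives
\[
\SD\bigl((h, h(Y)),\,(h, U_m)\bigr) \;=\; \sum_{h' \in H} \Pr[h = h'] \cdot \SD\bigl(h'(Y),\,U_m\bigr) \;=\; \E_{h \in_R H}\bigl[\SD(h(Y), U_m)\bigr].
\]
Combining the two displays yields $\E_{h}[\SD(h(Y), U_m)] \le \alpha^2$.

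Finally I would apply Markov's inequality to the nonnegative random variable $\SD(h(Y), U_m)$:
\[
\Pr_{h \in_R H}\!\bigl[\SD(h(Y), U_m) \ge \alpha\bigr] \;\le\; \frac{\alpha^2}{\alpha} \;=\; \alpha,
\]
so with probability at least $1 - \alpha$ over the choice of $h$, $h(Y)$ is $\alpha$-close to $U_m$, which is exactly the statement of the corollary. No step here is delicate; the only thing to be careful about is the identification of the joint statistical distance with the expectation over $h$ of the conditional statistical distance, which is immediate because $h$ is marginally uniform in both joint distributions.
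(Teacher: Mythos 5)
Your proof is correct and is essentially the same as the paper's: the paper's one-line argument $\alpha^2 \ge \SD\bigl((h,h(Y)),\mathrm{unif}\bigr) \ge \Pr[h \in H_{>\alpha}]\cdot\alpha$ is exactly your decomposition of the joint statistical distance into $\E_h[\SD(h(Y),U_m)]$ followed by Markov's inequality, just written in a single compressed step without naming Markov. You have merely made explicit the identity $\SD\bigl((h,h(Y)),(h,U_m)\bigr)=\E_{h}[\SD(h(Y),U_m)]$ that the paper uses implicitly.
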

\begin{proof}
Let $H_{> \alpha} = \{h\in H : h(Y)~\mbox{is not}~\alpha~\mbox{close to uniform}\}$. We have
$\alpha^2 \geq \Delta ((h,h(Y)), \mbox{unif}) > \Pr[h \in H_{> \alpha}] \cdot \alpha$. Hence $\Pr[h \in H_{> \alpha}] < \alpha$.
\end{proof}


\section{Security against predicate singling out (PSO security)}
\label{sec:definitions}

We consider a setting in which a data controller has in its possession a dataset $\db = (\row_1, \ldots, \row_n)$ consisting of $n$ rows  sampled i.i.d.\ from a distribution $D\in\Delta(\univ)$.
The data controller publishes the output of an \emph{anonymization mechanism} $\mech$ applied to the dataset $\db$.
A predicate singling out (PSO) adversary $A$ is a non-uniform Turing machine with access to the mechanism $M(\db)$ and produces a predicate $p:\univ\to\zo$.\footnotemark~
We abuse notation and write $\adv(M(\db))$, regardless whether $M$ is an interactive or non-interactive mechanism.
For now, we assume all adversaries have complete knowledge of $D$ and are computationally unbounded; we reexamine these choices in Section~\ref{sec:knowledge-discussion} below.

\footnotetext{
As is typical in cryptography, strengthening the adversary to be non-uniform (including possibly having full knowledge of the distribution $D$) yields stronger security definition. See Section~\ref{sec:knowledge-discussion} for further discussion.
} 

Intuitively, the adversary's goal is to output predicate $p$ that isolates a row in $\db$, where we associate the Article~29 WP Opinion on Anonymisation Techniques notion of ``isolat[ing] some or all records which identify an individual in [a] dataset'' with the production of a description that matches exactly one row in the dataset. Mathematically, the description would be in form of a predicate mapping data universe elements into $\{0,1\}$.

\begin{definition}[Row isolation]
A predicate $p$ \emph{isolates a row in $\db$} if there exists a unique $\row \in \db$ such that $p(\row) = 1$. I.e., if $p(\db) = 1/n$. We denote this event $\iso{p}{\db}$.
\end{definition}

It is tempting to require that a mechanism $M$ only allow a negligible probability of isolating a row, but this intuition is problematic.
An adversary that does not have access to $M$---a \emph{trivial} adversary---can output a predicate $p$ with $\weight_D(p)\approx 1/n$ and hence isolate a row in $\db$ with probability ${n \choose 1} \cdot \weight_D(p)\cdot (1-\weight_D(p))^{n-1}\approx e^{-1}\approx 37\%$.
In Section~\ref{sec:bounding-base} we will see that in many cases the trivial adversary need not know the distribution to produce such a predicate.

\medskip

Instead of considering the absolute probability that an adversary outputs a predicate that isolates a row, we consider the increase in probability relative to a \emph{baseline risk}: the probability of isolation by a trivial adversary.

\begin{definition}[Trivial Adversary]
	A predicate singling out adversary $\T$ is \emph{trivial} if the distribution over outputs of $\T$ is independent of $M(\db)$. That is $\T(M(\db)) = \T(\bot)$.
\end{definition}

An unrestricted trivial adversary can isolate a row with probability about $1/e$.
Towards a more expressive notion of the baseline risk, we restrict adversaries to output a
predicate from a particular class of {\em admissible} predicates $P \subseteq \{p:\univ\to \zo\}$, i.e., a subset of predicates on $\univ$.\footnote{More formally, we restrict the adversary to an ensemble of admissible predicates $\cP = \{P_n\}_{n\in\N}$, where $P_n \subseteq \{p:\univ_n\to \zo\}$, a subset of predicates on $\univ_n$.}

\begin{definition}[Adversarial success probability]
	\label{def:success}
	Let $D$ be a distribution over $\univ$.
	For mechanism $M$, an adversary $\A$, a set of admissible predicates $P$, and $n\in \bbN$, let
	$$\Succ_{P}^{\adv, M}(n,D) \triangleq \Pr_{\substack{\db\gets D^n \\ p\gets \adv(M(\db))}}[\iso{p}{\db} \land p\in P].$$
\end{definition}

\begin{definition}[Baseline]
  \label{def:baseline}
	For $n\in\N$ and set of admissible predicates $P$,
	$$
	\base_D(n,P) \triangleq \sup_{\mbox{\scriptsize Trivial }\T} ~ \Succ_{P}^{\T,\bot}(n,D)
	$$
\end{definition}

We typically omit the parameter $D$ when the distribution is clear from context. In this work, we focus on two classes of admissible predicates parameterized by the \emph{weight} of the predicate $p$.

\begin{definition}[Predicate families $\low$ and $\high$]
	For $0\le \w_\ell(n) \le 1/n \le \w_h(n) \le 1$ we define the predicate families
$$ \low  = \{p: \weight_D(p) \leq \w_\ell(n)\} \quad\mbox{and}\quad \high = \{p: \weight_D(p) \geq \w_h(n)\}.$$
\end{definition}
We will consider the success probability of adversaries restricted to these admissible predicates and will denote them $\Succ_{\le \w_\ell}^{\adv,M}$ and $\Succ_{\ge \w_h}^{\adv,M}$ as shown in Figure~\ref{fig:succ-low}.

\begin{figure}[h]
    \centering
    \medskip
    \includegraphics[width=\textwidth]{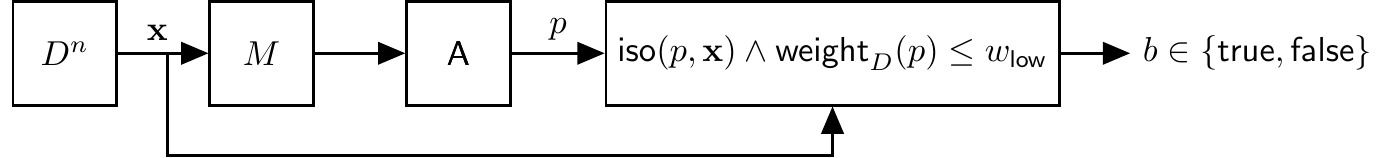}
		\caption{\label{fig:succ-low} $\mathsf{Succ}_{\le w_{\mathsf{low}}}^{\mathsf{A},M}(n,D) = \Pr_{D,M,\mathsf{A}}[b = \mathsf{true}].$}

\end{figure}

\subsection{Security against predicate singling out}

We now have the tools for presenting our definition of security against singling out. We require that no adversary should have significantly higher probability of isolating a row than that of a trivial adversary, conditioned on both outputting predicates from the same class of admissible predicates.

\begin{definition}[Security against predicate singling out\label{def:security-against-singling-out}]
	For $\epsilon(n) > 0$, $\delta(n) > 0$, $0 \le \w_\low(n) \le 1/n \le \w_\high(n) \le 1$, we say a mechanism $M$ is \emph{$(\epsilon, \delta, \w_\low,\w_\high)$ secure against predicate singling out}
  ($(\epsilon, \delta, \w_\low,\w_\high)$\emph{-PSO secure})
  if for all $\adv$, $D$, $n$, $w_\ell \le w_\low$, and $w_h \ge w_\high$:
	\begin{align}
	\Succ_{\le\w_\ell}^{\adv,M}(n,D) &\le e^{\epsilon(n)} \cdot \base_D(n,\low) + \delta(n),  \nonumber\\
	\Succ_{\ge\w_h}^{\adv,M}(n,D)  &\le e^{\epsilon(n)} \cdot  \base_D(n,\high) + \delta(n). \label{eq:succ_w_high}
	\end{align}
	We often omit explicit reference to the parameter $n$ for $\epsilon$, $\delta$, $\w_\low$, and $\w_\high$.

We say a mechanism is \emph{secure against predicate singling out} (\emph{PSO secure}) if for all $w_\low = \negl(n)$, $\w_\high = \omega(\frac{\log n }{n})$ there exists $\delta = \negl(n)$ such that $M$ is $(0,\delta,\w_\low,\w_\high)$-PSO secure.
\end{definition}

The definition is strengthened as $\epsilon$ and $\delta$ get smaller, and as $\w_\low$ and $\w_\high$ get closer to $1/n$. As shown below, when $\w_\low = \negl(n)$ the baseline is negligible. This is probably the most important regime of Definition~\ref{def:security-against-singling-out} as such predicates are likely to not only isolate a row in the dataset but also an individual in the entire population. The baseline is also negligible when  $\w_\high = \omega(\log n / n)$. It is not clear to the authors how beneficial finding a predicate in this regime may be to an attacker. The reader may decide to ignore Equation~\ref{eq:succ_w_high} in Definition~\ref{def:security-against-singling-out} (as is depicted in Figure~\ref{fig:succ-low}). We include the high weight regime in our analysis so as not to overlook potential singling out risks which rely on high weight predicates.

We also define a strong notion of predicate singling out, where an adversary can simultaneously isolate all rows of a dataset.
\begin{definition}[Fully Predicate Singling Out]
	\label{def:blatant}
	An adversary $\A$ \emph{fully singles out} against a mechanism $M$ and distribution $D$ if (with high probability) it outputs a collection of $n$ negligible-weight predicates $p_i$, each of which isolates a different row of the input dataset $\db$. More formally, if
	\begin{equation}
		\Pr_{\substack{\db\gets D^n \\ (p_1,\dots,p_n)\gets \A(M(\db))}}[\forall p_i, p_j: \iso{p_i}{\db} \land \weight_D(p_i) = \negl(n) \land (p_i \land p_j)(\db) = 0 ] > 1 - \negl(n)
	\end{equation}
\end{definition}

\paragraph{Examples.}
On input $(\row_1,\dots,\row_n)$ the mechanism $M_f$ outputs $(f(\row_1),\dots,f(\row_n))$ for some possibly randomized function $f$.
Whether $M_f$ prevents predicate singling out depends on $f$.
On one extreme, if $f(\row) = \row$ and $|\univ|\gg n$, then $M_f$ provides no protection. On the other extreme, if $f(\row)$ is completely random, $M_f(\db)$ contains no information about $\db$ and provides no benefit to the adversary. More formally, for all $\db$ the output of $M_f(\db)$ is uniform; this allows us to construct a trivial adversary $\T$ that perfectly simulates any adversary $\A$.\footnotemark

\footnotetext{Uniformity without conditioning on $\db$ may not be enough. For example, if the data itself is uniform, then the output of the identity function is also uniform. See also footnote~\ref{foot:uniformity-1}.}

If $f$ is invertible, then it offers no more protection than the identity function. However, $f$ being many-to-one does not give an assurance. For instance, suppose the data is uniform over $\zo^n$ and $f:\zo^n \to \zo^{n/2}$ outputs the last $n/2$ bits of an input $\row$.
$M_f$ is \emph{not} secure.
Indeed, it allows fully predicate singling out. For any $y_i = f(\row_i)$ in the output, the adversary can output the predicate $p_i:(\row) \mapsto \indic{f(\row) = y_i}$.
$\Pr[\iso{p_i}{\db}] = 1-\negl(n)$ and $\weight_{U_n}(p_i) = 2^{-n/2}= \negl(n)$.

\subsection{Bounding the baseline}
\label{sec:bounding-base}
In this section, we characterize the baseline over intervals in terms of a simple function $B(n,w)$.
For $n\ge2$ and a predicate $p$ of weight $w$, the probability over $\db\sim D^n$ that $p$ isolates a row in $\db$ is
\begin{equation*}
	B(n,w) \triangleq n \cdot w \cdot (1-w)^{n-1}
\end{equation*}
$B(n,w)$ is maximized at $w = 1/n$ and strictly decreases moving away from the maximum.
It is helpful to recall that $(1-1/n)^n \approx e^{-1}$ even for relatively small values of $n$.
$(1-1/n)^{n-1}$ also approaches $e^{-1}$ as $n\to\infty$, and does so from above.

As made formal in Claim~\ref{claim:baseline-exact} (proof in Appendix~\ref{app:definitions}), a trivial adversary maximizes its success of isolating a row by outputting a predicate $p$ with $\weight_D(p)$ as close as possible to $1/n$ (the weight that maximizes $B(n,w)$).
The set of possible values for $\weight_D(p)$ depends not only on $\w_\low$ and $\w_\high$, but also on the distribution.
We say that a weight $\w \in [0,1]$ is \emph{realizable} under distribution $D$ if there exists $p$ such that $\weight_D(p) = \w$. The baseline is characterized by $B(n,w)$.
\begin{claim}
	\label{claim:baseline-exact}
	For every $n > 0$, $w_\low$, $w_\high$ and $D$,
$$ \base_D(n,\lowset_n) = B(n,\w^*_\low(n))
\quad\mbox{and}\quad
\base_D(n,\highset_n) = B(n,\w^*_\high(n)),$$
where
$$\w^*_\low(n) = \sup \{w \le \w_\low(n) : \mbox{ realizable}\}
\quad\mbox{and}\quad
\w^*_\high(n) = \inf \{w \ge \w_\high(n) : \mbox{ realizable}\}. $$
\end{claim}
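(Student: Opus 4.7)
The plan is to reduce the supremum over randomized trivial adversaries to an optimization over a single realizable predicate weight, exploiting the fact that isolation probability depends on the predicate only through its weight.

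First I would observe that for any fixed predicate $p$ with $\weight_D(p) = w$, the indicator events $\{p(\row_i) = 1\}$ are i.i.d.\ Bernoulli$(w)$ across $i = 1, \ldots, n$, so
\[ \Pr_{\db \sim D^n}[\iso{p}{\db}] = n \cdot w \cdot (1-w)^{n-1} = B(n,w). \]
A trivial adversary $\T$ is specified by a distribution $\mu$ over predicates (independent of the mechanism), so
\[ \Succ_P^{\T,\bot}(n,D) = \E_{p \sim \mu}\bigl[B(n,\weight_D(p)) \cdot \indic{p \in P}\bigr]. \]
Because the quantity inside the expectation depends on $p$ only through $\weight_D(p)$, the supremum over $\mu$ supported on $P$ equals the supremum of $B(n,w)$ over realizable weights $w$ with some $p \in P$ having $\weight_D(p) = w$.

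Next I would use the elementary calculus of $B(n,w) = nw(1-w)^{n-1}$: differentiating gives a unique maximum at $w = 1/n$, with $B$ strictly increasing on $[0,1/n]$ and strictly decreasing on $[1/n,1]$. For $P = \lowset_n$, admissibility forces $w \le \w_\low \le 1/n$, so $B(n,w)$ is increasing in $w$ across the relevant range; the supremum over realizable admissible weights is therefore attained in the limit as $w \nearrow \w^*_\low$, where $\w^*_\low = \sup\{w \le \w_\low : w \text{ realizable}\}$. Symmetrically, for $P = \highset_n$, admissibility forces $w \ge \w_\high \ge 1/n$, where $B(n,w)$ is decreasing, so the supremum is attained as $w \searrow \w^*_\high$.

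The only subtlety is that $\w^*_\low$ (resp.\ $\w^*_\high$) is defined as a sup (resp.\ inf) of realizable weights and may not itself be realizable. This is handled in two steps: (upper bound) for any admissible realizable $w$, monotonicity of $B$ on the appropriate side of $1/n$ gives $B(n,w) \le B(n,\w^*_\low)$ (resp.\ $\le B(n,\w^*_\high)$), so the baseline is at most the claimed value; (lower bound) by definition of sup/inf there is a sequence of realizable admissible weights $w_k \to \w^*_\low$ (resp.\ $\to \w^*_\high$), each realized by some predicate $p_k$ that a deterministic trivial adversary can output, and continuity of $B(n,\cdot)$ yields $B(n,w_k) \to B(n,\w^*_\low)$ (resp.\ $B(n,\w^*_\high)$). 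Combining the two bounds gives the claimed equalities. The main (mild) obstacle is just being careful with this sup-vs-attained issue; everything else follows immediately from the shape of $B$.
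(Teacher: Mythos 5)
Your proof is correct and follows essentially the same route as the paper's: reduce the supremum over randomized trivial adversaries to a supremum of $B(n,\cdot)$ over realizable admissible weights (since isolation probability depends on $p$ only through $\weight_D(p)$), then use monotonicity of $B$ on either side of $1/n$ and the definition of $\w^*_\low$ (resp.\ $\w^*_\high$) to identify that supremum with $B(n,\w^*_\low)$ (resp.\ $B(n,\w^*_\high)$). The paper phrases the reduction via single-weight classes $P_w$ and a derandomization-by-contradiction argument, whereas you argue directly that a weighted average cannot exceed the supremum; you are also more explicit about the sup-versus-attained issue, which the paper handles implicitly via continuity of $B$.
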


Because $B(n,w)$ increases as $w$ approaches $1/n$, the baseline has a simple upper-bound.
\begin{corollary}
	\label{clm:baselineUpperBound}
		For every $w_\low$, $w_\high$, $n\in\N$ and distribution $D$,
$$\base_D(n,\lowset_n) \le B(n,\w_\low(n)) \quad
\mbox{and}\quad
\base_D(n,\highset_n) \le B(n,\w_\high(n)).$$
\end{corollary}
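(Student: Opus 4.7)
The plan is to derive the corollary directly from Claim~\ref{claim:baseline-exact} together with the monotonicity properties of $B(n,w)$ that were already observed in the surrounding text.

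First, I would invoke Claim~\ref{claim:baseline-exact} to rewrite the baseline exactly as $\base_D(n,\lowset_n) = B(n,\w^*_\low(n))$ and $\base_D(n,\highset_n) = B(n,\w^*_\high(n))$, where $\w^*_\low$ and $\w^*_\high$ are the best realizable weights at most $\w_\low$ and at least $\w_\high$ respectively. Thus it suffices to compare $B(n,\w^*_\low)$ against $B(n,\w_\low)$ and $B(n,\w^*_\high)$ against $B(n,\w_\high)$.

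Next I would use the monotonicity of $w \mapsto B(n,w) = n\cdot w \cdot (1-w)^{n-1}$ which the text records: $B(n,w)$ is maximized at $w = 1/n$ and strictly decreases as $w$ moves away from $1/n$. For the low case, by definition $\w^*_\low(n) \le \w_\low(n) \le 1/n$, so both $\w^*_\low$ and $\w_\low$ lie in the region where $B(n,\cdot)$ is non-decreasing; hence $B(n,\w^*_\low(n)) \le B(n,\w_\low(n))$. For the high case, $\w^*_\high(n) \ge \w_\high(n) \ge 1/n$, which lies in the region where $B(n,\cdot)$ is non-increasing, so $B(n,\w^*_\high(n)) \le B(n,\w_\high(n))$. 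Chaining these with the equalities from Claim~\ref{claim:baseline-exact} gives both claimed inequalities.

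There is no real obstacle here beyond citing the correct facts in the correct order; the only subtlety worth a sentence is to note that the supremum and infimum defining $\w^*_\low$ and $\w^*_\high$ are taken over realizable weights (which includes $0$ and $1$, so the sets are nonempty), guaranteeing that $\w^*_\low$ and $\w^*_\high$ are well-defined and lie on the correct side of $1/n$. With that, the argument is essentially a one-line application of monotonicity of $B(n,\cdot)$ together with the exact expression from Claim~\ref{claim:baseline-exact}.
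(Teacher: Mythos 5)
Your proposal is correct and matches the paper's own (very terse) argument: the paper derives the corollary in a single sentence by combining Claim~\ref{claim:baseline-exact} with the observation that $B(n,w)$ increases as $w$ approaches $1/n$, which is exactly the chain $\base_D(n,\lowset_n) = B(n,\w^*_\low) \le B(n,\w_\low)$ (and symmetrically for $\high$) that you spell out. Your extra remark about realizability of the endpoints $0$ and $1$ ensuring $\w^*_\low,\w^*_\high$ are well-defined is a fine sanity check, though not needed beyond what Claim~\ref{claim:baseline-exact} already supplies.
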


The dependence on the realizability of weights under $D$ makes the exact  baseline unwieldy. For example, the difference between the true baseline and the upper bound can be as large as $1/e$.
Thankfully, the $B(n,w)$ upper bound is nearly tight when the underlying distribution has moderate min-entropy. Moreover, the corresponding lower bound is achievable by an efficient uniform trivial adversary who is oblivious of the distribution (see Section~\ref{sec:knowledge-discussion}).

\begin{claim}[Baseline Lower Bound]
	\label{baseline-lower-bound}
	Let $c>0$ and $0\le \w_\low(n) \le 1/n \le \w_\high(n) \le 1$.
	If $D$ has min-entropy at least $\lambda> 5(c+\log n + 2)$, then
$\base_D(n,\lowset_n) \ge B(n,\w_\low(n)) - 2^{-c}$ and $\base_D(n,\highset_n) \ge B(n,\w_\high(n)) - 2^{-c}$.
\end{claim}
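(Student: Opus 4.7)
The plan is to construct, for each admissible class, an explicit randomized trivial adversary $\T^*$ whose output predicate $p$ satisfies $\weight_D(p) \le w_\low$ (resp.\ $\ge w_\high$) and achieves isolation probability within $2^{-c}$ of $B(n, w_\low)$ (resp.\ $B(n, w_\high)$). The engine is the Leftover Hash Lemma: since $D$ has min-entropy $\lambda = \Omega(c + \log n)$, a uniformly random hash $h$ from a 2-universal family $H = \{h : \zo^d \to \zo^m\}$ makes $h(D)$ statistically close to uniform over $\zo^m$, and Corollary~\ref{cor:sd-fact} lets us fix a ``good'' $h$ with high probability. This in turn lets us precisely control the weight of any predicate of the form $p_h(x) = \indic{h(x) \in S}$ through the choice of $S \subseteq \zo^m$.

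Concretely, $\T^*$ samples $h \in_R H$ and outputs $p_h(x) = \indic{h(x) \in S_k}$, where $S_k = \{0, 1, \ldots, k-1\}$ with $k = \lfloor (w_\low - \alpha) 2^m\rfloor$ in the low case and $k = \lceil (w_\high + \alpha) 2^m\rceil$ in the high case. The parameters $\alpha$ and $m$ are chosen so that (i) $m \le \lambda - 4\log(1/\alpha) + 2$, so Lemma~\ref{LHL} and Corollary~\ref{cor:sd-fact} apply; (ii) $1/2^m \ll \alpha$, so $k/2^m$ well-approximates the target weight; and (iii) $n\alpha \ll 2^{-c}$, so the final error is small. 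Setting $\alpha = 2^{-(c+2)}/n$ and $m = c + \lceil \log n\rceil + 3$ satisfies all three using $\lambda > 5(c + \log n + 2)$.

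The analysis has three pieces. First, Corollary~\ref{cor:sd-fact} implies that with probability at least $1-\alpha$ over $h$, $h(D)$ is $\alpha$-close to uniform; call this event $E$. Second, conditioned on $E$, $|\weight_D(p_h) - k/2^m| \le \alpha$, so $\weight_D(p_h) \le w_\low$ (resp.\ $\ge w_\high$) by construction, placing $p_h$ in the desired admissible class. Third, the isolation probability of a predicate of fixed weight $w'$ is exactly $B(n, w')$, and $B(n,\cdot)$ is $n$-Lipschitz on $[0,1]$ since $|B'(n,w)| = n(1-w)^{n-2}|1 - nw| \le n$. Hence conditioned on $E$, the isolation probability is at least $B(n, w_\low) - n(2\alpha + 2^{-m})$ (resp.\ $B(n, w_\high) - n(2\alpha + 2^{-m})$), and unconditioning costs at most an additional $\alpha$; the chosen parameters push the total loss below $2^{-c}$.

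The main obstacle I anticipate is boundary behavior, which can be finessed. When $w_\low < \alpha$, the construction degenerates to $k=0$ (the constantly-$0$ predicate), but in this regime $B(n, w_\low) \le n w_\low < n\alpha < 2^{-c}$, so the claim is trivially vacuous; symmetrically when $w_\high > 1 - \alpha$. All other parameter choices slot cleanly inside the slack provided by the hypothesis $\lambda > 5(c + \log n + 2)$. A pleasing byproduct is that $\T^*$ depends only on $n, c, w_\low, w_\high$ and is oblivious to $D$ itself, matching the ``uniform trivial adversary who is oblivious of the distribution'' remark preceding the claim.
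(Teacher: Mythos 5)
Your proof is correct and follows essentially the same route as the paper: use the Leftover Hash Lemma (via Corollary~\ref{cor:sd-fact}) to construct a randomized, distribution-oblivious predicate $p_h(x)=\indic{h(x)\in S_k}$ of weight within $O(\alpha+2^{-m})$ of the target, invoke the $n$-Lipschitzness of $B(n,\cdot)$, and dispose of the small-$w_\low$ regime by noting $B(n,w_\low)\le n w_\low$ is already below $2^{-c}$. The only organizational difference is that the paper factors the hash-based predicate construction into a standalone lemma (Lemma~\ref{lemma:lhl-1}) with slightly different parameter bookkeeping, whereas you inline it.
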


Informally, the assumption that $D$ has min-entropy $\lambda$  implies two useful facts. First, the set of realizable weights is dense: for any $w$, there exists a realizable $w'$ such that $|w-w'|$ is small. Second, the Leftover Hash Lemma allows us to construct an efficient uniform adversary who can find a predicate with weight $w'$ without any knowledge of the distribution.
The following lemma (proved in Appendix~\ref{app:definitions}) captures these properties:

\begin{lemma}
	\label{lemma:lhl-1}
	For $m\in\N$ and a set $\univ$, let $H = \{h:\univ\to\zo^m\}$ be 2-universal family of hash functions. 
	For any $w \ge 2^{-(m-1)}$ (respectively, $w \le 1-2^{-(m-1)}$), there exists a collection of predicates $P_H = \{p_h\}_{h\in H}$ such that for all distributions $D$ over $\univ$ with min-entropy at least $\lambda = 5m$,
	$\weight_D(p_h) \in [\w - 3\cdot2^{-m}, \w]$
	with probability at least $1-2^{-m}$ over $h\in_R H$.
	(respectively, $\weight_D(p_h) \in [\w, \w + 3\cdot2^{-m}]$).
\end{lemma}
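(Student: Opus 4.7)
The plan is to use a 2-universal hash function $h : \univ \to \zo^m$ to reduce the task of constructing a predicate of a specified weight under $D$ to the much simpler task of picking a fixed subset $S \subseteq \zo^m$, and then to transfer the weight from the uniform distribution on $\zo^m$ to $D$ via the Leftover Hash Lemma. For the low-weight case $w \ge 2^{-(m-1)}$, I first choose an integer $k$ with $k/2^m \in [w - 2 \cdot 2^{-m},\, w - 2^{-m}]$; such a $k$ exists because the corresponding range for $k$, namely $[w \cdot 2^m - 2,\, w \cdot 2^m - 1]$, has length $1$ and lower endpoint $\ge 0$ (using $w \ge 2^{-(m-1)}$). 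Fix any set $S \subseteq \zo^m$ of size $k$, and define the candidate family $P_H = \{p_h\}_{h\in H}$ by $p_h(\row) \triangleq \indic{h(\row) \in S}$.

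Next I apply Corollary~\ref{cor:sd-fact} with $\alpha = 2^{-m}$. The prerequisite of Lemma~\ref{LHL} requires $m \le \lambda - 2\log(1/\alpha^2) + 2 = \lambda - 4m + 2$, which holds because $\lambda \ge 5m$. Therefore, with probability at least $1 - 2^{-m}$ over $h \in_R H$, the distribution of $h(\row)$ for $\row \sim D$ is $2^{-m}$-close in statistical distance to $U_m$. Whenever this occurs, applying the statistical-distance bound to the event $\{h(\row) \in S\}$ gives $|\weight_D(p_h) - k/2^m| \le 2^{-m}$, so by the choice of $k$, $\weight_D(p_h) \in [w - 3 \cdot 2^{-m},\, w]$ as required. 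The high-weight case $w \le 1 - 2^{-(m-1)}$ follows by applying the same construction to $1-w$: pick $k'$ with $k'/2^m \in [(1-w) - 2 \cdot 2^{-m},\, (1-w) - 2^{-m}]$, fix $S' \subseteq \zo^m$ of size $k'$, and define $p_h(\row) \triangleq \indic{h(\row) \notin S'}$; the symmetric analysis yields $\weight_D(p_h) \in [w,\, w + 3 \cdot 2^{-m}]$.

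The only subtlety is the choice of $k$: one must pick $k$ strictly below the naive target $w \cdot 2^m$ so that after absorbing the $2^{-m}$ slack from the Leftover Hash Lemma, $\weight_D(p_h)$ still lies at most $w$ (respectively, at least $w$ for the high case). This is why the hypothesis is $w \ge 2^{-(m-1)}$ rather than merely $w > 0$: below that threshold, the interval $[w \cdot 2^m - 2,\, w \cdot 2^m - 1]$ need not contain a nonnegative integer and the construction degenerates. Everything else is bookkeeping around the statement of the Leftover Hash Lemma and its corollary.
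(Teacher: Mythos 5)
Your proof is correct and takes essentially the same approach as the paper: both define $p_h$ by testing whether $h(\row)$ lands in a threshold set of measure slightly below $w$ under the uniform distribution on $\zo^m$, invoke Corollary~\ref{cor:sd-fact} with $\alpha = 2^{-m}$ to get that $h(D)$ is $2^{-m}$-close to uniform for a $1-2^{-m}$ fraction of $h$, and choose the undershoot so that the resulting $\pm 2^{-m}$ slack keeps $\weight_D(p_h)$ inside $[w - 3\cdot 2^{-m}, w]$. Your bookkeeping (picking an integer $k$ with $k/2^m$ in a length-$2^{-m}$ window and taking $S$ of size $k$) is slightly cleaner than the paper's parametrization via $r(y)=y/(2^m-1)$ and a slack $\Delta=2\alpha$, but the argument is the same.
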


\begin{proof}[Proof of Claim~\ref{baseline-lower-bound}]
	We prove the claim for $\w_\low(n)$; the proof for $\w_\high(n)$ is analogous.
  Let $m\ge c+\log n +2$. Either $w_\low(n) \le 2^{-(c+\log n)}$ or $\w_\low(n) \ge 2^{-(m-1)}$.
  If $w_\low(n) \le 2^{-(c+\log n)}$, then $B(n,w_\low) \le nw_\low \le 2^{-c}$, making the claim trivial.
  It remains to consider $\w_\low(n) \ge 2^{-(m-1)}$.

	Let $P_H$ be the family of predicates from Lemma~\ref{lemma:lhl-1} and $\T_H$ be a trivial adversary that outputs a random $p_h \in_R P_H$.
	Recall that for any predicate $p$, $\Pr_{\db \sim D^n}[\iso{p}{\db}]=B(n, \weight_D(p))$.
	By Lemma~\ref{lemma:lhl-1}
	\begin{align*}
		\Succ_\low^{\T_H,\bot}(n)
		&\ge  \Pr_{\db \sim D^n, h\in_R H}[\iso{p_h}{\db} \land \weight_D(p_h) \in W_\low] \\
		&= \Pr_{h\in_R H}[\weight_D(p_h) \in W_\low]\cdot \Pr_{\db \sim D^n, h\in_R H}[\iso{p_h}{\db} \mid \weight_D(p_h) \in W_\low] \\
		&\ge (1 - 2^{-m})\cdot B(n,3\cdot 2^{-m})
	\end{align*}
	Observing that 	$\left|\dv{B}{w}(w)\right| \le \dv{B}{w}(0) =n$,
	$\Succ_\low^{\T_H,\bot}(n) \ge B(n,\w_\low(n)) - 3\cdot2^{-m}n - 2^{-m} \ge B(n,\w_\low(n)) - 2^{-(m-\log n - 2)} \ge B(n,\w_\low(n)) - 2^{-c}$.
\end{proof}

\begin{remark}
	\label{remark:baseline-derandomized}
	The proof of Claim~\ref{baseline-lower-bound} requires only that is possible to sample a predicate such that $\weight_D(p_h) \in W_\low$. If we switch the order of quantifiers in the claim by allowing the trivial adversary to depend on the distribution $D$, then the proof (and thus the trivial adversary) can be derandomized. Indeed, for any $D$ with sufficient min-entropy, there are many $p_h$ that can be used. This observation is used in the proof of Theorem~\ref{thm:count-composition-uniform}.
\end{remark}

\subsection{Reflections on modelling assumption}
\label{sec:knowledge-discussion}

In many ways, Definition~\ref{def:security-against-singling-out} requires a very high level of protection, similar to what is standard in the foundations of cryptography. The definition requires a mechanism to provide security for all distributions $D$ and against non-uniform, computationally unbounded adversaries.\footnote{\label{foot:knowledge} It is reasonable to limit the adversary in Definition~\ref{def:security-against-singling-out} to polynomial time. If we restricted our attention to distributions with moderate min-entropy, our results would remain qualitatively the same: our trivial adversaries and lower bounds are all based on efficient and uniform algorithms; our upper bounds are against unbounded adversaries. Relatedly, restricting to min-entropy distributions would allow us to switch the order of quantifiers of $D$ and $\T$ in the definition of the baseline without affecting our qualitative results.}
The main weakness in the required protection is that it considers only data that is i.i.d., whereas real-life data cannot generally be modeled as i.i.d.

Any mechanism that purports to be a universal anonymizer of data under the GDPR---by transforming personal data into non-personal data---must prevent singling out.
Our definition is intended to capture a necessary condition for a mechanism to be considered as rendering data sufficiently anonymized under the GDPR.
Any mechanism that prevents singling out in all cases must prevent it in the special case that the data is i.i.d. from a distribution $D$.
We view a failure to provide security against predicate singling out (Definition~\ref{def:security-against-singling-out}) or is fully predicate singling out (Definition~\ref{def:blatant}) as strong evidence that a mechanism does not provide security against singling out; hence, it does not protect from identification, as per the analysis in Section~\ref{sec:soGDPR}.

On the other hand, satisfying Definition~\ref{def:security-against-singling-out} is {\em not sufficient} for arguing that a mechanism renders data sufficiently anonymized under the GDPR. Singling out is only one of the many ``means reasonably likely to be used'' to identify a person in a data release.\footnote{Article~29 Working Party Opinion on Anonymisation techniques~\cite{wp-anonymisation} enumerates  three criterions for identification: singling out, linkage, and inference.}
Furthermore, the definition considers only i.i.d.\ data; satisfying it may not even be sufficient to conclude that a mechanism prevents singling out in all relevant circumstances.


\section{Properties of PSO security}
\label{sec:composition}

Two desirable properties of privacy concepts are that (i) immunity to post-processing, i.e., further processing of the outcome of a mechanism, without access to the data, should not increase privacy risks, and (ii) closure under composition, i.e., a combination of two or more mechanisms which satisfy the requirements of the privacy concept is a mechanism that also satisfies the requirements (potentially, with worse parameters). Differential privacy is an example of a privacy concept that is immune to post-processing and is closed under composition.

In this section we prove that PSO security withstands post processing but not composition.
We give two demonstrations for the latter. In the first we consider mechanisms which count the number of dataset rows satisfying a property. We show that releasing a count satisfies Definifion~\ref{def:security-against-singling-out}. However, there exists a collection of $\omega(\log(n))$ counts which allows an adversary to isolate a row with probability arbitrarily close to one using a predicate with negligible weight. For the second demonstration, we construct a (less natural) pair of mechanisms that individually satisfy Definifion~\ref{def:security-against-singling-out} but together allow the recovery of a row in the dataset. This latter construction borrows ideas from~\cite{NSSSU18}.
An immediate conclusion is that PSO security is distinct from differential privacy. More importantly, not being closed under composition is a significant weakness of the notion of PSO security.
Our constructions rely on very simple mechanisms that would likely be deemed secure against singling out under other formulations of the concept. It may well be that non-closure under composition is inherent for singling out.

From a legal or policy point of view, we believe that a privacy concept which is not closed under composition (or not immune to post-processing) should not be accepted as sufficient. Pragmatically, the fact that PSO security is not closed under composition suggests that this concept can be used for {\em disqualifying} privacy technology (if they are not PSO secure) but also that this concept must be combined with other requirements if it used for approving technology.

\subsection{Post Processing}

For any non-interactive mechanism $M$, let $F$ be a (possibly non-uniform) algorithm taking inputs of the form $M(\db)$. Let $F\circ M$ be the mechanism that on input $\db$ returns $F(M(\db))$.

\begin{lemma}[Postprocessing]
\label{lemma:postprocessing}
If $M$ is $(\eps, \delta, \w_\low,\w_\high)$-PSO secure, then $F\circ M$ is too.
\end{lemma}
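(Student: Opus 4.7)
The plan is to prove post-processing by the standard reduction: any adversary attacking $F\circ M$ can be simulated by an adversary attacking $M$ that applies $F$ internally before running the attack. Since $F$ is a fixed (possibly non-uniform) algorithm and the adversaries permitted by Definition~\ref{def:security-against-singling-out} are themselves non-uniform, absorbing $F$ into the adversary's code is free.

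Concretely, given an arbitrary adversary $\adv'$ against $F\circ M$, I would define a new adversary $\adv$ that on input $y$ computes $\adv'(F(y))$ and outputs the resulting predicate. By construction, the joint distribution of $(\db, p)$ when $\db \gets D^n$ and $p \gets \adv'(F(M(\db)))$ is identical to that when $p \gets \adv(M(\db))$. Therefore, for every $n$, every distribution $D$, and every set of admissible predicates $P$ (in particular for each $P = \{p : \weight_D(p) \le w_\ell\}$ with $w_\ell \le \w_\low$ and each $P = \{p : \weight_D(p) \ge w_h\}$ with $w_h \ge \w_\high$),
$$\Succ_{P}^{\adv', F\circ M}(n,D) \;=\; \Succ_{P}^{\adv, M}(n,D).$$
Applying the hypothesis that $M$ is $(\eps,\delta,\w_\low,\w_\high)$-PSO secure to the right-hand side gives both required bounds
$$\Succ_{\le w_\ell}^{\adv', F\circ M}(n,D) \le e^{\eps}\cdot\base_D(n,\low) + \delta \quad\text{and}\quad \Succ_{\ge w_h}^{\adv', F\circ M}(n,D) \le e^{\eps}\cdot\base_D(n,\high) + \delta,$$
so $F\circ M$ satisfies the definition with the same parameters. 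Since $\adv'$ was arbitrary, this completes the reduction.

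There is no real obstacle: the argument is a one-line simulation and invokes nothing beyond the definition itself. The only mild point worth noting is that $F$ may be randomized; this is handled by having $\adv$ sample $F$'s internal coins as part of its own randomness, which does not change the joint distribution of $(\db,p)$. Likewise, in the interactive case the same simulation applies by having $\adv$ run $F$ on whatever transcript $M(\db)$ produces before invoking $\adv'$.
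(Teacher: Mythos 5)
Your proof is correct and matches the paper's argument essentially verbatim: the paper likewise constructs, for every adversary $\A$ against $F\circ M$, a simulating adversary $\A_F$ that runs $F$ on the mechanism output before invoking $\A$, and observes that $\Succ_P^{\A_F,M}(n) = \Succ_P^{\A,F\circ M}(n)$ for all $n$, $P$, $D$. Your remarks about randomized $F$ and the interactive case are also consistent with the paper's closing note that the argument extends to interactive $M$.
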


\begin{proof}
  We show something stronger: for all $M$, $F$, $\A$ there exists $\A_F$ such that for all $n$, $P$, $D$:
  $\Succ_P^{\A_F,M}(n) = \Succ_P^{\A,F\circ M}(n)$.
	On input $M(\db)$, $\A_F$ simulates $\A$ on input $F(M(\db))$ and returns the resulting predicate $p$. The distribution of $\A_F$'s output with mechanism $M$ is identical to that of $\A$ with mechanism $F\circ M$, proving the lemma.
\end{proof}

\noindent
The definition and proof above extend to the case where the mechanism $M$ is interactive.

\subsection{Example PSO-secure mechanisms}
\label{sec:counting-leakage}

This section presents two PSO-secure mechanisms.
These examples are useful for developing intuition for the PSO security notion. Additionally, they are the foundation for the examples of self-composition failures in the next section.

\subsubsection{Counting Mechanism}
For any predicate $q:\univ \to \zo$, we define the corresponding Counting Mechanism:
\begin{mechanism}[H]
		\caption{Counting Mechanism $M_{\#q}$}
		\SetKwInOut{Input}{input}
		\Input{$\db$}
		\BlankLine
		return $|\{1\leq i \leq n  : q(\row_i) = 1\}|$
\end{mechanism}
\noindent
For example, consider the least-significant bit predicate $\lsb$, that takes as input a string $\row \in \zo^*$ and outputs $\row[1]$. The corresponding Counting Mechanism $M_{\#\lsb}$ returns the sum of the first column of $\db$.

The security of the Counting Mechanism is a corollary of the following proposition.

\begin{proposition}
\label{lemma:mech-small-codomain}
	For all $\A$, $P$, $M:\univ^n\mapsto Y$: $\Succ_P^{\A,M}(n) \le |Y|\cdot \base(n, P),$ where $Y$ is the codomain of $M$.
\end{proposition}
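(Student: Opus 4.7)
The plan is to decompose the adversary's success by the value of $M(\db)$ and, for each possible output $y \in Y$, exhibit a trivial adversary whose success upper-bounds the contribution of $y$. The key insight is that once we condition on the output being a particular $y$, the adversary $\A$ is effectively running on a fixed input; hardcoding $y$ yields a trivial adversary. Since there are only $|Y|$ such values, the bound follows by summation.

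Concretely, I would first unwind the definition of $\Succ_P^{\A,M}(n)$ and partition over $y \in Y$:
\[
\Succ_P^{\A,M}(n) = \sum_{y \in Y} \Pr_{\db, M, \A}\bigl[\iso{\A(M(\db))}{\db} \land \A(M(\db)) \in P \land M(\db) = y\bigr].
\]
On the event $\{M(\db) = y\}$, we may replace $\A(M(\db))$ by $\A(y)$ (where the randomness of $\A$ is independent of $\db$ and $M$). Dropping the conjunct $M(\db) = y$ can only increase each term, giving
\[
\Succ_P^{\A,M}(n) \le \sum_{y \in Y} \Pr_{\db \sim D^n,\, p \gets \A(y)}\bigl[\iso{p}{\db} \land p \in P\bigr].
\]

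Next, for each $y \in Y$, I would define the trivial adversary $\T_y$ that ignores its input and outputs a fresh sample from $\A(y)$. Since $\T_y$'s output distribution does not depend on $M(\db)$, it satisfies the definition of a trivial adversary, and its success probability equals exactly the $y$-th summand above. By the definition of the baseline as a supremum over trivial adversaries, each such summand is at most $\base(n,P)$. Summing over the $|Y|$ values of $y$ yields $\Succ_P^{\A,M}(n) \le |Y| \cdot \base(n,P)$, as claimed.

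I do not anticipate any real obstacle; the one subtlety to state carefully is the independence of $\A$'s internal randomness from $\db$ and $M$, which justifies treating $\A(y)$ as a distribution over predicates independent of the dataset when conditioning. The argument also works transparently when $M$ is randomized, since $Y$ is simply the codomain and the partition over $y$ makes no distinction between deterministic and randomized mechanisms.
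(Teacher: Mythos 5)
Your proof is correct and the argument goes through. It is worth noting that the paper's own proof is essentially the dual of yours: instead of partitioning $\Succ_P^{\A,M}(n)$ over $y\in Y$ and bounding each slice by a separate trivial adversary $\T_y$, the paper constructs a \emph{single} trivial adversary $\T$ that samples $y\in_R Y$ uniformly and outputs $\A(y)$, then argues pointwise that for each fixed $\db$ there is a best output $y^*(\db)$ with $\Pr_{p\gets\A(y^*)}[\iso{p}{\db}\land p\in P]\ge\Pr_{p\gets\A(M(\db))}[\iso{p}{\db}\land p\in P]$, which $\T$ hits with probability at least $1/|Y|$; this gives $\Succ_P^{\T,\bot}(n)\ge\frac{1}{|Y|}\Succ_P^{\A,M}(n)$ and the result follows by the definition of the baseline. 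Both proofs rest on the same counting idea over the small codomain, and both correctly handle the subtlety you flag (independence of $\A$'s coins from $\db$ and $M$). Your version is a touch more mechanical (partition, drop a conjunct, invoke the supremum once per $y$), while the paper's single-adversary formulation avoids the family of $|Y|$ adversaries and makes the ``random guess of $y$'' intuition explicit; the choice between them is purely presentational.
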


\begin{proof}
We define a \emph{trivial adversary} $\T$ such that for all $\A$, $\Succ^{\T,\bot}_P(n) \ge \frac{1}{|Y|}\cdot\Succ^{A,M}_P(n).$
The proposition follows by definition of $\base(n,P)$.
$\T$ samples a random $y\in_R Y$ and returns $p \gets \A(y)$.
\begin{equation*}
\Succ_P^{\T,\bot}(n)
= \Pr_{\substack{\db \gets D^n \\ y\in_R Y \\ p\gets \A(y)}}[\iso{p}{\db} \land p\in P]
\ge \frac{\Succ^{A,M}_P}{|Y|}
\end{equation*}
The inequality follows from the fact that for all datasets $\db$, there exists $y^* = y^*(\db) \in Y$ such that
\begin{equation*}
\Pr_{p\gets \A(y^*)}[\iso{p}{\db} \land p \in P] \ge \Pr_{p\gets \A(M(\db))}[\iso{p}{\db} \land p \in P],
\end{equation*}
and that for all $\db$,
$
\Pr_{y\in_R Y}[y = y^*] \ge \frac{1}{|Y|}.
$
\end{proof}

\begin{corollary}
	\label{thm:count-bit}
	$M_{\#q}$ PSO secure.
\end{corollary}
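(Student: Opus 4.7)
The plan is to derive the corollary directly from Proposition~\ref{lemma:mech-small-codomain} together with Corollary~\ref{clm:baselineUpperBound}. The counting mechanism $M_{\#q}$ outputs an integer in $\{0, 1, \ldots, n\}$, so its codomain $Y$ has size $|Y| = n+1$. Applying Proposition~\ref{lemma:mech-small-codomain} to any adversary $\A$, distribution $D$, and admissible class $P \in \{\low, \high\}$ yields
$$\Succ_P^{\A, M_{\#q}}(n, D) \le (n+1)\cdot \base_D(n, P) = \base_D(n, P) + n\cdot \base_D(n, P).$$
This already has exactly the form $e^0\cdot \base + \delta$ required by Definition~\ref{def:security-against-singling-out}, so the task reduces to showing that $\delta(n) := n\cdot \base_D(n, P)$ is negligible in both relevant weight regimes.

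For the low-weight regime, I would fix any $w_\ell \le w_\low = \negl(n)$. Corollary~\ref{clm:baselineUpperBound} together with the crude bound $B(n, w) \le nw$ gives $\base_D(n, \low_{w_\ell}) \le B(n, w_\ell) \le n\cdot w_\low$, so $n\cdot \base_D(n, \low_{w_\ell}) \le n^2\cdot w_\low = \negl(n)$. For the high-weight regime, fix any $w_h \ge w_\high = \omega(\log n/n)$. Since $B(n, w)$ is unimodal with peak at $w = 1/n$ and decreasing on $[1/n, 1]$, I get $B(n, w_h) \le B(n, w_\high)$. Applying $(1-w)^{n-1} \le e^{-(n-1)w}$ yields $n\cdot B(n, w_\high) \le n^2\cdot w_\high\cdot e^{-(n-1)w_\high}$, and since $(n-1)w_\high = \omega(\log n)$ the exponential factor is $n^{-\omega(1)}$, making the whole expression negligible.

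Taking $\delta$ to be the sum of these two bounds certifies $(0, \delta, w_\low, w_\high)$-PSO security for every admissible choice of $w_\low = \negl(n)$ and $w_\high = \omega(\log n / n)$, establishing PSO security of $M_{\#q}$. I do not expect any real obstacle: the mechanism's range grows only linearly in $n$, while the baseline in both interesting regimes decays super-polynomially, so the linear loss incurred by Proposition~\ref{lemma:mech-small-codomain} is comfortably absorbed into a negligible slack.
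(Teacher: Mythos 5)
Your proposal is correct and follows exactly the route the paper intends: the corollary is stated as an immediate consequence of Proposition~\ref{lemma:mech-small-codomain} with $|Y|=n+1$, and the paper leaves the routine verification that $n\cdot\base_D(n,P)$ is negligible (via the $B(n,w)\le nw$ bound for low weight and the exponential decay of $B(n,w)$ for $w=\omega(\log n/n)$, both uniformly over $D$ by Corollary~\ref{clm:baselineUpperBound}) implicit; you have simply filled in those details accurately.
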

As exact counts are not differentially private, this corollary demonstrates that differential privacy is not necessary for PSO security.

\subsubsection{Predicate Mechanism}

For any predicate $q:\univ \to \zo$, we define the corresponding Predicate Mechanism:
\begin{mechanism}[H]
		\caption{Predicate Mechanism $M_q$}
		\SetKwInOut{Input}{input}
		\Input{$\db$}
		\BlankLine
		return $(q(x_1), q(x_2),\dots,q(x_n))$
\end{mechanism}
\noindent

\begin{theorem}
	\label{thm:mech-one-bit}
	$M_q$ is PSO secure.
\end{theorem}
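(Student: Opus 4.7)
I will separately bound $\Succ^{\A,M_q}$ in the low- and high-weight regimes, exploiting that $b := M_q(\db)$ reveals only $q(x_i)$ for each row. Let $\alpha := \weight_D(q)$, and for any predicate $p$ let $w_c(p) := \Pr_{x \sim D}[p(x) = 1 \mid q(x) = c]$, so $\weight_D(p) = (1-\alpha) w_0(p) + \alpha w_1(p)$. Given $b$, the rows of $\db$ are conditionally independent with $x_i \mid b_i \sim D \mid q(x) = b_i$. Writing $n_c := |\{i : b_i = c\}|$ and letting $p_b$ denote the adversary's output, the count $N_{p_b} := |\{i : p_b(x_i) = 1\}|$ is, given $b$, a sum of independent Bernoullis with mean $\mu_b := n_0\,w_0(p_b) + n_1\,w_1(p_b)$, and $\iso{p_b}{\db}$ is the event $N_{p_b} = 1$.

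\textbf{Low-weight regime.} For $\weight_D(p_b) \le w_\ell \le \w_\low = \negl(n)$, a union bound gives $\Pr[\iso{p_b}{\db} \mid b] \le \mu_b$. The weight constraint forces $w_c(p_b) \le \min(1,\, w_\ell / \Pr_D[q(x) = c])$, so $\mu_b \le n_0 \min(1, w_\ell/(1-\alpha)) + n_1 \min(1, w_\ell/\alpha)$. Taking expectation over $b$ using $\E[n_c] = n \Pr_D[q(x) = c]$, the $\alpha$-dependence cancels and $\E_b[\mu_b] \le 2 n w_\ell = \negl(n)$. Combined with $\base_D(n,\low) \ge 0$, this establishes $\Succ^{\A,M_q}_{\le w_\ell} \le \base_D(n,\low) + \negl(n)$. (When $\alpha \in \{0,1\}$, $M_q(\db)$ is deterministic, making the claim trivial.)

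\textbf{High-weight regime.} For $w_h \ge \w_\high = \omega(\log n/n)$, the baseline $\base_D(n,\high) \le B(n,w_h) \le n\cdot e^{-(n-1)w_h} = \negl(n)$ is already negligible, so it suffices to show $\Succ^{\A,M_q}_{\ge w_h} = \negl(n)$. Using $\prod_{j \neq i}(1 - w_{b_j}) \le e^{-\mu_b + w_{b_i}}$ and $e^{w_{b_i}} \le e$, I get $\Pr[N_{p_b} = 1 \mid b] \le e\mu_b e^{-\mu_b}$, which via $\mu \le e^{\mu/2}$ (valid for all $\mu \ge 0$) is at most $e^{1 - \mu_b/2}$. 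A linear program over the constraint $(1-\alpha)w_0(p_b) + \alpha w_1(p_b) \ge w_h$ yields $\mu_b \ge w_h \cdot \min(n_0/(1-\alpha),\, n_1/\alpha)$ for any feasible $p_b$ (in the central regime $w_h \le \alpha \le 1 - w_h$; the boundary cases are handled by evaluating the remaining LP vertices).

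\textbf{The main obstacle} is to show $\mu_b = \omega(\log n)$ with probability $1 - \negl(n)$ uniformly in $\alpha$. When both $n\alpha$ and $n(1-\alpha)$ are $\omega(\log n)$, Chernoff concentration gives $n_c \ge n\Pr_D[q(x)=c]/2$ whp, so $\min(n_0/(1-\alpha),\, n_1/\alpha) \ge n/2$ and $\mu_b \ge n w_h/2 = \omega(\log n)$. In the extreme regime $\alpha < w_h$ (and symmetrically $\alpha > 1 - w_h$), the LP vertex $(0, w_h/\alpha)$ is infeasible; the adversary is forced to use $w_0(p_b) \ge (w_h - \alpha)/(1-\alpha) \ge w_h/2$, while $n_0 \ge n/2$ whp via the small-mean tail bound $\Pr[n_1 \ge n/2] \le (2e\alpha)^{n/2}$ (negligible since $\alpha \ll 1$), yielding $\mu_b \ge n_0 w_0(p_b) = \omega(\log n)$. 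Thus $\Pr[N_{p_b} = 1 \mid b] \le e^{1 - \mu_b/2} = \negl(n)$ outside a negligible-probability event for $b$, and combining with the trivial bound of $1$ on the exceptional event gives $\Succ^{\A,M_q}_{\ge w_h} \le \negl(n)$, completing the PSO-security proof.
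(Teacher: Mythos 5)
Your route is entirely different from the paper's. The paper reduces $M_q$ to the counting mechanism $M_{\#q}$ via the Permutation Proposition (Proposition~\ref{lemma:permutation}): applying a uniformly random permutation to the rows of $\db$ turns $M_q$'s output into a uniformly random bitstring of Hamming weight $M_{\#q}(\db)$, so $M_q$ has the same success profile as a post-processing of $M_{\#q}$, and the latter is secure by Proposition~\ref{lemma:mech-small-codomain} because its codomain has size $n+1$. You instead do a direct conditional-independence computation and make no use of the counting-mechanism result. Your low-weight half is correct and clean, giving the absolute bound $\Succ^{\A,M_q}_{\le w_\ell} \le 2nw_\ell = \negl(n)$ without appealing to the baseline at all.

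The high-weight half has a concrete gap in the case analysis. The asserted inequality $(w_h - \alpha)/(1-\alpha) \ge w_h/2$ for all $\alpha < w_h$ is false when $\alpha$ is close to $w_h$: rearranging, it is equivalent to $\alpha \le w_h/(2 - w_h)$, so it fails on $\alpha \in \bigl(w_h/(2-w_h),\, w_h\bigr)$. Meanwhile your ``central regime'' LP bound $\mu_b \ge w_h \min\bigl(n_0/(1-\alpha),\, n_1/\alpha\bigr)$ requires both axis vertices $(w_h/(1-\alpha), 0)$ and $(0, w_h/\alpha)$ to lie in the unit box, i.e., $w_h \le \alpha \le 1 - w_h$, so it also does not apply when $\alpha < w_h$. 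Hence the range $\alpha \in \bigl(w_h/(2-w_h),\, w_h\bigr)$ (and its mirror near $1-w_h$) is covered by neither case. The gap is fixable: there $n\alpha > n w_h/(2-w_h) = \omega(\log n)$, so $n_1 \ge n\alpha/2 = \omega(\log n)$ with probability $1 - \negl(n)$, and the correct LP minimum in that regime, $\min\bigl(n_0 w_h/(1-\alpha),\ n_0(w_h - \alpha)/(1-\alpha) + n_1\bigr)$, is $\omega(\log n)$ because the second term is at least $n_1$ and the first is at least $n_0 w_h$ with $n_0 \ge n/4$ whp when $w_h \le 1/2$. You should also state how you handle $w_h > 1/2$: there the central regime $w_h \le \alpha \le 1-w_h$ is empty and your two ``extreme'' cases $\alpha < w_h$ and $\alpha > 1 - w_h$ overlap, so the partition as written is not well-defined; a short separate LP check (both axis vertices infeasible, remaining vertices give $\mu_b \ge \min(n_0, n_1)$-type bounds) is needed.
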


We prove the security of $M_q$ by showing that its output is ``no more helpful'' to the PSO adversary than the counts returned by $M_{\#q}$.

\begin{proposition}[Permutation Proposition]
	\label{lemma:permutation}
	For a permutation $\sigma:[n]\to[n]$ of $n$ elements and a dataset $\db=(\row_1,\dots,\row_n)$, define $\sigma(\db) = (\row_{\sigma(1)},\row_{\sigma(2)},\dots,\row_{\sigma(n)})$.
For any mechanism $M$, let $M\circ \sigma$ be the mechanism that on input $\db$ returns $M(\sigma(\db))$.
	For all $\A$, $P$, $D$, and $\sigma$: $\Succ_P^{\A,M}(n) = \Succ_P^{\A, M\circ\sigma}(n).$
\end{proposition}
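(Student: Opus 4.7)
The plan is to exploit two facts: that the i.i.d.\ distribution $D^n$ is invariant under permutation, and that the isolation predicate $\iso{p}{\db}$ depends only on the multiset of rows (equivalently, on the count $p(\db) = \frac{1}{n}\sum_i p(\row_i)$), not on their ordering. Together these let us push the permutation ``through'' both the dataset distribution and the isolation event, matching up $\Succ_P^{\A, M\circ\sigma}(n)$ with $\Succ_P^{\A, M}(n)$.

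Concretely, I would begin by unfolding the definition of $\Succ_P^{\A, M\circ\sigma}(n)$:
\begin{equation*}
\Succ_P^{\A, M\circ\sigma}(n) \;=\; \Pr_{\substack{\db\gets D^n \\ p \gets \A(M(\sigma(\db)))}}\bigl[\iso{p}{\db} \land p\in P\bigr].
\end{equation*}
Then I would perform a change of variables $\db' := \sigma(\db)$. Since the rows of $\db$ are drawn i.i.d.\ from $D$, the law of $\sigma(\db)$ equals the law of $\db$, so $\db'$ is distributed as $D^n$. Under this substitution, $\db = \sigma^{-1}(\db')$ and the display becomes
\begin{equation*}
\Pr_{\substack{\db'\gets D^n \\ p \gets \A(M(\db'))}}\bigl[\iso{p}{\sigma^{-1}(\db')} \land p\in P\bigr].
\end{equation*}

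The second step is to observe that for any permutation $\tau$ and any predicate $p$, $\iso{p}{\tau(\db')} \iff \iso{p}{\db'}$, because the event $\iso{p}{\cdot}$ is precisely the event $p(\cdot) = 1/n$, which is a symmetric function of the rows. Applying this to $\tau = \sigma^{-1}$, the probability above equals $\Succ_P^{\A, M}(n)$, completing the proof.

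I do not anticipate any real obstacle; the only thing to be careful about is the direction of the change of variables and making explicit that neither the adversary $\A$ nor the admissibility check $p \in P$ depends on $\db$ except through $M(\cdot)$, so they behave correctly under relabeling. I would state the permutation-invariance of $\iso{p}{\cdot}$ as a short intermediate observation (one line) before the main chain of equalities, since it is used implicitly in multiple subsequent arguments (e.g., to deduce Theorem~\ref{thm:mech-one-bit} from Corollary~\ref{thm:count-bit}).
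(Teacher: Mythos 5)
Your proposal is correct and takes essentially the same approach as the paper's proof: both rely on (i) the permutation-invariance of the product distribution $D^n$ and (ii) the permutation-invariance of the isolation event $\iso{p}{\cdot}$, then carry out a change of variables on $\db$. The only cosmetic difference is that you run the chain of equalities from $\Succ_P^{\A,M\circ\sigma}$ to $\Succ_P^{\A,M}$ via $\sigma^{-1}$, whereas the paper goes the other way via $\sigma$.
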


\begin{proof}
	For all $\sigma$, the distributions $D^n$ and $\sigma(D^n)$ are identical.
	For all $p$ and $\db$, $\iso{p}{\db}$ if and only if $\iso{p}{\sigma(\db)}$. Using these two observations:
	\begin{align*}
		\Succ_P^{\A,M}(n)
	&= \Pr_{\substack{\db \gets D^n \\ p\gets\adv(M(\db))}}[\iso{p}{\db} \land p \in P] \\
	&= \Pr_{\substack{\db \gets D^n \\ p\gets\adv(M\circ\sigma(\db))}}[\iso{p}{\sigma(\db)} \land p \in P] \\
	&= \Pr_{\substack{\db \gets D^n \\ p\gets\adv(M\circ\sigma(\db))}}[\iso{p}{\db} \land p \in P] \\
	&=\Succ_P^{\adv,M\circ\sigma}(n)
  \tag*{\qedhere}
	\end{align*}
\end{proof}

\begin{proof}[Proof of Theorem~\ref{thm:mech-one-bit}]
	Consider $M_1$ that on input $\db$ samples a random permutation $\sigma$ and returns $M_q\circ \sigma(\db)$. By the Permutation Proposition, $\Succ_P^{\A,M_1}(n) = \Succ_P^{\A,M_q}(n)$.
	Next, consider the randomized algorithm $F$ that on input $m\in [n]$ outputs a uniformly random bitstring $y \in \zo^n$ of Hamming weight $m$. By Postprocessing and the security of $M_{\#q}$, the mechanism $M_2 = F\circ M_{\#q}$ is PSO secure.

	$M_1$ and $M_2$ are the same mechanism: on every input $\db$, the output distributions are identical. Therefore $M_q$ is PSO secure.
\end{proof}

\subsection{Failure to Compose}
\label{sec:failureToCompose}

\subsubsection{Failure to compose $\omega(\log n)$ times}
\label{sec:failcomposelog}

The security of a single count (Corollary~\ref{thm:count-bit}) easily extends to $O(\log n)$-many counts (even adaptively chosen), as the size of the codomain grows polynomially. However, our next theorem states that a fixed set of $\omega(\log(n))$ counts suffices to predicate single out with probability close to $e^{-1}$ (which can be amplified to $1-\negl(n)$).

\begin{theorem}
	\label{thm:count-composition-uniform}
  For a collection of predicates $Q = (q_0,\dots,q_m)$, let $M_{\#Q}(\db) \triangleq (M_{\#q_0}(\db),\dots,M_{\#q_m}(\db))$.
  Let $\univ = \zo^m$ and $D = U_m$ the uniform distribution over $\univ$.
  There exists $Q$ and an adversary $\A$ such that
  $$\Succ^{\A,M_{\#Q}}_{\le2^{-m}}(n) \ge B(n,1/n) - \negl(n). $$
\end{theorem}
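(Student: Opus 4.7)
My plan is to exhibit $Q$ and $\A$ explicitly and then verify success. First observe that under $D = U_m$, any predicate of weight at most $2^{-m}$ is either identically zero or a singleton $p(x) = \1[x = s]$ for a unique $s \in \zo^m$, so the adversary's job reduces to outputting a single string $s \in \zo^m$. Moreover, since $m = \omega(\log n)$ gives $n^2/2^m = \negl(n)$, all $n$ rows of $\db$ are distinct with probability $1-\negl(n)$, so $s \in \db$ is essentially equivalent to $\iso{\1[x=s]}{\db}$.

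The construction is a ``target set plus bit-extraction'' design. Fix $T \subseteq \zo^m$ of size $|T| = \lfloor 2^m/n \rfloor$ and a fixed injection $\phi : T \to \zo^k$ with $k = \lceil \log |T|\rceil \le m$. Set $q_0(x) = \1[x \in T]$, and for each $i \in [k]$ set $q_i(x) = \1[x \in T]\cdot\1[\phi(x)_i = 1]$; pad the remaining $m - k$ slots of $Q$ with the constantly-zero predicate so that $|Q| = m+1$. The adversary $\A$, on input $(c_0, c_1, \ldots, c_m) = M_{\#Q}(\db)$, outputs the singleton predicate $p(x) = \1[x = \phi^{-1}(c_1, \ldots, c_k)]$ whenever $c_0 = 1$ and the constantly-false predicate otherwise. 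In either case the output has weight at most $2^{-m}$ under $D$.

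The analysis is essentially immediate. On the event $\{c_0 = 1\}$ there is a unique row $x^* \in \db \cap T$; since $q_i$ for $i \ge 1$ is supported inside $T$, only $x^*$ contributes to $c_i$, so $c_i = \phi(x^*)_i \in \{0,1\}$ and therefore $\phi^{-1}(c_1, \ldots, c_k) = x^*$. Any other row equal to $x^*$ would itself lie in $T$, contradicting $c_0 = 1$, so $p$ does isolate $x^*$ in $\db$. Consequently
\[\Succ^{\A, M_{\#Q}}_{\le 2^{-m}}(n) \;\ge\; \Pr[c_0 = 1] \;=\; B(n, |T|/2^m).\]

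To conclude I Taylor-expand $B(n, \cdot)$ around its unique maximum at $w = 1/n$: a direct calculation gives $\partial B/\partial w|_{w=1/n} = 0$ and $|\partial^2 B/\partial w^2|_{w=1/n}| = O(n^2)$, so $B(n, 1/n + \eps) \ge B(n, 1/n) - O(n^2 \eps^2)$ for small $\eps$. Taking $\eps = |T|/2^m - 1/n = O(1/2^m)$ yields $B(n, |T|/2^m) \ge B(n, 1/n) - O(n^2/2^{2m}) = B(n, 1/n) - \negl(n)$, since $m = \omega(\log n)$. The conceptual takeaway, and the main point of the design, is that although each $M_{\#q_i}$ is PSO-secure on its own by Corollary~\ref{thm:count-bit}, forcing the support of every $q_i$ into the rare event $\{x \in T\}$ makes each count reveal a single bit of the lone survivor $x^*$ on the good event $c_0 = 1$; composing $\omega(\log n)$ such counts thus leaks $x^*$ completely.
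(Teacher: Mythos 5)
Your proof is correct and takes essentially the same approach as the paper's: reserve one count for a ``target set'' indicator $q_0 = \1[x \in T]$ with $|T| \approx 2^m/n$, use the remaining counts (restricted to the support of $q_0$) to reveal the identity of the unique survivor on the event $c_0 = 1$, and observe that the success probability is $\Pr[c_0 = 1] = B(n, |T|/2^m) \approx B(n, 1/n)$. The paper takes $T$ to be an initial interval and $\phi$ to be the identity map on the $m$ bits (so $q_i = q_0 \land \row[i]$) rather than an abstract injection to $\lceil \log|T|\rceil$ bits, and bounds $|B(n,1/n) - B(n, |T|/2^m)|$ directly rather than via a Taylor expansion at the critical point; these are cosmetic differences.
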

\noindent
Choosing $m = \omega(\log(n))$ yields $2^{-m} = \negl(n)$.

\begin{proof}
Let $q_0$ be any predicate such that $\weight_{U_m}(q_0) \le 1/n$ such that $\Pr_{\db \gets U_m^n}[\iso{q_0}{\db}] \ge B(n,1/n) - \negl(n).$
For instance, $q_0(\row) = 1$ iff $\row < 2^m/n$ (where in the last inequality we treat $\row$ as a number written in binary).\footnote{Or use Claim~\ref{baseline-lower-bound} with $\w_\low(n) = 1/n$, and Remark~\ref{remark:baseline-derandomized}.}

For $i \in \{1,\dots,m\}$, define the predicate $q_i(\row) \triangleq (q_0(\row) \land \row[i])$, and let $y_i = M_{\#q_i}(\db)$.
Consider the deterministic adversary $\A$ that on input $M_{\#Q}(\db) = (y_0,\dots,y_m)$ outputs the predicate
$$p(\row) = q_0(\row) \land \left(\bigwedge_{i = 1}^m \bigl(\row[i] = y_i\bigr)\right).$$
Observe that $\iso{q_0}{\db}\implies\iso{p}{\db}$ and that by  construction $\weight_{U_m}(p) = 2^{-m}$. Thus
\begin{align*}
  \Succ_{\le2^{-m}}^{\A,M_{\#Q}}(n)
  & = \Pr_{\substack{\db\gets U_m^n \\ p\gets \A(M_{\#Q}(\db))}}[\iso{p}{\db}] \\
  & \ge \Pr_{\substack{\db\gets U_m^n \\ p\gets \A(M_{\#Q}(\db))}}[\iso{q_0}{\db}]\\
  & \ge B(n,1/n) - \negl(n)
  \tag*{\qedhere}
\end{align*}
\end{proof}

\begin{remark}
When the attack succeeds, all the predicates $q_i$ match 0 or 1 rows in $\db$. It may seem that an easy way to counter the attack is by masking low counts, a common measure taken e.g., in contingency tables. However, it is easy to modify the attack to only use predicates matching $\Theta(n)$ rows using one extra query. This means that restricting the mechanism to suppress low counts cannot prevent this type of attack.
Let $q^*$ be a predicate with $\weight_{U_m}(q^*) = 1/2$ (e.g., parity of the bits), and let $q_i^* = q_i \lor q^*$.  The attack succeeds whenever $q^*(\db) = q_0^*(\db) + 1$. If $q^*(\row)$ and $q_0(\row)$ are independent, then this occurs with probability at least $\frac{1}{2}\cdot B(n,1/n) - \negl(n)$. As before, the probability can be amplified to $1-\negl(n)$.
\end{remark}

While a single count is PSO secure for \emph{any} data distribution, the above attack against $\omega(\log(n))$ counts applies only to the uniform distribution $U_m$.
Using the Leftover Hash Lemma, we can generically extend the attack to general distributions $D$ with moderate min-entropy, at the cost of randomizing the attacked mechanism (i.e., set of counts).
Informally, we hash the data to a smaller domain where its image will be almost uniformly distributed, and adapt the attack appropriately. See Appendix~\ref{app:composition} for details.

Theorem~\ref{thm:count-composition-uniform} can be extended to the predicate mechanism $M_Q$; this follows from the observation that $M_{\#Q}$ can be implemented by postprocessing $M_Q$.
But in fact a much stronger attack is possible.
\begin{claim}
	\label{claim:leakage-composition-uniform}
  For a collection of predicates $Q = (q_1,\dots,q_m)$, let $M_{Q}(\db) \triangleq (M_{q_1}(\db),\dots,M_{q_m}(\db))$.
  Let $\univ = \zo^m$ and $D = U_m$ the uniform distribution over $\univ$.
  For $m = \omega(\log(n))$, there exists $Q$ and an adversary $\A$ such that $\A$ fully predicate singles out against $M_Q$ and $D$.
\end{claim}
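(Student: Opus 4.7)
The plan is to exhibit a trivial attack by choosing $Q$ so that $M_Q(\db)$ actually reveals $\db$ in its entirety, and then let the adversary define the indicator predicate of each observed row. Concretely, I would let $Q=(q_1,\dots,q_m)$ be the $m$ coordinate predicates $q_i(\row)=\row[i]$. Then $M_Q(\db)$ is precisely the tuple of length $m$ whose $i$-th entry is $(\row_1[i],\dots,\row_n[i])$; reading this tuple column-by-column recovers each row $\row_j \in \zo^m$ exactly.

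Given $M_Q(\db)$, the adversary $\A$ recovers $\row_1,\dots,\row_n$ as above and outputs the family of predicates $p_j(\row) \triangleq \indic{\row = \row_j}$ for $j=1,\dots,n$. Each $p_j$ has weight exactly $2^{-m}$ under $U_m$, which is $\negl(n)$ since $m=\omega(\log n)$. By construction $p_j(\row_j)=1$, so $p_j$ matches at least row $j$. Moreover for any $j\neq k$ the supports of $p_j$ and $p_k$ (as functions on $\univ$) are disjoint whenever $\row_j\neq \row_k$, giving simultaneously that $p_j$ isolates $\row_j$ in $\db$ (no other row equals $\row_j$) and that $(p_j\wedge p_k)(\db)=0$.

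It remains to bound the probability that all the $\row_j$'s are distinct. Under $D=U_m$, a standard birthday/union bound yields
\[
\Pr_{\db\gets U_m^n}[\exists\, j\neq k:\ \row_j=\row_k]\ \le\ \binom{n}{2}\cdot 2^{-m}\ \le\ n^{2-m/\log n},
\]
and since $m/\log n\to\infty$ this quantity is $\negl(n)$. Conditioning on the complementary event, $\A$'s output satisfies every clause of Definition~\ref{def:blatant} deterministically, so $\A$ fully singles out against $M_Q$ with probability at least $1-\negl(n)$.

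There is no real obstacle: the strength of $M_Q$ over $M_{\#Q}$ is exactly that row-wise evaluations of the $q_i$'s give us the identity of each row up to the resolution $2^{-m}$, and choosing the coordinate predicates makes this resolution maximal. The only mild subtlety is that ``fully predicate singles out'' demands three conditions at once (isolation, negligible weight, and disjoint matches on $\db$); all three are packaged into the single event that the rows are distinct, which holds with overwhelming probability under the min-entropy $m=\omega(\log n)$ of $U_m$.
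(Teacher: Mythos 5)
Your proposal is correct and matches the paper's proof: both take $Q$ to be the $m$ coordinate predicates $q_i(\row)=\row[i]$, observe that $M_Q$ then reveals every row verbatim, and have the adversary output the point-indicator predicates $p_j=\indic{\row=\row_j}$, which simultaneously isolate, have weight $2^{-m}=\negl(n)$, and are pairwise disjoint on $\db$ once the rows are distinct (a birthday-bound event of probability $1-\negl(n)$). The paper only gives a brief outline; your write-up simply fills in the routine probability bound.
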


\begin{proof}[Proof Outline]
For $i \in [m]$, define the predicate $q_i(\row) = \row[i]$, the $i$th bit of $\row$. Let $Q_\bits = (q_1,\dots,q_m)$.
For each row $j\in[n]$ and column $i\in[m]$, $M_{Q_\bits}(\db)$ outputs the bit $\row_i[j]$.
The adversary outputs the collection of predicates $\{p_j\}_{j\in[n]}$ where $$p_j(x) = \bigwedge_{i=1}^m \bigl(\row[i] = \row_j[i]\bigr).\qedhere$$
\end{proof}


\subsubsection{Failure to compose twice}
\label{sec:failcomposetwice}

\renewcommand{\l}{{\frac{n}{2}}}

Borrowing ideas from~\cite{NSSSU18}, we construct two mechanisms $M_\ext$ and $M_\enc$ which are individually secure against singling out (for arbitrary distributions), but which together allow an adversary to single out with high probability when the data is uniformly distributed over the universe $\univ = \zo^m$.
With more work, the composition attack can be extended to more general universes and to distributions with sufficient min-entropy.

We divide the input dataset into three parts: a source of randomness $\db_\ext = (\row_1,\ldots,\row_{\l})$, a message $\msg$, and a holdout set $\db_\hold = (\row_{\l+1},\ldots,\row_{n-1})$ used in the proof.
$M_\ext(\db)$ outputs an encryption secret key $\sk$ based on the rows in $\db_\ext$, using the von Neumann extractor.
\begin{mechanism}[H]
		\caption{$M_\ext$}
		\SetKw{KwBy}{by}
		\SetKwInOut{Input}{input}
		\Input{$\db$}
		\BlankLine
		$\sk \gets \emptyset$, the empty string\;
		\For{$i \gets 1$ \KwTo $\l$ \KwBy $2$}{
			\uIf{$\lsb(\row_i) = 0 \land \lsb(\row_{i+1}) = 1$}{$\sk \gets s\|0$}
			\uIf{$\lsb(\row_i) = 1 \land \lsb(\row_{i+1}) = 0$}{$\sk \gets s\|1$}
		}
		\uIf{$|\sk| \ge m$}{return $\sk[1:m]$, the first $m$ bits of $\sk$}
		\uElse{return $\bot$}
\end{mechanism}

$M_\enc(\db)$ runs $\sk\gets M_\ext$. If $\sk \neq \bot$, it outputs $\sk\oplus \msg$ (using $\sk$ as a one-time pad to encrypt $\msg$); otherwise, it outputs $\bot$.
Alone, neither $\sk$ nor $\sk\oplus \msg$ allows the adversary to single out,
but using both an adversary can recover $\msg$ and thereby single it out.

\begin{theorem}
	\label{thm:two-composition}
	$M_\ext$ and $M_\enc$ are secure against predicate singling out (Definition~\ref{def:security-against-singling-out}). For $m = \omega(\log(n))$ and $m\le n/8$, $\univ =\zo^m$, and $D = U_m$ the uniform distribution over $\univ$, there exists an adversary $\A$ such that
	$$\Succ_{\le2^{-m}}^{\A,M_\ee}(n) \ge 1- \negl(n),$$
	where $M_\ee = (M_\ext,M_\enc)$.
\end{theorem}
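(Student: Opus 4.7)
The plan is to prove the theorem in three parts: PSO security of $M_\ext$, PSO security of $M_\enc$, and the singling-out attack against $M_\ee = (M_\ext, M_\enc)$.

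For the attack, I would construct the adversary $\A$ that, given $(\sk, c) = M_\ee(\db)$, outputs a dummy predicate if $\sk = \bot$, and otherwise computes $\hat{\row} = c \oplus \sk$ and outputs the predicate $p(\row) = \indic{\row = \hat{\row}}$. When $\sk \neq \bot$ we have $\hat{\row} = \row_n$, and $p$ has weight exactly $2^{-m}$ under $D = U_m$, so it lies in $\low$ for $w_\low(n) = 2^{-m} = \negl(n)$. Conditional on $\sk \neq \bot$, $\iso{p}{\db}$ fails only if some $\row_i$ with $i < n$ equals $\row_n$; a union bound gives probability at most $(n-1)\, 2^{-m} = \negl(n)$ since $m = \omega(\log n)$. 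It remains to bound $\Pr[\sk = \bot]$: the $n/2$ rows of $\db_\ext$ form $n/4$ pairs, each of which independently produces an extracted bit with probability $1/2$ (the LSBs under $U_m$ are independent uniform), so the expected number of extracted bits is $n/8$, and a Chernoff bound under the hypothesis $m \le n/8$ gives $\Pr[\sk = \bot] = \negl(n)$.

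The PSO security of $M_\ext$ follows by observing that $M_\ext$ is a deterministic function of $(\lsb(\row_1), \ldots, \lsb(\row_{n/2}))$, and hence a postprocessing of the predicate mechanism $M_q$ with $q = \lsb$. Theorem~\ref{thm:mech-one-bit} establishes that $M_q$ is PSO secure for every $q$, and closure under postprocessing (Lemma~\ref{lemma:postprocessing}) transfers the guarantee to $M_\ext$.

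For the PSO security of $M_\enc$, the key observation is that the von Neumann extractor produces uniform bits regardless of the bias of its independent inputs, so $\sk$ is uniformly distributed over $\zo^m$ conditional on $\sk \neq \bot$. Since $\sk$ depends only on $\db_\ext$ it is independent of $\row_n$, and hence $y = M_\enc(\db) = \sk \oplus \row_n$ is a one-time-pad encryption of $\row_n$: its marginal is uniform over $\zo^m$ (conditional on $\sk \neq \bot$), and it is marginally independent of $\row_n$. I would then bound $\Pr[\iso{p}{\db} \mid y]$ for any fixed $y$ and predicate $p$ of weight $w$. Conditional on $y$, $\db_\hold$ is iid $D$ and independent of everything else, $\db_\ext$ is iid $D$, and $\row_n$ is forced to equal $\sk(\db_\ext) \oplus y$ (with marginal still $D$). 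Decomposing $\Pr[\iso{p}{\db} \mid y] = \sum_i \Pr[A_i \mid y]$ into the events $A_i$ that row $i$ uniquely satisfies $p$, and using correlation bounds to handle the dependency between $\db_\ext$ and $\row_n$ induced by the conditioning, yields $\Pr[\iso{p}{\db} \mid y] \le O(B(n,w))$. Integrating over $y$ gives success at most a constant times the baseline, which is enough for PSO security (the constant absorbed into $\delta$). The hardest step is this conditional analysis: after conditioning on $y$, $\row_n$ becomes a deterministic function of $\db_\ext$, so the naive independence across rows fails; the saving grace is that this dependency is mediated only by the $n/2$ LSBs of $\db_\ext$, so crude monotonicity bounds, e.g.\ $\Pr[\forall j \in \db_\ext\!: p(\row_j) = 0 \mid y] \le (1-w)^{n/2}$, suffice to recover the baseline up to a constant factor.
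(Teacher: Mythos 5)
Your overall plan (attack on $M_\ee$, security of $M_\ext$ via postprocessing of the predicate mechanism, security of $M_\enc$) matches the paper, and the first two pieces are essentially identical to the paper's proof. For the security of $M_\enc$, however, you take a genuinely different route. The paper reduces the PSO security of $M_\enc$ to that of $M_\ext$: it splits the low-weight success into $\gamma_\low^0$ (the output predicate rejects $\row_n$) and $\gamma_\low^1$ (it accepts $\row_n$), bounds $\gamma_\low^1$ by the one-time-pad guarantee directly, and handles $\gamma_\low^0$ by constructing an adversary $\B$ against $M_\ext$ that samples a \emph{fresh} $\msg'\sim D$ and simulates $\A(\sk\oplus\msg')$; the high-weight regime is dispatched separately using the holdout set $\db_\hold$, which is independent of the output. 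Your proposal instead conditions on the output ciphertext $y$ and directly decomposes $\Pr[\iso{p}{\db}\mid y]$ across which row uniquely matches, handling all weight regimes at once. Each approach has something to recommend it: the paper's reduction is cleaner, modular, and automatically inherits the earlier $M_\ext$ analysis, at the cost of the mild assumption that $\B$ can sample from $D$ (noted in a footnote); your direct analysis avoids that assumption and treats both weight regimes uniformly, at the cost of more delicate bookkeeping about the conditional law of $\db$ given $y$.

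That said, your sketch of the $M_\enc$ step is where I'd want more. Two things in particular. First, the statement that ``conditional on $y$, $\db_\ext$ is iid $D$'' is not quite right: conditioning on observing any non-$\bot$ output conditions $\db_\ext$ on the event $\sk(\db_\ext)\neq\bot$, which correlates the LSBs. This is repairable with a factor of $1/\Pr[\sk\neq\bot]=1+\negl(n)$ throughout, but it has to be said. Second, the one monotonicity bound you exhibit, $\Pr[\forall j\in\db_\ext\colon p(\row_j)=0\mid y]\le(1-w)^{n/2}$, only controls the case where the unique match lies in $\db_\hold$; the case where the unique match is $\row_n$ itself is exactly where the one-time-pad dependency bites ($\row_n=\sk(\db_\ext)\oplus y$ is a \emph{deterministic} function of $\db_\ext$ after conditioning), and it needs a separate argument, e.g.\ $\Pr[p(\row_n)=1\land p(\db_\ext)=0\mid y]\le\Pr[p(\row_n)=1\mid y]=w$ combined with the factor $(1-w)^{|\db_\hold|}$ from the holdout. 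These can be filled in, but as written the sketch leaves the hardest sub-case implicit; the paper's footnote about not being able to simply replace $\msg$ by a fixed $\msg'$ is precisely a warning about this kind of post-conditioning subtlety. A minor shared loose end: you (correctly) compute the expected number of extracted bits as $n/8$, yet then invoke Chernoff at $m\le n/8$, which is at the mean and gives no concentration; the hypothesis really should be $m$ bounded strictly below $n/8$ by a constant factor (the paper has the mirror-image slip, claiming $n/4$ expected bits).
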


\begin{proof}
Let $\db = (\db_\ext, \db_\hold,\row_n)$ as described above.

\begin{proof}[Security of $M_\ext$.]
This is a special case of the security of the predicate mechanism $M_q$ (Theorem~\ref{thm:mech-one-bit}) and post-processing, with $q = \lsb$.
\footnotemark

\footnotetext{\label{foot:uniformity-1}
The security of $M_\ext$ does not follow from the mere fact that its output is nearly uniform. For example, the mechanism that outputs $\row_1$ may be uniform, but it trivially allows singling out. Security would follow if the output was nearly uniform \emph{conditioned} on $\db$.}

In fact, $M_\ext$ is even $(\ln(2),0,1/n,1/n)$-PSO secure. We provide a brief outline of the proof. Consider a related mechanism $M_\ext^\top$ that outputs $\top$ if $|\sk| \ge m$ and $\bot$ otherwise.
By Proposition~\ref{lemma:mech-small-codomain}, $M_\ext^\top$ is $(\ln(2),0,1/n,1/n)$-PSO secure. The security of $M_\ext$ can be reduced to that of $M_\ext^\top$ using a generalization of Proposition~\ref{lemma:permutation} to distributions of permutations.
\end{proof}

\begin{proof}[Security of $M_\enc$.]
For $\A$, $w_\low(n) < \negl(n)$, and $w_\high(n) =\omega(\log(n)/n)$,  let
$$	\gamma_\low = \Succ_{\le\w_\low}^{\A,M_\enc}(n)
\quad\mbox{and}\quad
	\gamma_\high = \Succ_{\ge\w_\high}^{\A,M_\enc}(n). $$
We must show that $\gamma_\low, \gamma_\high < \negl(n)$. It is easy to bound $\gamma_\high$ using the holdout set $\db_\hold$, which is independent of the output $M_\enc$:
$$\gamma_\high
\le \Pr_{\db,M_\enc,\A}[p(\db_\hold) \le 1 \mid \weight_{D}(p) \ge \w_\high]
= (1-\w_\high)^{n-m-2}
= o(1-\log(n)/n)^{\Omega(n)} = \negl(n).$$

\noindent
To bound $\gamma_\low$, we consider the two possible values of $p(\row_n)$. Write $\gamma_\low = \gamma_\low^0 + \gamma_\low^1$ where
\begin{align*}
\gamma_\low^b
\triangleq \Pr\left[\iso{p}{\db} \land \weight_{D}(p) \le w_\low \land p(\row_n) = b \right]
\end{align*}

If $\A$ singles out and $p(\row_n) = 1$, then the it must have gleaned information about $\row_n$ from the ciphertext $\sk \oplus \row_n$, which should be impossible.
The von Neumann extractor guarantees that either $\sk = \bot$ or $\sk$ is uniformly distributed in $\zo^\l$. Either way, the output of $M_\enc(\db)$ is information-theoretically independent of $\msg$. Therefore
$$\gamma_\low^1 \le \Pr_{\db,M_\enc,\adv}[p(\msg) = 1 \mid \weight_{D}(p) \le \w_\low] \le w_\low=\negl(n).$$

If $\A$ singles out and $p(\row_n) = 0$, then it is effectively singling out against the sub-dataset $\db_{-n} = (\row_1,\dots,\row_{n-1})$.
That is
\begin{align*}
\gamma_\low^0
&=\Pr_{\db,M_\enc,\adv}[\iso{p}{\db} \land \weight_{D}(p) \le w_\low \land p(\msg) = 0] \\
&= \Pr_{\db,M_\enc,\adv}[\iso{p}{\db_{-n}} \land \weight_{D}(p) \le w_\low \land p(\msg) = 0]
\end{align*}
We construct $\B$ that tries to single out against mechanism $M_\ext$ using $\adv$.
We assume that $\B$ can sample from $D$.\footnotemark\
On input $\sk$, $\B$ samples $\msg'\sim D$ and runs $p\gets\adv(\sk\oplus\msg')$.
\begin{align*}
\Succ_{\le\w_\low}^{\B,M_\ext}(n)
&\ge \Pr\left[\iso{p}{\db_{-n}} \land \weight_{D}(p) \le w_\low \land p(\row_n') = 0 \land p(\row_n) = 0\right] \\
&\ge \Pr\left[\iso{p}{\db_{-n}} \land \weight_{D}(p) \le w_\low \land p(\row_n') = 0\right]\cdot \Pr[p(\row_n) = 0 \mid \weight_{D}(p) \le \w_\low] \\
&\ge \gamma_\low^0 \cdot (1 - \w_\low) \\
&\ge \gamma_\low^0 \cdot (1 - \negl(n))
\end{align*}
Therefore $\gamma_\low^0$ is negligible.

\footnotetext{\label{foot:uniformity-2}
	It is tempting to try to remove this assumption by picking $\msg'$ arbitrarily, say $\msg' = 0^m$. Because $\sk$ is uniform, the ciphertexts $\sk\oplus \msg$ and $\sk\oplus \msg'$ are identically distributed and perfectly indistinguishable. This intuition is misleading (see also footnote~\ref{foot:uniformity-1}).}
\end{proof}

\begin{proof}[Insecurity of $M_\ee$ for $D = U_m$.]
The output of $M_\ee(\db)$ is a pair $(\sk, \ct)$.
If $(\sk,\ct) = (\bot, \bot)$, $\A$ aborts.
The for-loop in $M_\ext$ extracts $n/4$ uniform bits in expectation. By a Chernoff Bound, for $m\le n/8$, $\Pr_\db[\sk = \bot] \le e^{-n/16} = \negl(n)$.

If $(\sk,\ct) \neq (\bot, \bot)$,
$\A$ recovers $\msg = \ct\oplus \sk$ and outputs the predicate
$$p(\row) = \bigl(\row = \msg\bigr).$$
By the choice of $m = \omega(\log(n))$,
$\weight_{U_m}(p) = 2^{-m} < \negl(n)$.
$\Pr[\iso{p}{\db} \mid \sk \neq \bot] = 1 - \Pr[\exists j\neq n: \row_j =\row_n] = 1 - n\cdot 2^{-m} > 1- \negl(n)$.
The bound on $\Succ_{\le2^{-m}}^{\A,M_\ee}$ follows, completing the proof of the claim and the theorem.
\end{proof}
\end{proof}

\subsubsection{Singling out and failure to compose}

The failure to compose demonstrated in Section~\ref{sec:failcomposelog} capitalizes on the use of multiple counting queries. Such queries underlie a large variety of statistical analyses and machine learning algorithms. We expect that other attempts to formalize security against singling out would also allow counting queries. If so, our negative composition results may generalize beyond the notion of PSO security.

The failure to compose demonstrated in Section~\ref{sec:failcomposetwice} is more contrived. We expect that other attempts to formalize security against singling out would allow mechanisms like $M_\ext$, where the output is uniform even conditioned on the input.
It is less clear to us whether a mechanism like $M_\enc$ would be allowed under other possible formalizations of security against singling out. If an alternate formalization is to compose, it likely must forbid $M_\enc$.


\section{Differential Privacy, generalization and PSO security}
\label{sec:generalization}

\subsection{Preliminaries from differential privacy}

For $\db, \db' \in \univ^n$, we write $\db\sim\db'$ if the two datasets differ on exactly one element $\row_i$.

\begin{definition}[Differential Privacy~\cite{DMNS06, dwork2006our}] A randomized mechanism $M: \univ^n \rightarrow T$ is $(\epsilon,\delta)$-differentially private if for all $\db\sim\db'\in \univ^n$ and for all events $S\subseteq T$,
$$\Pr[M(\db)\in S] \leq e^\epsilon \Pr[M(\db')\in S] + \delta,$$ where the probability is taken over the randomness of the mechanism $M$.
\end{definition}

\begin{lemma}[Basic and Parallel Composition~\cite{dwork2006our, mcsherryPINQ}]
	Let $M$ $(\epsilon,\delta)$-differentially private and $M'$ $(\epsilon',\delta')$-differentially private. The mechanism $M' \circ M:\db \mapsto M'(M(\db),\db)$ is $(\epsilon + \epsilon', \delta+\delta')$-differentially private.
	Let $(\db_1,\dots,\db_\ell)$ be a partition of $\db$ into disjoint datasets.
	The mechanism $M^\ell:(\db_1,\dots,\db_\ell) \mapsto (M(\db_1),\dots,M(\db_\ell))$ is $(\epsilon,\delta)$-differentially private.
\end{lemma}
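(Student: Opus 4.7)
For parallel composition, the argument is immediate. If $\db$ and $\db'$ are neighboring datasets partitioned into disjoint $(\db_1, \ldots, \db_\ell)$ and $(\db_1', \ldots, \db_\ell')$, then the single differing row lies in exactly one block index $i$, and $\db_j = \db_j'$ for every $j \neq i$. Since the mechanism's randomness is independent across blocks, the joint distributions $(M(\db_1), \ldots, M(\db_\ell))$ and $(M(\db_1'), \ldots, M(\db_\ell'))$ agree in every coordinate except the $i$-th. A slice-by-slice application of the single-mechanism $(\epsilon, \delta)$-DP guarantee to $\db_i \sim \db_i'$ transfers to any event $S$ in the product output space, yielding $(\epsilon, \delta)$-DP with no accumulation.

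For sequential (adaptive) composition, I would condition on the first output. Fix neighboring $\db \sim \db'$ and an event $S$ in the joint output space; for each $t$ set $S|_t = \{s : (t,s) \in S\}$ and $h(t) = \Pr[M'(t, \db') \in S|_t] \in [0,1]$. Then
\begin{equation*}
\Pr\bigl[(M(\db),\, M'(M(\db), \db)) \in S\bigr] \;=\; \E_{t \sim M(\db)}\bigl[\Pr[M'(t, \db) \in S|_t]\bigr].
\end{equation*}
For each fixed $t$, applying $(\epsilon', \delta')$-DP of $M'(t, \cdot)$ on the neighboring pair $\db \sim \db'$ bounds the inner probability by $e^{\epsilon'} h(t) + \delta'$. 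To convert the outer expectation, write $h(t) = \int_0^1 \indic{h(t) > u}\, du$ and apply $(\epsilon, \delta)$-DP of $M$ to each threshold event, which gives $\E_{M(\db)}[h] \le e^{\epsilon}\, \E_{M(\db')}[h] + \delta$. Since $\E_{M(\db')}[h]$ equals $\Pr[(M(\db'), M'(M(\db'), \db')) \in S]$, the two steps chain to bound the target probability by $e^{\epsilon + \epsilon'} \Pr[(M' \circ M)(\db') \in S]$ plus an additive slack.

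The technical obstacle is the slack: the direct chaining above produces slack $\delta' + e^{\epsilon'} \delta$, whereas the statement claims the cleaner $\delta + \delta'$. To close this gap I would use the privacy-loss-random-variable viewpoint: $(\epsilon, \delta)$-DP of a mechanism is equivalent to the existence of a ``good'' event on the output of probability at least $1 - \delta$, on which the pointwise density ratio between the two neighboring output distributions is bounded by $e^{\epsilon}$. Under this reformulation, composing the two mechanisms becomes a union bound over two bad events (of mass at most $\delta$ and $\delta'$, respectively) together with a pointwise multiplication of density ratios on the good set, delivering $(\epsilon + \epsilon', \delta + \delta')$-DP cleanly. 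This reformulation step is where the real work lies; once it is available, the remainder is routine bookkeeping.
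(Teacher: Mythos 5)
The paper does not prove this lemma; it is cited as standard from~\cite{dwork2006our, mcsherryPINQ}, so there is no in-paper argument to compare against. Evaluating your proposal on its own terms: the parallel-composition argument is correct, and for basic composition you correctly spot that the naive chaining leaves slack $e^{\epsilon'}\delta + \delta'$ rather than $\delta + \delta'$. However, the fix you propose rests on a claim that is not true: $(\epsilon,\delta)$-DP is \emph{not} equivalent to the existence of a ``good'' output event of mass at least $1-\delta$ on which the pointwise density ratio is bounded by $e^\epsilon$. The good-event condition implies $(\epsilon,\delta)$-DP, but the converse is lossy. Concretely, on a two-point outcome space with $M(\db)$ assigning $(\tfrac{1}{2}+\tfrac{\delta}{2},\,\tfrac{1}{2}-\tfrac{\delta}{2})$ and $M(\db')$ assigning $(\tfrac{1}{2}-\tfrac{\delta}{2},\,\tfrac{1}{2}+\tfrac{\delta}{2})$, the mechanism is $(0,\delta)$-DP, yet no event of $M(\db)$-mass $\ge 1-\delta$ has pointwise ratio $\le 1$ on it; so building the composition proof on that equivalence would be building on sand.

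The gap can be closed with a much smaller adjustment to your own chaining, with no reformulation needed. After applying $M'$'s privacy you should clip: $\Pr[M'(t,\db)\in S|_t] \le \min\bigl(1,\, e^{\epsilon'}h(t)+\delta'\bigr) \le \min\bigl(1,\, e^{\epsilon'}h(t)\bigr) + \delta'$, pulling the additive $\delta'$ \emph{outside} before the outer expectation. Now set $f(t) = \min\bigl(1,\, e^{\epsilon'}h(t)\bigr) \in [0,1]$ and apply your layer-cake/threshold argument to $f$ (not to $h$): $\E_{M(\db)}[f] \le e^{\epsilon}\E_{M(\db')}[f] + \delta$. Since $f \le e^{\epsilon'}h$, this gives $\E_{M(\db)}[f] \le e^{\epsilon+\epsilon'}\E_{M(\db')}[h] + \delta$, and adding back the $\delta'$ yields exactly $e^{\epsilon+\epsilon'}\Pr[(M'\circ M)(\db')\in S] + \delta + \delta'$. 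The point is that the additive $\delta'$ must be extracted \emph{before} it can be multiplied by any $e^\epsilon$ factor, and the threshold argument must be applied to a $[0,1]$-valued quantity that already absorbs the $e^{\epsilon'}$ scaling.
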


\begin{theorem}[Exponential Mechanism~\cite{mcsherry2007mechanism}\label{thm:expmech}]
	For domain $\univ^n$ and outcome space $\mathcal{R}$, let $u:\univ^n \times \mathcal{R} \to \mathbb{R}$ be a utility function. The sensitivity of $u$ is $\Delta u = \max_{r\in\mathcal{R}} \max_{\db\sim\db'} |u(\db,r) - u(\db',r)|$.
	For a dataset $\db$, let $\opt_u(\db) = \max_{r\in\mathcal{R}} u(\db,r)$ and let $\mathcal{R}_{\opt} = \{r\in \mathcal{R} : u(\db,r) = \opt_u(\db)\}$.
	For any $\eps > 0$, there exists a mechanism $M_\mathsf{Exp}^{\eps}:\univ^n\times \mathcal{R}\times u \to \mathcal{R}$
	that is $(\eps,0)$-differentially private such that for all $\db$ and all $t>0$:
	$$\Pr\left[
		u(M_\mathsf{Exp}^{\eps}(\db,u,\mathcal{R}) \le \opt_u(\db) - \frac{ 2\Delta u}{\eps}
		\left( \ln \left( \frac{|\mathcal{R}|}{|\mathcal{R}_{\opt}|}
		\right) +t \right)
	\right] \le e^{-t}.$$
\end{theorem}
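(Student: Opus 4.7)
The plan is to instantiate $M_\mathsf{Exp}^{\eps}$ as the standard exponential mechanism: on input $(\db, u, \mathcal{R})$, sample $r \in \mathcal{R}$ with probability proportional to $\exp\bigl(\eps \cdot u(\db,r) / (2\Delta u)\bigr)$, and then verify the two claims separately by direct calculation on this closed-form sampling distribution.

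For differential privacy, I would fix neighbors $\db \sim \db'$ and an outcome $r$, and factor the ratio as
$$\frac{\Pr[M_\mathsf{Exp}^{\eps}(\db) = r]}{\Pr[M_\mathsf{Exp}^{\eps}(\db') = r]} = \exp\!\left(\frac{\eps \bigl(u(\db,r) - u(\db',r)\bigr)}{2\Delta u}\right) \cdot \frac{\sum_{r'\in\mathcal{R}} \exp\bigl(\eps u(\db',r') / (2\Delta u)\bigr)}{\sum_{r'\in\mathcal{R}} \exp\bigl(\eps u(\db, r')/(2\Delta u)\bigr)}.$$
The first factor is at most $e^{\eps/2}$ directly from $|u(\db,r) - u(\db',r)| \le \Delta u$, and the second factor is also at most $e^{\eps/2}$ by applying the same termwise sensitivity bound inside the two partition functions. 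Multiplying gives the $e^\eps$ bound on every single outcome, which extends to every event $S \subseteq \mathcal{R}$ by summation.

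For the utility bound, let $S_t$ denote the set of ``bad'' outcomes $r$ with $u(\db,r) \le \opt_u(\db) - \frac{2\Delta u}{\eps}\bigl(\ln(|\mathcal{R}|/|\mathcal{R}_{\opt}|) + t\bigr)$. Each such outcome contributes to the numerator of $\Pr[M_\mathsf{Exp}^{\eps}(\db) \in S_t]$ at most $\exp\bigl(\eps \opt_u(\db)/(2\Delta u)\bigr) \cdot (|\mathcal{R}_{\opt}|/|\mathcal{R}|) \cdot e^{-t}$, so summing over $S_t \subseteq \mathcal{R}$ yields an upper bound of $|\mathcal{R}_{\opt}| \cdot \exp\bigl(\eps \opt_u(\db)/(2\Delta u)\bigr) \cdot e^{-t}$. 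The denominator is lower bounded by restricting the partition function to $\mathcal{R}_{\opt}$, which contributes $|\mathcal{R}_{\opt}| \cdot \exp\bigl(\eps \opt_u(\db) / (2\Delta u)\bigr)$. Taking the ratio, the $|\mathcal{R}_{\opt}|$ and $\exp\bigl(\eps \opt_u(\db)/(2\Delta u)\bigr)$ factors cancel, leaving $e^{-t}$.

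The only genuine bookkeeping subtlety is the factor of $2$ in the denominator of the exponent: one power of $\eps/2$ is spent on the ratio of numerators in the privacy argument and another on the ratio of partition functions, so scaling the sampling weights by $\eps/(2\Delta u)$ rather than $\eps/\Delta u$ is precisely what lets the privacy bound land at $e^\eps$ instead of $e^{2\eps}$. The same factor then shows up in the tail bound. Beyond choosing this convention consistently, both parts are essentially immediate calculations and I do not expect any real obstacle.
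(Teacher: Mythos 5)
Your proof is correct and is the standard argument for the exponential mechanism. Note, however, that the paper does not prove this theorem---it states it as a citation to McSherry and Talwar~\cite{mcsherry2007mechanism} in the preliminaries of Section~\ref{sec:generalization}, to be used as a black box in the proof of Claim~\ref{claim:generalization-heavy}---so there is no in-paper proof to compare against. Your two-part decomposition (factoring the privacy ratio into an exponent-difference factor and a partition-function factor, each bounded by $e^{\eps/2}$; then bounding the bad-set probability by separately bounding numerator and denominator of the sampling distribution so that the $|\mathcal{R}_{\opt}|$ and $\exp(\eps\opt_u(\db)/2\Delta u)$ terms cancel) is exactly the argument in the original reference and in standard textbook treatments, and your remark about where each factor of $\eps/2$ is spent correctly identifies why the scaling $\eps/(2\Delta u)$ is the right normalization.
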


Our analysis of how PSO security relates to differential privacy is through a connection of both concepts to statistical generalization. For differential privacy, this connection was established in~\cite{DFHPRR15, BNSSSU16}. We will also use a variant of the latter result  from~\cite{NS2015}:\footnote{The proof of Equation~\ref{eq:generalization1} of Lemma~\ref{lemma:generalization} is identical to that of Lemma~3.3 in~\cite{NS2015}, skipping the last inequality in the proof. The proof of Equation~\ref{eq:generalization2} is analogous.}
\begin{lemma}[Generalization lemma]
	\label{lemma:generalization}
	Let $\adv:(\univ^n)^\ell \to 2^\univ \times [\ell]$ be an $(\epsilon, \delta)$-differentially private algorithm that operates on $\ell$ sub-datasets and outputs a predicate $p:\univ\to\zo$ and an index $i \in [\ell]$. Let $\vec{\db} = (\db_1,\dots,\db_\ell)$ where every $\db_i\sim D^n$ is a dataset containing i.i.d. elements from $D$, and let $(p,i) \gets \adv(\vec{\db})$. Then
	\begin{eqnarray}
\E_{\vec{\db} \sim (D^n)^\ell} \left[\E_{(p,i)\gets\adv(\vec{\db})}\left[p(\db_i)\right]\right]
& \le &
e^\epsilon\cdot\E_{\vec{\db} \sim (D^n)^\ell} \left[\E_{(p,i)\gets\adv(\vec{\db})}\left[\weight_{D}(p)\right]\right] + \ell\delta \label{eq:generalization1} \\
\E_{\vec{\db} \sim (D^n)^\ell} \left[\E_{(p,i)\gets\adv(\vec{\db})}\left[p(\db_i)\right]\right]
& \ge &
e^{-\epsilon}\left(\E_{\vec{\db} \sim (D^n)^\ell} \left[\E_{(p,i)\gets\adv(\vec{\db})}\left[\weight_{D}(p)\right]\right] - \ell\delta\right).  \label{eq:generalization2}
\end{eqnarray}

\end{lemma}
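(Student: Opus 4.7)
The plan is to follow the standard resampling argument connecting differential privacy and generalization, adapted to the multi-dataset setting with an adversary-chosen index. First, by linearity of expectation,
\[
\E_{\vec{\db}}\bigl[\E_{(p,i)\gets\adv(\vec{\db})}[p(\db_i)]\bigr] = \frac{1}{n}\sum_{j=1}^n \E_{\vec{\db}}\bigl[\E_{(p,i)\gets\adv(\vec{\db})}[p(x_{i,j})]\bigr],
\]
where $x_{i,j}$ denotes the $j$-th entry of $\db_i$. It thus suffices to establish the first inequality of the lemma separately for each coordinate $j$, with $p(x_{i,j})$ in place of $p(\db_i)$. I would then further decompose the inner expectation according to the adversary's chosen index, analyzing for each $i^* \in [\ell]$ the contribution from the event $\{i = i^*\}$.

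For fixed $i^*$ and $j$, introduce an independent ``swap'' sample $x' \sim D$ and let $\vec{\db}^{(i^*,j:x')}$ denote $\vec{\db}$ with entry $(i^*,j)$ replaced by $x'$. Since all entries are i.i.d.\ from $D$, the pairs $(\vec{\db}, x')$ and $(\vec{\db}^{(i^*,j:x')}, x_{i^*,j})$ are identically distributed. Crucially, $\vec{\db}$ and $\vec{\db}^{(i^*,j:x')}$ are neighboring, so the $(\epsilon,\delta)$-DP guarantee of $\adv$, applied pointwise to the $[0,1]$-valued output function $g(p,i) = p(x_{i^*,j}) \cdot \mathbf{1}[i=i^*]$, yields
\[
\E_{(p,i) \gets \adv(\vec{\db})}\bigl[p(x_{i^*,j}) \mathbf{1}[i=i^*]\bigr] \le e^\epsilon \E_{(p,i) \gets \adv(\vec{\db}^{(i^*,j:x')})}\bigl[p(x_{i^*,j}) \mathbf{1}[i=i^*]\bigr] + \delta.
\]
Taking outer expectations over $\vec{\db}$ and $x'$, the distributional identity rewrites the right-hand side as $e^\epsilon \E_{\vec{\db}}\bigl[\E_{(p,i)}[\weight_D(p) \mathbf{1}[i=i^*]]\bigr] + \delta$, since $x'$ is independent of $\adv(\vec{\db})$. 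Summing over $i^* \in [\ell]$ telescopes the indicators and produces the announced $\ell\delta$ slack; averaging over $j$ then closes the first inequality. The second inequality follows by the symmetric lower-bound form of DP applied in the opposite direction.

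The main subtlety to watch, and where I would be most careful, is aligning three sources of randomness: the entries of $\vec{\db}$ that feed into $\adv$, the ``test point'' at which the output predicate $p$ is evaluated, and the resampled swap $x'$. The swap is precisely what decouples the dual roles played by $x_{i^*,j}$---as both a database entry shaping $\adv$'s output and as a test point on which $p$ is evaluated---so that one may treat $x_{i^*,j}$ as a fixed bounded payload on which DP can be applied pointwise. The factor $\ell$ in the error term arises because the decomposition by $i^*$ entails $\ell$ independent invocations of the DP guarantee, one per candidate index, and is inherent since $\adv$ is free to target any sub-dataset. Finally, extending DP from indicator events to $[0,1]$-valued output functionals is routine via the layer-cake identity $\E[g(\adv(\db))] = \int_0^1 \Pr[g(\adv(\db)) \ge t]\,dt$ combined with the event-wise DP guarantee.
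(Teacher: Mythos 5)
Your proof is correct and follows the standard resampling argument for differential privacy implying generalization; this is precisely the argument underlying the cited result (the paper does not reproduce a proof, but rather invokes [NS2015, Lemma~3.3], whose proof is this same decompose-by-coordinate, decompose-by-index, swap-and-apply-DP chain). The one place you correctly flagged as delicate---that $g(p,i)=p(x_{i^*,j})\mathbf{1}[i=i^*]$ must be treated as a fixed function after conditioning on $x_{i^*,j}$ and $x'$, so DP applies pointwise before integrating out the swap---is indeed the heart of the argument, and you handle it properly.
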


\subsection{Differential privacy implies PSO security}

\begin{theorem}
	\label{thm:generalization}
For all $\eps = O(1)$, $\delta = \negl(n)$, $\w_\low\le 1/n$, and $\w_\high(n) = \omega(\log n / n)$, if $M$ is $(\eps,\delta)$-differentially private, then $M$ is $(\eps',\delta',w_\low,w_\high)$-PSO secure for
$$
\eps' = \eps + (n-1)\ln\left(\frac{1}{1-\w_\low}\right)
\quad\mbox{and}\quad
\delta' = \negl(n).
$$
\end{theorem}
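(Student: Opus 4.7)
The plan is to prove the two inequalities of Definition~\ref{def:security-against-singling-out} separately, both via Lemma~\ref{lemma:generalization}. Throughout, fix any adversary $\adv$ and distribution $D$, and view $\db = (\row_1,\ldots,\row_n)$ as $\ell = n$ sub-datasets of size one. Because $M$ is $(\eps,\delta)$-DP per row and $\adv$ is postprocessing, $\adv \circ M$ is $(\eps,\delta)$-DP with respect to each sub-dataset.

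For the low-weight direction, I would define an auxiliary algorithm $\adv_\low$ that runs $\adv \circ M$ and replaces its output $p$ by the constant-zero predicate whenever $\weight_D(p) > \w_\low$; this filtering is allowed since our adversaries know $D$. Applying Lemma~\ref{lemma:generalization} \eqref{eq:generalization1} to $\adv_\low$ gives
\[
\E\bigl[p(\db)\,\indic{p\in\low}\bigr] \;\le\; e^{\eps}\,\E\bigl[\weight_D(p)\,\indic{p\in\low}\bigr] + n\delta \;\le\; e^{\eps}\w_\low + n\delta.
\]
Since isolation forces $p(\db)\ge 1/n$, Markov gives $\Succ_{\le\w_\low}^{\adv,M}(n,D) \le n\,e^{\eps}\w_\low + n^{2}\delta$. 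Any $p\in\low$ has realizable weight at most $\w^*_\low \le \w_\low$, so $\base_D(n,\low)\ge n\,\w^*_\low(1-\w^*_\low)^{n-1}$; when $\w^*_\low > 0$ the ratio between success and baseline is at most $e^{\eps}/(1-\w_\low)^{n-1} = e^{\eps'}$, delivering the bound with $\delta' = n^{2}\delta = \negl(n)$. The edge case $\w^*_\low = 0$ forces $p\equiv 0$ on the support of $D$ and hence no isolation, so is trivial.

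For the high-weight direction the baseline is already negligible: for $\w_\high = \omega(\log n / n)$, Corollary~\ref{clm:baselineUpperBound} yields $\base_D(n,\high) \le n\,\w_\high(1-\w_\high)^{n-1} = \negl(n)$, so it suffices to prove $\Succ_{\ge\w_\high}^{\adv,M}(n,D) = \negl(n)$ outright. For a \emph{fixed} $p$ with $\weight_D(p)\ge\w_\high$, a multiplicative Chernoff bound gives $\Pr_{\db}[p(\db)\le 1/n]\le e^{-\Omega(n\w_\high)}=\negl(n)$, since $n\w_\high = \omega(\log n)$. To transfer this to the adaptive predicate produced by the DP algorithm $\adv\circ M$, I would invoke a high-probability DP-generalization strengthening of Lemma~\ref{lemma:generalization}, showing that $|p(\db) - \weight_D(p)| \le \w_\high/2$ except with negligible probability; this follows either from the concentration-style results of~\cite{BNSSSU16}, or from Lemma~\ref{lemma:generalization} itself by bounding higher moments of $p(\db)-\weight_D(p)$ through applications to replicated sub-datasets. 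Since $\w_\high/2 > 1/n$ for large $n$, the event $\{p(\db)=1/n\}\cap\{p\in\high\}$ then has negligible probability.

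The principal obstacle is this high-weight step: Lemma~\ref{lemma:generalization} as stated only controls expectations, whereas to rule out the rare event $p(\db)=1/n$ inside $\{p\in\high\}$ we need a genuine tail bound. Markov applied to $(1-p)(\db)$ is too weak (it only recovers bounds of order $\Pr[p\in\high]$), so the argument has to lean on a concentration result — either cited from the generalization-via-DP literature or derived by replicating the generalization argument across independent sub-datasets.
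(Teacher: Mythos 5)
Your low-weight argument is essentially the paper's (Claim~\ref{claim:generalization-low}): filter $p$ to the zero predicate when it is too heavy, bound $\Succ$ by Markov from the generalization lemma's expectation bound, and compare to the baseline. There is one slip: you bound $\E[\weight_D(p)\,\indic{p\in\low}]$ by $\w_\low$ and then compare against the baseline $B(n,\w^*_\low)=n\w^*_\low(1-\w^*_\low)^{n-1}$, so the ratio you need is $\w_\low/\bigl(\w^*_\low(1-\w^*_\low)^{n-1}\bigr)$, which can be far larger than $e^{\eps'-\eps}=(1-\w_\low)^{-(n-1)}$ whenever $\w^*_\low \ll \w_\low$. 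The fix is exactly what the paper does: any $p\in\low$ has realizable weight, so $\weight_D(p)\le\w^*_\low$, and you should carry $\w^*_\low$ (not $\w_\low$) through the numerator; the ratio then becomes $e^\eps/(1-\w^*_\low)^{n-1}\le e^{\eps'}$. (Also, the lemma with $\ell=1$ already gives the needed $\E[p(\db)]\le e^\eps\E[\weight_D(p)]+\delta$; your ``$n$ sub-datasets of size one'' framing is not what you actually use in this half.)

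The high-weight case is where the genuine gap lies, and you correctly diagnose it: Lemma~\ref{lemma:generalization} only controls an expectation, while you need to rule out the rare event $\{p(\db)=1/n\}\cap\{\weight_D(p)\ge\w_\high\}$, and Markov on $(1-p)(\db)$ is useless. You propose to import a high-probability DP-generalization theorem but leave it uninstantiated. The paper instead stays entirely inside the expectation-only lemma and converts ``rare success'' into ``expected success'' by a replication-and-selection construction (Claim~\ref{claim:generalization-heavy}): run $\adv\circ M$ independently on $\ell=O(\log n/\alpha)$ fresh sub-datasets, and use the exponential mechanism with utility $u(i)=-p_i(\db_i)$ restricted to the indices with $\weight_D(p_i)\ge\w_\high$ to privately pick an index $i^*$ where the adversary succeeded. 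The whole procedure $\B$ is $(2\eps,\delta)$-DP by composition; conditioned on some replica succeeding (probability $\ge 1-1/n$ by the choice of $\ell$), the exponential mechanism's accuracy guarantee forces $\E[p_{i^*}(\db_{i^*})] = O\bigl(\tfrac{\log n}{n}\bigr)$, while $\E[\weight_D(p_{i^*})] = \Omega(\w_\high)=\omega(\tfrac{\log n}{n})$. Plugging into the \emph{lower-bound} direction \eqref{eq:generalization2} of the lemma then forces $\alpha=\negl(n)$. Your parenthetical about ``replicated sub-datasets'' is the right instinct, but you would need to actually carry out something like this argument (or verify that a cited high-probability generalization bound applies to the adaptively chosen, possibly non-low-sensitivity predicate $p$ and to the isolation event $p(\db)=1/n$ specifically); as written, the high-weight half of the theorem is unproved.
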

\noindent
For $\w_\low = o(1/n)$, $\eps' = \eps + o(1)$.\footnote{For all $\w_\low\le 1/n$ and $n$, $\eps' < \eps + 1$ by the fact that $(1-w_\low)^{n-1} \ge (1-1/n)^{n-1} > e^{-1}$.}

\begin{proof}
The theorem consists of Claims~\ref{claim:generalization-low} and~\ref{claim:generalization-heavy}, each using one part of the generalization lemma.
That lemma holds even when the distribution $D$ is known, a fact used in both proofs.

\begin{claim}
	\label{claim:generalization-low}
	If $M$ is $(\epsilon,\delta)$-d.p., then for all $\adv$ and $\w_\low\in [0,1/n]$
	$$\Succ^{\A,M}_{\le\w_\low}(n) \le e^{\eps'} \cdot \base(n,\w_\low) + n\delta.$$
\end{claim}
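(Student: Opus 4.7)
The plan is to apply the generalization lemma (Lemma~\ref{lemma:generalization}) with $\ell = 1$ (treating the full dataset $\db$ as a single sub-dataset of size $n$) to a modified version of $\adv$, and then translate the resulting bound on expectations into one on the isolation probability.

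First, construct $\adv'$ which, on input $y$, runs $p\gets \adv(y)$ and returns $p$ if $\weight_D(p) \le \w_\low$ and the constant-zero predicate $\mathbf{0}$ otherwise; this is well-defined because $\adv$ has full knowledge of $D$. By post-processing of differential privacy, $\db \mapsto \adv'(M(\db))$ is $(\eps,\delta)$-DP. Since $\mathbf{0}$ never isolates a row, the event $\iso{p'}{\db}$ is equivalent to $\iso{p}{\db} \land \weight_D(p) \le \w_\low$, so $\Pr[\iso{p'}{\db}] = \Succ^{\A,M}_{\le \w_\low}(n)$; and since isolation forces $p'(\db) = 1/n$, we get $\E[p'(\db)] \ge (1/n) \Succ^{\A,M}_{\le \w_\low}(n)$.

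Next, invoke Lemma~\ref{lemma:generalization} with $\ell = 1$ on $\db \mapsto (\adv'(M(\db)), 1)$, which reduces to $\E[p'(\db)] \le e^\eps \E[\weight_D(p')] + \delta$ (since $p(\db_1) = p(\db)$ for $\db_1 = \db$). Combining with the lower bound above yields $\Succ^{\A,M}_{\le \w_\low}(n) \le n e^\eps \E[\weight_D(p')] + n\delta$.

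The hardest step --- and what allows the bound to match $\base(n,\w_\low)$ rather than the cruder upper bound $B(n,\w_\low)$ --- is to argue that $\E[\weight_D(p')] \le \w^*_\low$, where $\w^*_\low = \sup\{w\le \w_\low : w \text{ realizable under } D\}$. This holds because $p'$ is either $\mathbf{0}$ (weight $0$) or has realizable weight at most $\w_\low$, and every realizable weight $\le \w_\low$ is by definition $\le \w^*_\low$. Then by Claim~\ref{claim:baseline-exact} we have $\base(n,\w_\low) = B(n,\w^*_\low) = n\w^*_\low(1-\w^*_\low)^{n-1}$, so $n e^\eps \w^*_\low = e^\eps \base(n,\w_\low)/(1-\w^*_\low)^{n-1} \le e^{\eps'}\base(n,\w_\low)$, where the inequality uses $\w^*_\low \le \w_\low$ and the definition of $\eps'$. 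The claim follows.
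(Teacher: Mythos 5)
Your proof is correct and follows essentially the same route as the paper: define a truncated predicate (your $p'$, the paper's $p^*$) that is zeroed out when the weight exceeds $\w_\low$, note it remains $(\eps,\delta)$-DP by post-processing, apply the $\ell=1$ generalization lemma to bound $\E[p'(\db)]$, and convert via $\base(n,\w_\low)=B(n,\w^*_\low)$ together with $\w^*_\low \le \w_\low$ to absorb the $(1-\w^*_\low)^{-(n-1)}$ factor into $\eps'$. No meaningful divergence.
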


\begin{claim}
	\label{claim:generalization-heavy}
	For $\eps = O(1)$ and $\delta = \negl(n)$, if $M$ is $(\epsilon,\delta)$-d.p.,
	then for all $\adv$ and all $\w_\high = \omega(\log n /n)$,
	$$\alpha\triangleq\Succ^{\A,M}_{\ge\w_\high}(n) \le \negl(n).$$
\end{claim}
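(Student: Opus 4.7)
The plan is to use the generalization lemma to show that the adversary cannot simultaneously produce a predicate of large distributional weight and isolate a row of $\db$. Define the post-processed algorithm $\A^{\dagger}$ that computes $p \gets \A(M(\db))$ and outputs the predicate $q(\row) = p(\row)\cdot \mathbb{I}[\weight_D(p) \ge w_\high]$. Because $\weight_D(p)$ depends only on $p$ and the fixed distribution $D$---not on $\db$---this is a valid $(\eps,\delta)$-DP post-processing of $M$. Let $F = \{\weight_D(p) \ge w_\high\}$ and $\alpha = \Pr[E]$. Applying Equation~\eqref{eq:generalization2} to $\A^{\dagger}$ with $\ell=1$ gives
\[ \E[q(\db)] \;\ge\; e^{-\eps}\bigl(w_\high\Pr[F] - \delta\bigr), \]
and applying Equation~\eqref{eq:generalization1} to the complementary post-processing $\bar q(\row) = (1-p(\row))\cdot\mathbb{I}[\weight_D(p) \ge w_\high]$---noting that $\bar q(\db) = (n-1)/n$ on $E$---gives
\[ (1-1/n)\,\alpha \;\le\; e^\eps(1-w_\high)\Pr[F] + \delta. \]

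The second ingredient is the fresh-sample baseline. For any fixed predicate $p^*$ with $\weight_D(p^*) \ge w_\high = \omega(\log n/n)$, a Chernoff-type estimate gives
\[ \Pr_{\db\sim D^n}[\iso{p^*}{\db}] \;=\; n \weight_D(p^*)\bigl(1-\weight_D(p^*)\bigr)^{n-1} \;\le\; n w_\high\cdot e^{-(n-1) w_\high} \;=\; \negl(n), \]
since $(n-1)w_\high = \omega(\log n)$. In other words, a predicate of weight at least $w_\high$ chosen independently of $\db$ isolates $\db$ only with negligible probability, supplying the quantitative target that an adversary benefiting from the output of a DP mechanism should not substantially beat.

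Combining, I split on whether $\Pr[F]$ is negligible. If $\Pr[F] = \negl(n)$, then $\alpha \le \Pr[F] = \negl(n)$ and we are done. Otherwise, the generalization lower bound on $\E[q(\db)]$ forces $\E[p(\db)\mid F] = \Omega(w_\high) \gg 1/n$, which is incompatible with $p(\db) = 1/n$ on $E$ occurring with non-negligible probability conditional on $F$. The hard part will be making this contradiction quantitative: the two displayed inequalities together bound $\alpha$ in terms of $\Pr[F]$, but not directly in terms of itself, so closing the argument requires either an algebraic elimination of $\Pr[F]$ using additional post-processings whose $\weight_D$ and empirical behavior on $E$ are in tension, or a multiplicative-concentration strengthening of the generalization lemma---obtainable via a two-row moment / $U$-statistic extension---to promote $\E[p(\db)\mid F] = \Omega(w_\high)$ into $\Pr[p(\db) \le 1/n \mid F] = \negl(n)$, matching the Chernoff baseline.
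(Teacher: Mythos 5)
Your proposal pursues a genuinely different route from the paper's, but as you yourself note, it does not close. The core difficulty is that Lemma~\ref{lemma:generalization} with $\ell=1$ only controls \emph{expectations}: your inequality $(1-1/n)\alpha \le e^\eps(1-w_\high)\Pr[F] + \delta$ bounds $\alpha$ by $\Pr[F]$, but $\Pr[F]$ can be as large as~$1$ (the adversary may always output a high-weight predicate, and on the event it fails to isolate it contributes nothing to $\alpha$ yet everything to $\Pr[F]$), and $(1-w_\high)$ is near $1$, so no nontrivial bound on $\alpha$ follows. Your first inequality is even weaker: $q(\db)=p(\db)\cdot\mathbb{I}[F]$ can be as large as $1$ on the non-isolating part of $F$, so the lower bound $\E[q(\db)]\ge e^{-\eps}(w_\high\Pr[F]-\delta)$ places no pressure on the isolation event at all. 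There is no algebraic elimination of $\Pr[F]$ from these two inequalities that yields $\alpha=\negl(n)$; what is missing is precisely the concentration ingredient you flag at the end.

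The paper's proof supplies exactly that ingredient, but not via a moment or $U$-statistic strengthening of the generalization lemma. Instead it \emph{amplifies}: run $\A$ on $\ell = O(\log n /\alpha)$ independent sub-datasets $\db_1,\dots,\db_\ell$, so that with probability $\ge 1-1/n$ the event $\SO$ that \emph{some} $\db_i$ is isolated by a high-weight predicate occurs. Then a $(\eps,0)$-DP exponential mechanism privately selects an index $i^*$ nearly minimizing $p_i(\db_i)$, so conditioned on $\SO$ the selected $p_{i^*}(\db_{i^*})$ is at most roughly $1/n + \frac{2}{n\eps}(\ln\ell + \ln n)$, while $\weight_D(p_{i^*}) \ge w_\high$ whenever $\SO$ holds. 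Applying Equation~\eqref{eq:generalization2} to the composed $(2\eps,\delta)$-DP algorithm $\B$ then forces $\frac{3}{n} + \frac{2}{n\eps}(\ln\ell + \ln n) \ge e^{-2\eps}(\frac{3}{4}w_\high - \ell\delta)$, which with $w_\high = \omega(\log n/n)$ and the chosen $\ell$ yields $\alpha=\negl(n)$. The self-referential choice of $\ell$ in terms of $\alpha$ is the trick that converts a per-run probability into a near-certain event across runs, sidestepping the need for concentration of $p(\db)$ around $\weight_D(p)$ in a single run. If you want to salvage your single-run approach, you would need to establish a tail bound of the form $\Pr[p(\db)\le 1/n \mid \weight_D(p)\ge w_\high] = \negl(n)$ for DP-produced predicates, and it is not clear the generalization lemma's proof technique gives this directly; the paper's amplification is the standard workaround.
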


\begin{proof}[Proof of Claim~\ref{claim:generalization-low}]
	Let $\w^* = \max\{w\le \w_\low : w \mbox{ realizeable under } D\}$.
	Given $p\gets\adv(M(\db))$, $\w_\low$, and $D$, define the predicate $p^*$:
	$$p^*(x) \equiv \begin{cases}
	p(x)& \mbox{if } \weight_{D}(p) \le \w_\low \\
	0& \mbox{if } \weight_{D}(p) > \w_\low
	\end{cases}$$
	Observe that $\weight_{D}(p^*) \le \w^*$.
	The predicate $p^*$ can be computed from $p$, $D$, and $\w_\low$ without further access to $\db$.
	Because differential privacy is closed under post-processing, if $M$ is $(\eps,\delta)$-differentially private, then the computation that produces $p^*$ is as well.
	\begin{align*}
	\Succ_{\le\w_\low}^{\A,M}(n)
	&\le \Pr_{\db,p}[p(\db) \ge 1/n \land \weight_{D}(p) \le \w^*]    \\
	&\le n \cdot \E_{\db,p} [p^*(\db)]  \\
	&\le n\cdot(e^\epsilon \w^* + \delta) \;\;\,\quad\quad \quad\quad\quad \mbox{by Lemma~\ref{lemma:generalization}, $\ell=1$}  \\
	&= e^\eps \frac{\base(n,w^*)}{(1-w^*)^{n-1}} + n\delta \quad\quad\quad \mbox{by Claim~\ref{claim:baseline-exact}}   \\
	&\le e^\eps \frac{\base(n,w^*)}{(1-w_\low)^{n-1}} + n\delta   \\
	&= e^{\eps'} \base(n,w_\low) + \delta' \;\quad \quad\quad\mbox{by Claim~\ref{claim:baseline-exact}} \tag*{\qedhere}
\end{align*}
\end{proof}

\begin{proof}[Proof of Claim~\ref{claim:generalization-heavy}]
	Fix an adversary $\adv$.
	We construct an algorithm $\B$ in an attempt to violate the Generalization Lemma for $\ell = O(\frac{\log n}{\alpha})$:
	\begin{adversary}[H]
			\SetKwInOut{Input}{Input}
			\SetKw{KwBy}{by}
			\Input{$D$, $\vec{\db}\sim (D^n)^\ell$}
			\BlankLine
			$I \gets \emptyset$, the empty set\;
			\For{$i\gets 1,\dots,\ell$}{
				$p_i \gets \adv(M(\db_i))$\;
				$u_i = -p_i(\db_i)$\;
				\uIf{$\weight_{D}(p_i) \ge w_\high$}{$I \gets I\cup\{i\}$}
			}
			Let $u:i\mapsto -p_i(\db_i)$ for $i\in I$\;
			$i^* \gets M_\mathsf{Exp}^{\eps}(\vec{\db},I,u)$\;
			return $(i^*, p_{i^*})$
	\end{adversary}
	$M$ is $(\eps,\delta)$-differentially private, and $M_\mathsf{Exp}^\eps$ (Theorem~\ref{thm:expmech}) is $(\eps,0)$-differentially private. By basic and paraellel composition, $\B$ is $(2\eps,\delta)$-differentially private.

	\newcommand{\SO}{\mathsf{PSO}}
	\newcommand{\pbad}{(1 - \alpha)^\ell}
	Define the event $\SO$ to be the event that $\A$ successfully predicate singles out on one of the sub-datasets with a high-weight predicate: $\SO = \{\exists i \in [\ell] : \iso{p_i}{\db_i} \land \weight_{D}(p_i) \ge \w_\high\}$.
	By the choice of $\ell$, $\Pr[\SO] = 1- \left(1 - \alpha\right)^{\ell} \ge 1-\frac{1}{n}$.
	Conditioned on $\SO$, $\max_{i\in I} u(i) \ge -1/n$.
	$\Delta u = 1/n$, and $|I| \le \ell$. The Exponential Mechanism guarantees that
	$$\Pr_{\vec{\db}; (p_{i^*},i^*) \gets \B(\vec{\db})}\biggl[p_{i^*}(\db_{i^*}) \ge \frac{1}{n} + \frac{2}{n\eps}(\ln\ell + t) \mid\SO\biggr]\le e^{-t}.$$
	Choosing $t = \ln n$ and using the fact that  $p_{i^*}(\db_{i^*}) \le 1$,
	\begin{align*}
		\E_{\vec{\db}; (p_{i^*},i^*) \gets \B(\vec{\db})}[p_{i^*}(\db_{i^*})\mid\SO]
		&\le \frac{1}{n} + \frac{2}{n\eps}(\ln \ell + \ln n) + \frac{1}{n}.\notag
	\end{align*}

	\begin{align*}
	\E_{\vec{\db}; (p_{i^*},i^*) \gets \B(\vec{\db})}\left[p_{i^*}(\db_{i^*})\right]
	&= \Pr[\neg\SO]\E[p_{i^*}(\db_{i^*})\mid\neg\SO] + \Pr[\SO] \E[p_{i^*}(\db_{i^*})\mid\SO] \notag\\
	&\le \Pr[\neg\SO]+ \E[p_{i^*}(\db_{i^*})\mid\SO] \notag\\
	&<  \frac{3}{n} + \frac{2}{n\eps}(\ln \ell + \ln n).
	\\ \\
	%
	\E_{\vec{\db}; (p_{i^*},i^*) \gets \B(\vec{\db})}\left[\weight_{D}(p_{i^*})\right]
	&= \Pr[\neg\SO]\E[\weight_{D}(p_{i^*})\mid\neg\SO] + \Pr[\SO] \E[\weight_{D}(p_{i^*})\mid\SO] \notag \\
	&\ge \Pr[\SO]\cdot\E[\weight_{D}(p_{i^*})\mid\SO] \notag\\
	&\ge (1-\frac{1}{n})\cdot\w_\high\notag\\
	&> \frac{3\w_\high}{4}
\end{align*}
	Applying the Lemma for the $(2\eps,\delta)$-d.p. mechanism $\B$,
	$$\frac{3}{n} + \frac{2}{n\eps}(\ln \ell + t)
	\ge
	e^{-2\eps}\left(\frac{3\w_\high}{4} - \ell\delta\right).$$
	If $\delta = \omega(\frac{\alpha}{n})$, then by the assumption that $\delta$ is negligible, $\alpha = \negl(n)$.
	Otherwise $\delta = O(\frac{\alpha}{n})= O(\frac{\log n}{n\ell})$ and
	$$\frac{2}{\eps}(\ln \ell + \ln n) \ge \frac{3n\w_\high- O(\log n)}{4e^{2\eps}}.$$
	For $\eps = O(1)$ and $\w_\high = \omega(\frac{\log n}{n})$, $\ln \ell + \ln n 	 =\omega(\log n)$. By the choice of $\ell = O(\frac{\log n}{\alpha})$,  $\alpha = \negl(n)$.
\end{proof}

\end{proof}


\section{Does $k$-anonymity provide PSO security?}
\label{sec:k-anon}

$k$-anonymity~\cite{SamaratiS98, sweeney2002k} is a strategy intended to help a data holder ``release a version of its private data with scientific guarantees that the individuals who are the subjects of the data cannot be re-identified while the data remain practically useful''~\cite{sweeney2002k}. It is achieved by making each individual in a data release indistinguishable from at least $k-1$ individuals. Typically, a $k$-anonymized dataset is produced by subjecting it to a sequence of generalization and suppression operations.

The Article~29 Working Party Opinion on Anonymisation Techniques concludes that $k$-anonymity prevents singling out~\cite{wp-anonymisation}. In this section, we analyze the extent to which $k$-anonymity provides PSO security.
We show that $k$-anonymized dataset typically provides an attacker information which is sufficient to predicate singling out with constant probability.
This result challenges the determination of the Article~29 Working Party.\footnote{Our results hold equally for $\ell$-diversity~\cite{MKGV07} and $t$-closeness~\cite{LiLV07} which the Article~29 Working Party also concludes prevent singling out.}

\subsection{Preliminaries}
\newcommand{\ahat}{\widehat{a}}

Let $(A_1, \dots,A_m)$ be \emph{attribute domains}.
A dataset $\db=(\row_1,\ldots,\row_n)$ is collection of rows $\row_i =(a_{i,1},\ldots,a_{i,m})$ where $a_{i,j}\in A_j$.
For subsets $\ahat_{i,j} \subseteq A_j$, we view $y_i = (\ahat_{i,1},\dots,\ahat_{i,m})$ as a set in the natural way, writing $\row_i \in y_i$ if $\forall j\in [m]$, $a_{i,j} \in \ahat_{i,j}$.
We say that a dataset $\dby = (y_1,\dots,y_n)$ is derived from $\db$ by generalization and suppression if $\forall i\in [n]$, $\row_i \in y_i$.
For example, if $(A_1, A_2,A_3)$ correspond to ``5 digit ZIP Code,'' ``Gender,'' and ``Year of Birth,'' then it may be that $\row_i=(91015, F, 1972)$ and $y_i = (91010\text{--}91019, F, 1970\text{--}1975)$.

$k$-anonymity aims to capture a sort of anonymity of a crowd: a data release $\dby$ is $k$-anonymous if any individual row in the release cannot be distinguished from $k-1$ other individuals. Let $\cnt(\dby,y) \triangleq |\{i\in[n]: y_i = y\}|$ be the number of rows in $\dby$ which agree with $y$.\footnotemark

\footnotetext{Often $\cnt$ is paramaterized by a subset $Q$ of the attribute domains called a \emph{quasi-identifier}. This parameterization affect our analysis and we omit it for simplicity.}

\begin{definition}[$k$-Anonymity (rephrased from~\cite{sweeney2002k})]
For $k\ge2$, a dataset $\dby$ is \emph{$k$-anonymous} if $\cnt(\dby,y_i) \ge k$ for all $i\in[n]$.
An algorithm is called a \emph{$k$-anonymizer} if on an input dataset $\db$ its output is a $k$-anonymous $\dby$ which is derived from $\db$ by generalization and suppression.
\end{definition}

Our goal is to relate $k$-anonymity and PSO security.
It will be convenient to define a generalization of $k$-anonymity---\emph{predicate $k$-anonymity} which captures the core property of $k$-anonymity but relaxes its strict syntactic requirements.

For a predicate $\phi:\univ \to \zo$ and dataset $\db$, let $\db_\phi = \{\row \in \db : \phi(\row) = 1\}$.
We assume that $|\db_\phi|$ is computable given the output of the $k$-anonymizer, but this does not qualitatively affect the results in this section.
\begin{definition}[\label{def:predicate_anonymity}Predicate $k$-anonymity]
	Let $\anon$ be an algorithm mapping a dataset $\db \in \univ^n$ to a collection of predicates $\Phi = \{\phi:\univ\to\zo\}$.
	For $k\ge 2$ we call $\anon$ \emph{predicate $k$-anonymous} if for all $\phi \in \Phi$, $|\db_\phi| \ge k$.
\end{definition}

$k$-anonymity is a special case of predicate $k$-anonymity that considers only specific collections of predicates $\Phi$ induced by a dataset $\dby$:
$$\Phi = \{ \phi_y(\row) = 1 \iff \row \in y\}_{y\in\dby}.\footnotemark$$

\footnotetext{See also the definition of $k$-anonymity for face images \cite[Definition~2.10]{k-faces}. Using the notation of that paper, it is also special case of predicate $k$-anonymity, with $\Phi = \{\phi_{\Gamma_d}(\Gamma) = 1 \iff f(\Gamma) = \Gamma_d \}_{\Gamma_d \in H_d}$}

\begin{definition}
A predicate $k$ anonymizer is \emph{$\kmax$-bounded} if $\forall \db, \exists \phi \in \Phi$ such that $|\db_\phi| \le \kmax$.
\end{definition}

\subsection{Illustrative examples}

Before presenting a formal technical analysis, we provide two illustrative examples of very simple $k$-anonymizers that fail to provide security against predicate singling out. For both examples, let $D = U_\ell$ be the uniform distribution over $\zo^n$. The dataset $\db$ consists of $n$ records sampled i.i.d.\ from $D$.

\paragraph{Bit suppression.}
This $k$-anonymizer processes groups of $k$ rows in index order and suppresses all bit locations where the $k$ rows disagree.
Namely, for each group $g$ of $k$ rows $(\row_{gk+1},\dots,\row_{gk+k)})$ it outputs $k$ copies of the string $y_g \in \{0,1,\star\}^n$ where $y_g[j] = b\in\{0,1\}$ if $\row_{gk+1}[j]=\cdots=\row_{gk+k}[j]=b$ (i.e., all the $k$ rows in the group have $b$ as their $j$th bit) and $y_g[j] = \star$ otherwise.

In the terminology of Definition~\ref{def:predicate_anonymity}, the predicate $\phi_g(x)$ evaluates to $1$ if $y_g[j] \in \{x[j], \star\}$ for all $j\in[n]$ and evaluates to $0$ otherwise. Namely, $\phi_g(x)$ checks whether $x$ agrees with $y_g$ (and hence with all of $\row_{gk+1},\dots,\row_{gk+k)}$) on all non-suppressed bits.

In expectation, $n/2^{k}$ positions of $y_g$ are not suppressed.
For large enough $n$, with high probability over the choice of $\db$, at least $\frac{n}{2\cdot 2^k}$ positions in $y_g$ are not suppressed.
In this case, $\weight_{D}(\phi_g) \le 2^{-\frac{n}{2\cdot 2^k}}$ which is a negligible function of $n$ for any constant $k$.

We now show how $\phi_g$ can be used adversarially. In expectation $n(1-2^{-k})\geq 3n/4$ positions of $y_g$ are suppressed. For large enough $n$, with high probability over the choice of $\db$ at least $n/2$ of the positions in $y_g$ are suppressed. Denote these positions $i_i,\ldots,i_{n/2}$.
Define the predicate $p_k(x)$ that evaluates to $1$ if the binary number resulting from concatenating $x[i_1],x[i_2],\ldots,x[i_{n/2}]$ is greater than $2^{n/2}/k$ and $0$ otherwise. Note that  $\weight_{D}(p_k)\approx 1/k$ and hence $p_k$ isolates within the group $g$ with probability $\approx 1/e\approx 0.37$, as was the case with the trivial adversary described at the beginning of Section~\ref{sec:definitions}.

An attacker observing $\phi_g$ can now define a predicate $p(x)=\phi_g(x)\wedge p_k(x)$. By the analysis above, $\weight(p)$ is negligible (as it is bounded by $\weight(\phi_g)$) and $p(x)$ isolates a row in $\db$ with probability $\approx 0.37$. Hence, the $k$-anonymizer of this example fails to protect against singling out.

Theorem~\ref{thm:k-anon-attack} below captures the intuition from our bit suppression example and generalizes it, hence demonstrating that $k$-anonymity would not typically protect against predicate singling out. We note that Theorem~\ref{thm:k-anon-attack} does not capture all possible ways in which the outcome of a $k$-anonymizer can be exploited, in particular, the following simple example.

\paragraph{Interval Buckets.}
This $k$-anonymizer sorts the rows in lexicographic order and outputs the intervals $[a_g,b_g] = [x_{gk+1},x_{gk+k}]$ (where the indices are after sorting and renaming).
The corresponding predicate $\phi_{a_g,b_g}(x) = 1$ if $x\in [a_g,b_g]$.

Observe that any of the endpoints $a_g$ or $b_g$ reveal a row in $\db$ and hence an adversary can predicate single out with probability 1 using predicates of weight $2^{-n}$.

\subsection{k-Anonymity enables predicate singling out}

\begin{theorem}
	\label{thm:k-anon-attack}
For any $\kmax \ge 2$,
there exists an (efficient, uniform, randomized) algorithm $\adv$ such that for all
$D$ with min-entropy  $\lambda \ge m + 2\log(1/\alpha^2) + \kmax\log n$ (for $m \in \N$, $\alpha$), and all predicate anonymizers $\anon$ that are $\kmax$-bounded,
and all $\w_\low > 0$:
	$$
	\Succ_{\le\w_\low}^{\A,\anon}(n)
		\ge
	\eta\cdot (e^{-1} - 2^{-m}n - k\alpha^2)	$$
where
	$$\eta \triangleq  \Pr_{\substack{\db\gets D^n\\\phi\gets\anon(\db)}}\left[\weight_{D}(\phi) \le w_\low(n)\right].$$
\end{theorem}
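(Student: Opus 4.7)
The adversary $\A$ I propose behaves as follows. On input $\Phi \gets \anon(\db)$, it selects some $\phi \in \Phi$ with $|\db_\phi|$ minimum, which by $\kmax$-boundedness satisfies $k := |\db_\phi| \le \kmax$. It then samples a 2-universal hash $h: \univ \to \zo^m$ and a value $v \in_R \zo^m$ uniformly and independently, and returns $p(x) = \phi(x) \land (h(x) = v)$. The weight bound is immediate: since $p$ implies $\phi$, we have $\weight_D(p) \le \weight_D(\phi)$, so on the event $E = \{\weight_D(\phi) \le w_\low\}$ (which has probability $\eta$ by definition), the output $p$ is admissible in the low-weight family.

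Let $r_1,\dots,r_k$ denote the rows in $\db_\phi$. Since rows outside $\db_\phi$ fail $\phi$ and hence $p$, the event $\iso{p}{\db}$ is exactly the event that $|\{i : h(r_i) = v\}| = 1$. To analyze this probability, I would invoke the Generalized Leftover Hash Lemma (Lemma~\ref{LHL}) with source $(r_1,\dots,r_k)$ and side information $\phi$. The key conditional min-entropy estimate is that $\phi$ can reveal at most $\log\binom{n}{k}\le \kmax\log n$ bits about which subset of rows of $\db$ lands in $\db_\phi$, while each such row, conditioned on satisfying $\phi$, retains essentially the full min-entropy of $D$ by the i.i.d.\ structure of $\db$; together these give a joint conditional min-entropy roughly $k\lambda - \kmax\log n$. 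The hypothesis $\lambda \ge m + 2\log(1/\alpha^2) + \kmax\log n$ is calibrated precisely so that LHL extracts $m$ near-uniform bits per row, leaving a total statistical deviation of order $k\alpha^2$ from the ``ideal'' distribution in which $(h(r_1),\dots,h(r_k))$ are i.i.d.\ uniform on $\zo^m$ and independent of $v$.

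Under this near-independence, $\Pr[\,|\{i:h(r_i)=v\}| = 1\,]$ is within $O(k\alpha^2)$ of $B(k, 2^{-m}) = k\cdot 2^{-m}(1-2^{-m})^{k-1}$. Choosing $m$ so that $2^{-m}$ is close to $1/k$ drives this to $\approx e^{-1}$; the residual $2^{-m}n$ term in the statement absorbs the slack between the ideal balls-and-bins expression and $e^{-1}$ together with lower-order union-bound style errors in converting between ``$h(r_i)=v$'' and the probability viewpoint fixed by the theorem's parameters. Combining the weight event $E$ (probability $\eta$) with the conditional isolation probability produces exactly the bound $\eta\cdot(e^{-1} - 2^{-m}n - k\alpha^2)$.

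\textbf{Main obstacle.} The delicate step is the LHL argument itself. A single 2-universal $h$ yields only pairwise independence of $(h(r_1),\dots,h(r_k))$, which is too weak for a balls-and-bins calculation giving $e^{-1}$. I expect to handle this by an iterative (chain-rule style) application of LHL: revealing the hash values one at a time, invoking LHL on $r_i$ conditioned on $(\phi, r_1,\dots,r_{i-1}, h(r_1),\dots,h(r_{i-1}))$, and charging each step an $\alpha^2$ deviation and at most $\log n$ bits of entropy loss for the index information $\phi$ divulges about $r_i$. This is what turns the individual min-entropy hypothesis $\lambda$ (with the $\kmax\log n$ cushion) into the joint near-uniformity of the $k$ hash values needed for the final balls-and-bins estimate.
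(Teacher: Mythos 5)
Your proof follows the same high-level strategy as the paper's (select a small group $\db_\phi$ with $|\db_\phi|\le\kmax$, refine $\phi$ with a hash-based predicate, use the Leftover Hash Lemma to argue the hash values of the $k$ rows in $\db_\phi$ are jointly nearly uniform, conditionalize on the low-weight event to pick up the factor $\eta$), and your entropy accounting and chain-rule application of LHL are essentially what the paper does via Lemma~\ref{lemma:k-of-n-min-entropy} and Corollary~\ref{LHL:avg-min-entropy}. But the specific refining predicate you propose, $p(x) = \phi(x) \land (h(x)=v)$, does not yield the stated bound: there is a conflict between the two roles that $m$ must play.

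Concretely, with an exact-match predicate the isolation event is ``exactly one of the $k$ hash values equals $v$'', which under near-uniformity has probability close to $B(k, 2^{-m}) = k\cdot 2^{-m}(1-2^{-m})^{k-1}$. For this to be near $e^{-1}$ you need $2^{-m}\approx 1/k$, i.e.\ $m\approx\log k \le \log\kmax$. But then the error term $2^{-m}n\approx n/k$ in the theorem's bound is $\Theta(n)$, making the bound vacuous (negative). Conversely, if you take $m=\omega(\log n)$ so that $2^{-m}n$ is genuinely negligible, then $B(k,2^{-m})\le k\cdot 2^{-m} = o(1)$, nowhere near $e^{-1}$. Your paragraph conflates these regimes: you cannot have $2^{-m}\approx 1/k$ and $2^{-m}n$ small simultaneously when $k\le\kmax=O(1)$. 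The paper avoids this by decoupling the hash output length from the matching probability: instead of requiring $h(x)=v$, it defines $q_h(x)=1$ iff $r(h(x))<w_\phi$, where $r:\zo^m\to[0,1]$ is the natural rescaling and $w_\phi$ is the multiple of $2^{-m}$ closest to $1/k_\phi$. Then $q_h$ accepts a $\approx 1/k_\phi$ fraction of $m$-bit strings regardless of how large $m$ is, so one can take $m$ large enough to make the LHL error and the discretization error $2^{-m}n$ negligible while still getting isolation probability $\approx B(k_\phi,1/k_\phi)\ge e^{-1}$. The fix to your argument is to replace the exact-match condition with a threshold condition of this kind; the rest of your outline then goes through.
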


For distributions with sufficient min-entropy ($m = \omega(\log n)$, $\alpha = \negl(n)$), the adversary's success probability is approximately $\eta/e \approx \eta \cdot B(k,1/k)$.
To predicate single out, the adversary must output a predicate that both isolates $\db$ and has low weight.
The theorem shows that these two requirements essentially decompose: $\eta$ is the probability that the predicate $k$-anonymizer outputs a low-weight predicate and $B(k,1/k)$ is the probability that a trivial adversary predicate singles out a dataset of size $k$.
Algorithms for $k$-anonymity generally try to preserve as much information in the dataset as possible. We expect such algorithms to typically yield low-weight predicates and correspondingly high values of $\eta$.

\begin{proof}[Proof of Theorem~\ref{thm:k-anon-attack}]
	On input $\Phi\gets \anon(\db)$, $\adv$ selects $\phi \in \Phi$ such that $2\le |\db_\phi| \le \kmax$.
	$\adv$ will construct some predicate $q$ and output the conjunction $p\triangleq \phi \land q$.
	Noting that $\weight_{D}(p) \le \weight_{D}(\phi)$, and that $\iso{q}{\db_\phi} \implies \iso{p}{\db},$
\begin{align}
\Succ_{\le\w_\low}^{\A,\anon}(n)
&\ge \Pr\left[\iso{q}{\db_\phi} \land \weight_{D}(\phi) \le w_\low\right] \notag \\
&= \eta \cdot \Pr\left[\iso{q}{\db_\phi}  \bigmid \weight_{D}(\phi)\le w_\low\right] \label{eqn:k-anon-proof-cond-1}
\end{align}
\begin{claim}
\label{eqn:k-anon-proof-cond}
There exists $\adv$ such that for all $k_\phi\ge 2$
\begin{equation*}
\Pr_{\substack{\db\gets D^n\\\phi\gets\anon(\db)\\p\gets\adv(\phi,w)}}\biggl[\iso{q}{\db_\phi} \Bigmid |\db_\phi| = k_\phi  \land \weight_{D}(\phi)\le\w_\low\biggr] \ge B(k_\phi,1/k_\phi) - 2^{-m}n - k\alpha^2.
\end{equation*}
\end{claim}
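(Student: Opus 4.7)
My plan is to construct an efficient, uniform, randomized adversary $\adv$ using a random 2-universal hash function and to analyze its success via the Leftover Hash Lemma (Lemma~\ref{LHL}) combined with a hybrid argument across the rows of $\db_\phi$. Given $\phi$ with $k_\phi = |\db_\phi|$ readable from $\anon$'s output, $\adv$ samples $h$ uniformly at random from a 2-universal family $\{h:\univ\to\zo^m\}$, sets $t = \lfloor 2^m/k_\phi\rfloor$, and returns $q(x) = \indic{h(x) < t}$. Two easy observations follow: $\weight_D(p) \le \weight_D(\phi) \le \w_\low$ automatically, and since rows outside $\db_\phi$ never satisfy $\phi$ (hence not $p$), we have $\iso{p}{\db}$ iff $\iso{q}{\db_\phi}$.

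I would first analyze the idealized surrogate in which $(h(y_1),\dots,h(y_{k_\phi}))$ are replaced by i.i.d.\ uniform strings over $\zo^m$. Under the surrogate the probability of exactly one landing below $t$ is $k_\phi \cdot (t/2^m)(1-t/2^m)^{k_\phi-1}$. Using $|t/2^m - 1/k_\phi| \le 2^{-m}$ together with the Lipschitz bound $|\partial B(k_\phi,\cdot)/\partial w|\le k_\phi \le n$ on $[0,1/k_\phi]$, this surrogate probability is at least $B(k_\phi,1/k_\phi) - n\cdot 2^{-m}$.

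To justify replacing the real hash values by the surrogate, I would run a $k_\phi$-step hybrid: at step $i$ replace $h(y_i)$ by a fresh uniform $U_i$, invoking Lemma~\ref{LHL} with $Y_1 = y_i$ and $Y_2$ comprising the already-replaced uniforms $U_1,\dots,U_{i-1}$, the remaining rows $y_{i+1},\dots,y_{k_\phi}$, and the relevant conditioning information; the remaining hash values $h(y_{i+1}),\dots,h(y_{k_\phi})$ are then reconstructed on both sides by post-processing (applying $h$ to $y_{i+1},\dots,y_{k_\phi}$), which preserves total-variation distance. Each step contributes at most $\alpha^2$ in TV distance, so after all $k_\phi \le k$ steps the accumulated error is at most $k\alpha^2$. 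Combining with the idealized bound yields $\Pr[\iso{q}{\db_\phi} \mid \cdot] \ge B(k_\phi,1/k_\phi) - 2^{-m}n - k\alpha^2$, as claimed.

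The main obstacle will be the min-entropy bookkeeping at each hybrid step: LHL demands $\widetilde{H}_\infty(y_i \mid Y_2) \ge m + 2\log(1/\alpha^2) - 2$. The $\kmax \log n$ slack in the assumption $\lambda \ge m + 2\log(1/\alpha^2) + \kmax \log n$ is there precisely to absorb the loss from conditioning on which of the $n$ positions of $\db$ are occupied by $\db_\phi$ (at most $\log\binom{n}{k_\phi} \le \kmax \log n$ bits). Because $\anon$'s output can depend on all rows of $\db$ in arbitrary ways, cleanly carrying the conditioning on $\{|\db_\phi| = k_\phi \land \weight_D(\phi) \le \w_\low\}$ through the hybrid---and certifying that the residual conditional min-entropy remains sufficient at every step---is the delicate technical point.
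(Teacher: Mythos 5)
Your construction and error budget match the paper's---random 2-universal hash, threshold near $1/k_\phi$, the Leftover Hash Lemma, a $2^{-m}n$ rounding loss, a $k\alpha^2$ statistical loss, and the $\kmax\log n$ slack in $\lambda$. The structural difference is how the joint distribution of the $k_\phi$ hashed rows is argued close to uniform: you run a $k_\phi$-step hybrid with the single-output LHL, while the paper invokes Corollary~\ref{LHL:avg-min-entropy}, a multi-output LHL stating that $(h(Y_1),\dots,h(Y_k))$ is $k\alpha^2$-close to uniform in one application---your hybrid is essentially the (unwritten) proof of that corollary, so you are filling a gap the paper took for granted. The ``delicate technical point'' you correctly flag---carrying data-dependent conditioning through the hybrid and certifying residual min-entropy at each step---is exactly what the paper cleanly outsources to the standalone Lemma~\ref{lemma:k-of-n-min-entropy}: by upper-bounding $\Pr[Y_I = y_I \mid Y_j = y]$ with the probability that each coordinate of $y_I$ is matched by \emph{some} of the $n$ samples and then counting, it establishes $\widetilde{H}_\infty(Y_j\mid Y_I)\ge H_\infty(Y_1)-k\log n$ directly for any randomized, data-dependent choice of indices $(j,I)$. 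That is cleaner than your $\log{n\choose k_\phi}$-bits-of-leakage heuristic and lets one verify the LHL hypothesis once, up front, rather than re-certifying conditional min-entropy at every hybrid step. To complete your argument you would want to isolate precisely such an average-min-entropy lemma first; without it, tracking the conditioning through the per-step hybrid is genuinely difficult to do rigorously.
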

\noindent
The claim is discussed below and proved in Appendix~\ref{app:k-anon}. Using the claim we get:
\begin{align}
\Pr\left[\iso{q}{\db_\phi} \bigmid \weight_{D}(\phi)\le \w_\low\right]
&= \sum_{k_\phi = k}^{\kmax} \Pr\left[|\db_\phi| = k_\phi\right] \cdot \Pr\left[\iso{q}{\db_\phi} \bigmid |\db_\phi| = k_\phi \land \weight_{D}(\phi)\le \w_\low\right] \notag \\
&\ge \sum_{k_\phi} \Pr\left[|\db_\phi| = k_\phi\right] \cdot \biggl( B(k_\phi,1/k_\phi) - 2^{-m}n -  k\alpha^2  \biggr) \notag \\
&= \E_{k_\phi} \left[B(k_\phi,1/k_\phi)\right] - 2^{-m}n - k\alpha^2 \notag \\
&\ge e^{-1} - 2^{-m}n - k\alpha^2 \label{eqn:k-anon-proof-449}
\end{align}
The last inequality follows from the fact that for all $k_\phi \ge 2$, $(1-1/k_\phi)^{k_\phi-1} > e^{-1}$.
Combining~\eqref{eqn:k-anon-proof-449} with~\eqref{eqn:k-anon-proof-cond-1} completes the proof.
\end{proof}

The proof of Claim~\ref{eqn:k-anon-proof-cond} uses the Leftover Hash Lemma in a manner closely resembling Lemma~\ref{lemma:lhl-1}, but with an additional challenge. The earlier application of LHL proved that a random hash function selected appropriately isolates a row with probability close to $e^{-1}$. It relied on the fact that each row was sampled i.i.d.\ from a distribution with min-entropy. In contrast, the rows in $\db_\phi$ are a function of $\anon$ and the whole dataset $\db$. They are not independently distributed and even their marginal distributions may be different than $D$.

We can use the LHL to prove the claim if we can show that the rows in $\db_\phi$ still have sufficient (conditional) min-entropy. The following lemma (proved in Appendix~\ref{app:k-anon}) does exactly that.
\begin{lemma}
	\label{lemma:k-of-n-min-entropy}
Let $Y_1,\dots,Y_n$ be i.i.d.\ random variables and let $F$ be a (randomized) function mapping $(Y_1,\dots,Y_n)$ to $(j,I)$ where $j\in[n]$ and $I\subseteq [n]\setminus\{j\}$ of size $|I| = k-1$. Let $Y_I = \{Y_i\}_{i\in I}$.
$$\widetilde{H}_\infty(Y_j \bigmid Y_I) \ge {H}_\infty(Y_j) - (k-1)\log n \ge {H}_\infty(Y_1) - k\log n.$$
\end{lemma}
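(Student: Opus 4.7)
The plan is to unfold the definition of $\widetilde{H}_\infty(Y_j \mid Y_I)$ directly. I interpret the conditioning as including $(j,I)$, since $Y_I$ is not well-defined without knowing the index set $I$; so the quantity to bound from below is $\widetilde{H}_\infty(Y_j \mid j, I, Y_I)$. Expanding the definition:
\[
2^{-\widetilde{H}_\infty(Y_j \mid j, I, Y_I)} \;=\; \sum_{(j_0, I_0, y_I)} \max_{y}\; \Pr\bigl[Y_{j_0} = y,\; (j,I) = (j_0, I_0),\; Y_{I_0} = y_I\bigr].
\]
The ranges are $j_0 \in [n]$, $I_0 \subseteq [n]\setminus\{j_0\}$ with $|I_0| = k-1$, and $y_I \in \supp(Y)^{k-1}$.

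The key step is to bound each summand. Drop the indicator $(j,I) = (j_0, I_0)$ to get
\[
\Pr\bigl[Y_{j_0} = y,\; (j,I) = (j_0, I_0),\; Y_{I_0} = y_I\bigr] \;\le\; \Pr[Y_{j_0} = y,\; Y_{I_0} = y_I] \;=\; \Pr[Y_{j_0} = y] \cdot \Pr[Y_{I_0} = y_I],
\]
where the equality uses that $Y_{j_0}$ and $Y_{I_0}$ are independent because $j_0 \notin I_0$ and the $Y_i$'s are i.i.d. The right-hand side is at most $2^{-H_\infty(Y_1)} \cdot \Pr[Y_{I_0} = y_I]$, and only the first factor depends on $y$, so taking $\max_y$ preserves this bound. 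Summing over $y_I$ collapses the marginal of $Y_{I_0}$ to $1$, leaving $2^{-H_\infty(Y_1)}$ per pair $(j_0, I_0)$. Finally, summing over the at most $n \binom{n-1}{k-1} \le n^k$ valid pairs gives
\[
2^{-\widetilde{H}_\infty(Y_j \mid j, I, Y_I)} \;\le\; n^k \cdot 2^{-H_\infty(Y_1)},
\]
which rearranges to the second inequality $\widetilde{H}_\infty(Y_j \mid Y_I) \ge H_\infty(Y_1) - k\log n$.

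The first inequality $\widetilde{H}_\infty(Y_j \mid Y_I) \ge H_\infty(Y_j) - (k-1)\log n$ is genuinely stronger than the second whenever $H_\infty(Y_j)$ is close to $H_\infty(Y_1)$ (indeed, $H_\infty(Y_j) \ge H_\infty(Y_1) - \log n$ follows from a simple union bound $\Pr[Y_j = y] \le \sum_{j_0} \Pr[Y_{j_0} = y] = n \cdot 2^{-H_\infty(Y_1)}$, which then yields the second inequality as a corollary of the first). I anticipate this tighter bound to be the main obstacle: the naive enumeration I use above spends $\log n$ on the choice of $j_0$ and $(k-1)\log n$ on the choice of $I_0$, and to save the $\log n$ factor on $j_0$ one must avoid treating the position $j_0$ separately from the entropy of the selected value $Y_j$. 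The cleanest route would be via a chain-rule decomposition that charges at most $\log \binom{n-1}{k-1}$ for revealing $I$ and then argues that revealing $Y_I$ given $(j, I)$ does not further reduce the entropy of $Y_j$; however, the fact that $(j, I)$ is an adversarially chosen function of all the $Y_i$'s means that $Y_{j_0}$ and $Y_{I_0}$ need not be conditionally independent given $(j,I)$, which is the technical subtlety that the proof must navigate.
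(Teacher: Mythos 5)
Your derivation of the second inequality is correct: expanding $\widetilde{H}_\infty(Y_j\mid j,I,Y_I)$, dropping the selection indicator $(j,I)=(j_0,I_0)$, using independence of $Y_{j_0}$ and $Y_{I_0}$ (valid since $j_0\notin I_0$ and the $Y_i$ are i.i.d.), collapsing $\sum_{y_I}\Pr[Y_{I_0}=y_I]=1$, and union-bounding over the at most $n\binom{n-1}{k-1}\le n^k$ pairs $(j_0,I_0)$ all go through, and conditioning on the extra information $(j,I)$ only strengthens what you prove. This gives the rightmost bound $H_\infty(Y_1)-k\log n$, which is in fact the only bound the paper actually uses downstream. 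But the first inequality, $\widetilde{H}_\infty(Y_j\mid Y_I)\ge H_\infty(Y_j)-(k-1)\log n$, is explicitly part of the lemma and you have not proven it; you are upfront that your enumeration over $j_0$ spends an extra $\log n$ relative to that statement.

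The paper obtains the first inequality by a different decomposition. It does not enumerate over $(j_0,I_0)$ at all; it works directly with the sum $\sum_{y_I}\max_y\Pr[Y_I=y_I]\Pr[Y_j=y\mid Y_I=y_I]$, reverses Bayes to $\Pr[Y_j=y]\cdot\Pr[Y_I=y_I\mid Y_j=y]$, and pulls $\max_y\Pr[Y_j=y]=2^{-H_\infty(Y_j)}$ out of the sum, so that the position $j$ is charged to $H_\infty(Y_j)$ itself rather than to an enumeration. It then bounds the residual sum over $y_I$ by a counting argument: $\{Y_I=y_I\}$ forces every coordinate of $y_I$ to appear among $(Y_\ell)_{\ell\ne j}$, so at most $\binom{n}{k-1}$ values of $y_I$ contribute, giving the factor $(k-1)\log n$. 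This is precisely the decomposition you conjecture in your last paragraph. You are also right that the step where the conditioning on $Y_j=y$ is dropped quietly treats $j$ and $I$ as if they were fixed indices; making that independence step rigorous when $(j,I)$ is an adversarial function of all the $Y_\ell$ is exactly the subtlety you identify, and the paper glosses over it rather than resolving it.
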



\section*{Acknowledgment}

The authors thank Uri Stemmer for discussions of the generalization properties of differential privacy and Adam Sealfon for suggesting Proposition~\ref{lemma:mech-small-codomain}.

Work supported by the U.S.\ Census Bureau under cooperative agreement no.~CB16ADR0160001. Any opinions, findings, and conclusions or recommendations expressed in this material are those of the authors and do not necessarily reflect the views of the U.S.\ Census Bureau. Aloni Cohen was additionally supported by NSF award CNS-1413920, the 2018 Facebook Fellowship, and MIT's RSA Professorship and Fintech Initiative.

\bibliographystyle{alpha}
\bibliography{bib}

\appendix

\section{Omitted proofs from Section~\ref{sec:definitions}}
\label{app:definitions}

\begin{proof}[Proof of Claim~\ref{claim:baseline-exact}]
	For $w \in [0,1]$, let $P_w = \{p:\weight_D(p) = w\}$.
	First, we show that for all $\w$, $ \base(n,P_\w) \le B(n,\w).$
	For any fixed predicate $p$, $$\Pr_{\db \sim D^n}[\iso{p}{\db}] = {n \choose 1} \cdot \weight_D(p) \cdot (1 - \weight_D(p))^{n-1} = B(n, \weight_D(p)).$$
	For a trivial adversary $\T$, let $\alpha_\w(\T) = \Pr_{\T(\bot)}[\weight_D(p) = \w]$.
	\begin{equation}
		\label{eqn:base-alpha}
		\base(n,P_\w) = B(n,w)\cdot \sup_{\mbox{{\scriptsize Trivial }} \T} \alpha_w(\T)
	\end{equation}
	If $w$ is realizable under $D$, then there exists a deterministic trivial adversary $\T$ with $\alpha_w(\T) = 1$; otherwise $\alpha_w(\T) = 0$ for all $\T$.

	For $W \subseteq [0,1]$, let $P_W = \{p:\weight_D(p) \in W\}$. By definition, $\base(n,P_W) \ge \sup_{\w \in W} \base(n,P_\w).$ Next, we show that in fact $\base(n,P_W) = \sup_{\w \in W} \base(n,P_\w).$
	Suppose, towards contradiction, that $\base(n,P_W) > \sup_{\w\in W} \base(n,P_w)$.
	Then there exists trivial $\T$ with $\Succ_{P_W}^{\T,\bot}(n) > \sup_{\w\in W} \base(n,P_w).$
	There must also exist an deterministic trivial adversary $\T'$ with $\Succ_{P_W}^{\T',\bot}(n) \ge \Succ_{P_W}^{\T,\bot}(n)$ but which always outputs predicates of a single weight $\w'\in W$, a contradiction.

	Combining with \eqref{eqn:base-alpha}: for any $W$, $$\base(n,P_W) = \sup_{\substack{w\in W \\\mbox{\scriptsize realizable}}} B(n,w).$$
	Because $B(n,w)$ monotonically increases as $w\to 1/n$,
	\begin{align*}
		\sup_{\substack{w\le \w_\low \\\mbox{\scriptsize realizable}}} B(n,w) &= B(n,\w^*_\low) \\
		\sup_{\substack{w\ge \w_\high \\\mbox{\scriptsize realizable}}} B(n,w) &= B(n,\w^*_\high) \\
	\end{align*}
\end{proof}

\begin{proof}[Proof of Lemma~\ref{lemma:lhl-1}]
	We prove the lemma for $\w\ge2^{-(m-1)}$; the proof for $\w \le 1 - 2^{-(m-1)}$ is analogous.
	Identify the set $\zo^m$ with the set $\{0,1,\dots,2^m -1\}$ in the natural way. For $y \in \zo^m$, define the function $r(y) \triangleq \frac{y}{2^m-1}$, the projection of $y$ onto the interval $[0,1]$.
	Let $0 \le \Delta \le \w$
	be some constant to be chosen later, and let $w_m$ be the greatest multiple of $2^{-m}$ less or equal to  $w-\Delta$.
	\begin{align}
		\Pr_{y \in_R \zo^{m}}[r(y) \le w-\Delta]
		&= \Pr_{y \in_R \zo^{m}}[r(y) \le w_m]  \notag \\
		&= w_m + 2^{-m} \notag\\
		&\in [w-\Delta,w-\Delta+2^{-m}]\label{eqn:pr-r-le-w}
	\end{align}
	Let $H = \{h:\univ\to\zo^m\}$ be 2-universal family of hash functions.
	For each $h\in H$ we define the predicate $p_h$:
	$$p_h(\row) = \begin{cases}
	1 & r(h(\row)) \le \w - \Delta \\
	0 & r(h(\row)) > \w - \Delta
	\end{cases}.$$
	By the Leftover Hash Lemma, for every $\alpha > 0$ to be chosen later and every $\lambda \ge m + 2 \log(1/\alpha^2)$, if $D$ has min-entropy at least $\lambda$ then
	$$\bigl(h,h(\row)\bigr)_{\substack{h\in_R H \\ \row \sim D}}$$
	is $\alpha^2$-close to the uniform distribution over $H\times \univ_n$ in total variation distance.
	By Corollary~\ref{cor:sd-fact} $h(D)$ is $\alpha$-close to uniform over $\univ$ with probability at least $1-\alpha$ over $h\in_R H$. For such $h$, by \eqref{eqn:pr-r-le-w},
	\begin{equation*}
		\label{eqn:good-hash-property}
		\weight_D(p_h) = \Pr_{\row\gets D}[p_h(\row) \le \w-\Delta] \in \bigl[\w-\Delta - \alpha
	\mbox{\textbf{,}}\,\,\, 
	\w-\Delta+2^{-m}+\alpha\bigr].
\end{equation*}
	Set $\alpha = 2^{-m}$ and $\Delta = 2\alpha$.
	Then $\weight_D(p_h) \in[\w - 3\cdot 2^{-m}, \w]$ with probability at least $1- \alpha = 1-2^{-m}$ whenever $\lambda \ge m + 2\log(1/\alpha^2) = 5m$, completing the proof.
\end{proof}


\section{Omitted proofs from Section~\ref{sec:composition}}
\label{app:composition}

Given a 2-universal family of functions $H = \{h:\univ \to \zo^m\}$, a mechanism $M$ and adversary $\A$, we construct a new randomized mechanism $M_H$ and new adversary $\A_H$.
The following lemma relates the success probability of the modified $\A_H$ with respect to $D$ to that of $\A$ with respect to $U_m$.

\begin{mechanism}[H]
		\caption{$M_H:\univ \to Y$}
    \SetKwInOut{Output}{fixed}
    \Output{$H = \{h:\univ \to \zo^m\}$, and $M:\zo^m \to Y$ }
		\SetKwInOut{Input}{input}
		\Input{$\db \in \univ$}
		\BlankLine
    sample $h \in_R H$\;
    return $(h, M(h(\db)))$
\end{mechanism}
\noindent

\begin{lemma}
  \label{lemma:unif-to-general}
  For any $\A$ there exists $\A_H$ such that for all $M$, $\w_\low$, $\w_\high$, $\alpha > 0$ and $D\in\Delta(\univ)$ with min-entropy $\lambda > m+2\log(1/\alpha^2)$:

  \begin{align*}
  \left|\Succ^{\A_H,M_H}_{\le\w_\low+\alpha}(n,D) - \Succ^{\A, M}_{\le\w_\low}(n,U_m)\right| \le n\alpha
  \end{align*}
  where $\Succ(n,D)$ (respectively, $\Succ(n,U_m)$) denotes the PSO success probability with respect to the distribution $D$ (respectively, $U_m$) as in Definition~\ref{def:success}.
\end{lemma}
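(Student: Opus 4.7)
My plan is to build $\A_H$ by pulling predicates back through the hash, and then use the generalized Leftover Hash Lemma (Lemma~\ref{LHL}) to argue that hashing makes $D$ look close to $U_m$. Define $\A_H$ so that on input $(h, y)$ it runs $p \gets \A(y)$ and outputs the predicate $p_H$, where $p_H(\row) \triangleq p(h(\row))$. Two algebraic facts follow immediately: (i) $\iso{p_H}{\db}$ holds iff $\iso{p}{h(\db)}$, where $h(\db) \triangleq (h(\row_1), \ldots, h(\row_n))$; and (ii) $\weight_D(p_H) = \E_{\row \sim D}[p(h(\row))]$, to be compared with $\weight_{U_m}(p)$. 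Thus the success event for $\A_H$ against $M_H$ corresponds, via the hash $h$, to the success event for $\A$ against $M$ applied to the hashed dataset.

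Next, call $h$ \emph{good} if $h(D)$ is $\alpha$-close to $U_m$ in total variation. By Corollary~\ref{cor:sd-fact}, whose hypotheses are satisfied by the min-entropy assumption $\lambda > m + 2\log(1/\alpha^2)$, good $h$ occurs with probability at least $1-\alpha$ over $h \in_R H$. For good $h$ I get two useful facts: (a) $|\weight_D(p_H) - \weight_{U_m}(p)| \le \alpha$ for every $p$, since $h(D)$ and $U_m$ are $\alpha$-close; and (b) because the $n$ rows of $\db$ are i.i.d., subadditivity of total variation under product distributions (a one-line hybrid argument) gives $\|h(D)^n - U_m^n\|_{\mathrm{TV}} \le n\alpha$. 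Together (a) and (b) let me replace ``$h(\db)$ from the hashed experiment'' with ``$\db' \sim U_m^n$'' in any event that is a function of the sample and $p$ alone, at a total variation cost of at most $n\alpha$, provided $M$ and $\A$ share randomness across the two experiments.

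Combining, I couple the two experiments by fixing good $h$ and using shared randomness for $M$ and $\A$. For good $h$, the event $\{\iso{p}{h(\db)} \land \weight_{U_m}(p) \le w_\low\}$ implies $\{\iso{p_H}{\db} \land \weight_D(p_H) \le w_\low + \alpha\}$ by (i) and (a); integrating over $\db$ via (b) yields
\begin{equation*}
\Succ_{\le w_\low + \alpha}^{\A_H, M_H}(n, D) \ge \Pr[h~\text{good}] \cdot \bigl(\Succ_{\le w_\low}^{\A, M}(n, U_m) - n\alpha\bigr),
\end{equation*}
which gives one direction of the bound after absorbing the $\Pr[h~\text{bad}] \le \alpha$ slack. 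The main obstacle is the reverse direction: the weight-threshold comparison is asymmetric, since $\weight_D(p_H) \le w_\low + \alpha$ on good $h$ only yields $\weight_{U_m}(p) \le w_\low + 2\alpha$, not $\le w_\low$. I expect this to be handled either by careful bookkeeping of slack terms (bundling the bad-$h$ probability with the tensor-TV error into the stated $n\alpha$) or by a small modification to $\A_H$ that suppresses predicates whose $\weight_D(p_H)$ exceeds the threshold, leaving the core coupling argument unchanged.
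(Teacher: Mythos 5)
Your construction of $\A_H$ via predicate pullback $p_H = p \circ h$, the good/bad $h$ dichotomy via Corollary~\ref{cor:sd-fact}, and the tensorization bound $\SD(h(D)^n, U_m^n) \le n\alpha$ for good $h$ are exactly the ingredients of the paper's proof, so your approach matches. You have also correctly identified a real imprecision: on good $h$ the weight conditions translate only one way ($\weight_{U_m}(p) \le \w_\low \Rightarrow \weight_D(p_H) \le \w_\low + \alpha$, but not the converse at the same threshold), so the coupling gives a clean lower bound on $\Succ^{\A_H,M_H}_{\le\w_\low+\alpha}(n,D)$ but does not symmetrically upper-bound it. The paper's own proof is a single displayed inequality with no elaboration and leaves precisely this point unaddressed, so you are not missing anything it supplies. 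The only downstream use (Corollary~\ref{thm:count-composition}) needs only the forward direction, so the asymmetry is harmless there; your proposed suppression fix would repair the reverse direction in spirit, but note the lemma quantifies $\A_H$ before $\w_\low$, so $\A_H$ cannot hardcode the threshold as stated — making that fix precise would require either restating the lemma with $\A_H$ depending on $\w_\low$ or, more simply, weakening the claim to the one-sided inequality that the application actually uses.
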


\begin{proof}
  For a predicate $p$ on $\zo^m$ we define a corresponding predicate on $\univ$: $p_h(x) \triangleq p(h(x))$.
  On input $(h,y)\gets M_H(\db)$, $A_H$ simulates $p\gets A(y)$ and outputs $p_h$.

  We call $h\in H$ \emph{good} if $h(D)$ is $\alpha$-close to $U_m$.
  By Corollary~\ref{cor:sd-fact}, $(1-\alpha)$-fraction of $h$ are good.
  By the goodness of $h$, if $\weight_{U_m}(p) \le \w_\low$ then $\weight_D(p_h) \le \w_\low + \alpha$. 
	$$
	\left| \Succ_{\le \w_\low+\alpha}^{\A_H,M_H}(n, D)
	-\Succ_{\le\w_\low }^{\A,M}(n,U_m)\right| \le  \SD(U_m^n, h(D^n)) \le n\alpha
  \qedhere
  $$
\end{proof}

\begin{corollary}
	\label{thm:count-composition}
  Let $\alpha = \negl(n)$ and $\lambda = m+2\log(1/\alpha^2)$.
  For any $m = \omega(\log(n))$, there exists a distribution over $O(m)$-many predicates $Q_h$, a negligible function $\w_\low(n)$, and an adversary $\A$ such that for all $D$ with min-entropy at least $\lambda$: $$\Succ_{\le \w_\low}^{\A,M_{\#Q_h}}(n) \ge 1 - \negl(n).$$
\end{corollary}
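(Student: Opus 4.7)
The plan is to combine Theorem~\ref{thm:count-composition-uniform} with Lemma~\ref{lemma:unif-to-general}: the former supplies a PSO attack on the uniform distribution $U_m$, and the latter transparently lifts any such attack to an arbitrary distribution $D$ with sufficient min-entropy, by randomizing the mechanism via pre-composition with a 2-universal hash.

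The first step is to start from the attack $(\A_0, Q_0)$ of Theorem~\ref{thm:count-composition-uniform}, which uses $m+1$ counts and succeeds with probability $\ge B(n,1/n) - \negl(n) \ge 1/e - \negl(n)$ on $U_m$, always outputting a predicate of weight at most $2^{-m}$. Applying standard probability amplification---independently repeating the attack with fresh base predicates, as suggested by the remark following Theorem~\ref{thm:count-composition-uniform}, and having the adversary select any sub-attack that yields an isolating predicate---yields an amplified attack $(\A^*, Q^*)$ with $|Q^*| = O(m)$ counts and success $1 - \negl(n)$, still outputting a predicate of weight $\le 2^{-m}$.

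The second step is to invoke Lemma~\ref{lemma:unif-to-general}. Fix a 2-universal family $H = \{h:\univ \to \zo^m\}$ and define $Q_h \triangleq \{q \circ h : q \in Q^*\}$, a random collection of $O(m)$ predicates on $\univ$ induced by $h \in_R H$; by construction $M_{\#Q_h}(\db)$ returns the same counts as $M_{\#Q^*}(h(\db))$. The new adversary $\A$, on input $h$ together with the counts, runs $\A^*$ to recover a predicate $p$ on $\zo^m$ and outputs $p \circ h$. Applying Lemma~\ref{lemma:unif-to-general} with $M = M_{\#Q^*}$, adversary $\A^*$, and negligible $\alpha$ (permitted because by hypothesis $\lambda \ge m + 2\log(1/\alpha^2)$), we obtain
\[
\Succ^{\A, M_{\#Q_h}}_{\le 2^{-m} + \alpha}(n, D) \ge \Succ^{\A^*, M_{\#Q^*}}_{\le 2^{-m}}(n, U_m) - n\alpha \ge 1 - \negl(n).
\]
Setting $\w_\low(n) \triangleq 2^{-m} + \alpha$, and noting that $m = \omega(\log n)$ forces $\w_\low = \negl(n)$, completes the proof.

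The only delicate step is arranging the amplification in Step~1 to fit within $O(m)$ counts, since naive independent repetition of a $(1/e)$-success attack would need $\omega(\log n)$ repetitions and hence $\omega(m \log n)$ counts. This can be handled by sharing the identifying structure across repetitions (the $m$ bit-conjunction queries from the proof of Theorem~\ref{thm:count-composition-uniform} need not be duplicated per base predicate, since they already carry the per-row information once any base predicate isolates) or, failing that, by noting that $m = \omega(\log n)$ is only a lower bound, so the amplification overhead can be absorbed into the $O(m)$ budget. Once this bookkeeping is done, the hash-based lift is mechanical given Lemma~\ref{lemma:unif-to-general} and the Leftover Hash Lemma bound on the required min-entropy.
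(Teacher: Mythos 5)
Your proof takes essentially the same route as the paper's: amplify the attack of Theorem~\ref{thm:count-composition-uniform} by repetition, then transport it from $U_m$ to a general high-min-entropy $D$ via Lemma~\ref{lemma:unif-to-general}, identifying $M_H$ with $M_{\#Q_h}$ through $q_i^h = q_i \circ h$. That identification is exactly the paper's one remaining verification step, and you handle it the same way.

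Where you go beyond the paper is in flagging the bookkeeping for the number of predicates, and here you are right to worry. Naive disjoint-bin repetition needs $\ell = \omega(\log n)$ fresh base predicates to push the success from $1/e$ to $1 - \negl(n)$, and each repetition needs its own $m'$ bit-conjunction queries with $m' = \omega(\log n)$ to make the output weight $\frac{1}{n}2^{-m'}$ negligible; that gives $\ell(m'+1) = \omega(\log^2 n)$ predicates. Your second proposed fix --- that ``$m = \omega(\log n)$ is only a lower bound, so the amplification overhead can be absorbed into the $O(m)$ budget'' --- does not actually work: for $m$ between $\omega(\log n)$ and $O(\log^2 n)$ (say $m = \log^{1.5} n$) there is no split $\ell \cdot m' = O(m)$ with both $\ell$ and $m'$ superlogarithmic. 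Your first proposed fix (sharing the bit-conjunction structure across repetitions) is the right instinct, but as stated it does not obviously work: one set of $m$ predicates $\row \mapsto \row[i] \wedge (\row \in \bigcup_j B_j)$ reveals the bits of the isolated row only when exactly one row lands in the whole union, which again happens with probability only about $1/e$. The paper's proof has exactly the same gap --- it simply asserts ``easily amplified by repetition'' and does not account for how this affects the predicate count --- so this is not a defect in your understanding, but a genuine soft spot in the corollary as stated. A clean fix is to either weaken the conclusion to $O(m\log n)$ predicates, or to strengthen the hypothesis to $m = \omega(\log^2 n)$; neither change affects the qualitative message of Section~\ref{sec:failcomposelog}.
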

\begin{proof}
  The success probability in Theorem~\ref{thm:count-composition-uniform} is easily amplified from $1/e$ to $1-\negl(n)$ by repetition.
  Applying Lemma~\ref{lemma:unif-to-general} to the result almost completes the proof; it remains to verify that the resulting mechanism $M_H$ can be written as $M_{\#Q_h}$ for some $Q_h = (q_0^h,\dots,q_m^h)$. To do so, take $q_i^h(\row) = q_i(h(\row))$, where $q_i$ is from the proof of Theorem~\ref{thm:count-composition-uniform}.
\end{proof}
\begin{remark}
Corollary~\ref{thm:count-composition} is only meaningful as an example of a failure of composition if each $M_{\#q_i^h}$ taken in isolation is PSO secure, something that is \emph{not} provided by Lemma~\ref{lemma:unif-to-general}. However, $M_{\#q_i^h}$ is an instance of the counting mechanism and thus secure.
\end{remark}


\section{Omitted proofs from Section~\ref{sec:k-anon}}
\label{app:k-anon}

\begin{proof}[Proof of Lemma~\ref{lemma:k-of-n-min-entropy}]
	We prove the second inequality first. The idea in used in~\eqref{eq:entropy-444} is used in~\eqref{eq:entropy-444-2}.
	\begin{align}
	2^{-{H}_\infty(Y_j)}
	&= \max_{y} \Pr[Y_j = y]  \notag \\
	&\le \max_y \Pr[\exists \ell \in [n], Y_\ell = y ]  \label{eq:entropy-444} \\
	&\le n\cdot\max_y \Pr[Y_1 = 1]  \notag \\
	&= 2^{\log n-{H}_\infty(Y_j)} \notag
	\end{align}
	The first inequality:
	\begin{align}
	2^{-\widetilde{H}_\infty(Y_j \mid Y_I)}
	&= \E_{Y_I} \biggl[\max_{y} \Pr[Y_j = y \Bigmid Y_I]\biggr]  \notag \\
	&= \sum_{y_I} \Pr[Y_I = y_I] \cdot \left(\max_{y} \Pr[Y_j = y \Bigmid Y_I = y_I] \right) \notag \\
	&= \sum_{y_I}  \max_{y} \biggl(\Pr[Y_I = y_I] \cdot\Pr[Y_j = y \Bigmid Y_I = y_I] \biggr) \notag \\
	&= \sum_{y_I} \max_{y} \biggl(\Pr[Y_j = y]\cdot \Pr[Y_I = y_I \Bigmid Y_j = y] \biggr) \notag \\
	&\le \sum_{y_I} \max_{y} \biggl(\Pr[Y_j = y]\cdot \Pr[\forall i\in I, \exists\ell \in [n]\setminus\{j\}, Y_\ell = y_i \Bigmid Y_j = y] \biggr) \label{eq:entropy-444-2} \\
	&= \sum_{y_I} \max_{y} \biggl(\Pr[Y_j = y]\cdot \Pr[\forall i\in I, \exists\ell \in [n]\setminus\{j\}, Y_\ell = y_i] \biggr) \notag \\
	&= \sum_{y_I} \Pr[\forall i\in I, \exists\ell \in [n]\setminus\{j\}, Y_\ell = y_i] \cdot \max_{y} \Pr[Y_j = y] \notag \\
	&= \max_{y} \Pr[Y_j = y] \cdot \sum_{y_I} 1 \notag \\
	&\le {n\choose k-1}\cdot 2^{-{H}_\infty(Y_j)} \notag \\
	&\le 2^{(k-1)\log n - {H}_\infty(Y_j)} \notag
	\end{align}
\end{proof}

The proof of Claim~\ref{eqn:k-anon-proof-cond} applies the following corollary of the Leftover Hash Lemma.  For random variables $Y_1,\dots,Y_k$, and $j \in [n]$, let $Y_{-j} = \{Y_i : i\neq j\}$.
\begin{corollary}[Corollary to Leftover Hash Lemma (\ref{LHL})]
  \label{LHL:avg-min-entropy}
  For every $Y_1,\dots,Y_k$ if $\forall j\in[n]$, $\widetilde{H}_\infty(Y_j \Bigmid Y_{-j}) = \lambda \ge m + 2\log(1/\alpha^2)$, then $(h(Y_1), \dots, h(Y_k))_{h\in_R H}$ is $k\alpha^2$-close to uniform over $\left(\zo^m\right)^k$ in total variation distance.
\end{corollary}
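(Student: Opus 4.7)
The plan is to prove the corollary by a standard hybrid argument: replace the hashes $h(Y_j)$ with independent uniform strings one index at a time, bound each step via Lemma~\ref{LHL}, and combine via the triangle inequality. For $j = 0, 1, \dots, k$, I would introduce the hybrid
\begin{equation*}
D_j \;=\; \bigl(h,\; U^{(1)}_m,\; \dots,\; U^{(j)}_m,\; h(Y_{j+1}),\; \dots,\; h(Y_k)\bigr),
\end{equation*}
where $h \in_R H$ and $U^{(1)}_m, \dots, U^{(k)}_m$ are independent uniform strings over $\zo^m$, jointly independent of $h$ and of the $Y_i$'s. Projecting onto the last $k$ coordinates, $D_0$ marginalizes to $(h(Y_1), \dots, h(Y_k))$ and $D_k$ to independent uniforms over $(\zo^m)^k$. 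Since projection is postprocessing and cannot increase statistical distance, it suffices to show $\SD(D_{j-1}, D_j) \le \alpha^2$ for each $j \in [k]$.

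To bound the single-step distance, I would apply Lemma~\ref{LHL} with source $Y_j$ and auxiliary $Y_2 = (Y_{j+1}, \dots, Y_k)$---using the raw $Y_i$'s rather than their hashes, so that $h$ is genuinely independent of $Y_2$. The required hypothesis $\widetilde{H}_\infty(Y_j \mid Y_{j+1}, \dots, Y_k) \ge \lambda$ follows from the given $\widetilde{H}_\infty(Y_j \mid Y_{-j}) \ge \lambda$ via the monotonicity of average min-entropy under dropping conditioning variables (an application of Jensen's inequality, swapping $\max_y$ with $\E$). Lemma~\ref{LHL} then yields that $(h, h(Y_j), Y_{j+1}, \dots, Y_k)$ is $\alpha^2$-close to $(h, U_m, Y_{j+1}, \dots, Y_k)$. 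Appending independent $(U^{(1)}_m, \dots, U^{(j-1)}_m)$ preserves the distance, and the deterministic postprocessing $(h, Y_{j+1}, \dots, Y_k) \mapsto (h, h(Y_{j+1}), \dots, h(Y_k))$ converts the two distributions into $D_{j-1}$ and $D_j$ after reordering coordinates and renaming $U_m$ as $U^{(j)}_m$.

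The main step to handle with care is the choice of auxiliary variable. It is tempting to pick $Y_2 = (h(Y_{j+1}), \dots, h(Y_k))$ directly so that the ``extra'' information in the hybrid is literally $Y_2$, but that variable depends on $h$ and complicates the clean application of Lemma~\ref{LHL}; using the raw $Y_i$'s and applying the hash only as postprocessing sidesteps the issue. Once each of the $k$ hybrid steps is bounded, the triangle inequality gives $\SD(D_0, D_k) \le k\alpha^2$, and projecting onto the last $k$ coordinates completes the proof.
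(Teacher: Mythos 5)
Your proof is correct. The paper in fact leaves the proof of Corollary~\ref{LHL:avg-min-entropy} empty (the \verb|proof| environment in the source contains nothing), so there is no argument to compare against; your hybrid argument is the natural way to fill that gap. You correctly identify and handle the one subtle point—applying Lemma~\ref{LHL} with the raw variables $(Y_{j+1},\dots,Y_k)$ as the conditioning side so that $h$ remains independent of the auxiliary, then recovering $h(Y_{j+1}),\dots,h(Y_k)$ by deterministic postprocessing that reads $h$ from the first coordinate—and the monotonicity step $\widetilde{H}_\infty(Y_j \mid Y_{j+1},\dots,Y_k) \ge \widetilde{H}_\infty(Y_j \mid Y_{-j}) = \lambda$ is a valid consequence of moving $\max$ inside an expectation. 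The triangle inequality over $k$ hybrids and a final projection (postprocessing) give the stated $k\alpha^2$ bound.
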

\begin{proof}
\end{proof}

\begin{proof}[Proof of Claim~\ref{eqn:k-anon-proof-cond}]
The construction of $q$ uses the Leftover Hash Lemma and is very similar to the construction of the predicates in Lemma~\ref{lemma:lhl-1}.
Identify the set $\zo^m$ with the set $\{0,1,\dots,2^m -1\}$ in the natural way. For $y \in \zo^m$, define the function $r(y) \triangleq \frac{y}{2^m-1}$, the projection of $y$ onto the interval $[0,1]$.

Let $w_\phi$ be the multiple of $2^{-m}$ closest to $1/k_\phi$.
Observe that $|B(k_\phi,w_\phi) - B(k_\phi,1/k_\phi)| \le \left|w_\phi - \frac{1}{k_\phi}  \right| \cdot \max_{\w' \in [0,1]} \left|\dv{B}{w}(w')\right| \le 2^{-m}n$.

Let $H = \{h:\univ\to\zo^m\}$ be a 2-universal family of hash functions.
For each $h\in H$ define the predicate $q_h$:
$$q_h(\row) = \begin{cases}
1 & r(h(\row)) < w_\phi \\
0 & r(h(\row)) \ge w_\phi
\end{cases}.$$
Because $w_\phi$ is a multiple of $2^{-m}$,
\begin{align*}
	\Pr_{y \in_R \zo^{m}}[r(y) < w_\phi] = w_\phi
\end{align*}

By Lemma~\ref{lemma:k-of-n-min-entropy}, $\db_{\phi}$ (viewed as a $k_\phi$-tuple of random variables) satisfies the average min-entropy hypothesis of Corollary~\ref{LHL:avg-min-entropy}. Applying that Corollary:
\begin{align*}
\Pr_{\db_{\phi},q_h}\biggl[\iso{q}{\db_\phi} \Bigmid |\db_\phi| = k_\phi  \land \weight_D(\phi)\le\w_\low\biggr]
&\ge \Pr_{y_1,\dots,y_{k_\phi} \in_R \zo^m}[\exists \mbox{ unique } j\in[k_\phi] : r(y_j) < w_\phi] - k_\phi \alpha^2\\
&= B(k_\phi,w_\phi) - k_\phi \alpha^2 \\
& \ge B(k_\phi,1/k_\phi) - 2^{-m}n - k_\phi \alpha^2.
\end{align*}
\end{proof}


\end{document}